\newtheorem{Thm}{Theorem}[section]
\newtheorem{Prp}[Thm]{Proposition}
\newtheorem{Lem}[Thm]{Lemma}
\newtheorem{Def}[Thm]{Definition}
\newcommand{\id}{\mathrm{id}}
\newcommand{\dvol}{\mathrm{dvol}}
\newcommand{\tr}{\textrm{tr}}
\newcommand{\Spin}{\mathrm{Spin}}
\newcommand{\Hom}{\mathrm{Hom}}
\newcommand{\setsep}{\;\big|\;}
\newcommand{\dbar}{\overline{\partial}}
\newcommand{\Dirac}{\rlap {\hspace{-0.5mm} \slash} D}
\newcommand{\mA}{\mathcal A}
\newcommand{\mL}{\mathcal L}
\newcommand{\mN}{\mathcal N}
\newcommand{\mO}{\mathcal O}
\newcommand{\mR}{\mathcal R}
\newcommand{\mS}{\mathcal S}
\newcommand{\bC}{\mathbb C}
\newcommand{\bN}{\mathbb N}
\newcommand{\bR}{\mathbb R}
\newcommand{\bZ}{\mathbb Z}
\newcommand{\os}{\overline{s}}
\newcommand{\ox}{\overline{x}}
\newcommand{\oy}{\overline{y}}
\newcommand{\oz}{\overline{z}}
\renewcommand\Re{\mathrm{Re}}
\newcommand{\dd}[2][]{\frac{\partial #1}{\partial #2}}
\newcommand{\abs}[1]{\left| #1 \right|}
\newcommand{\arr}[2]{%
  \begin{array}{@{}#1@{}}#2\end{array}}
\newcommand{\scal}[3][]{\ifthenelse{\equal{#1}{}}{
  \left\langle #2,\,#3 \right\rangle
}{\ifthenelse{\equal{#1}{(}}{
  \left( #2,\,#3 \right)
}{\ifthenelse{\equal{#1}{[}}{
  \left[ #2,\,#3 \right]
}{
  #1\left( #2,\,#3 \right)
}}}}
\begin{document}

\title{Holomorphic Supercurves and Supersymmetric Sigma Models}
\author{Josua Groeger}
\email{groegerj@mathematik.hu-berlin.de}
\affiliation{Humboldt-Universit\"at zu Berlin, Institut f\"ur Mathematik,\\
Rudower Chaussee 25, 12489 Berlin, Germany}
\date{\today}

\begin{abstract}
We introduce a natural generalisation of holomorphic curves to morphisms of supermanifolds,
referred to as holomorphic supercurves. More precisely, supercurves are morphisms
from a Riemann surface, endowed with the structure of a supermanifold which is
induced by a holomorphic line bundle, to an ordinary almost complex manifold.
They are called holomorphic if a generalised Cauchy-Riemann condition is satisfied.
We show, by means of an action identity, that holomorphic supercurves are special extrema of a
supersymmetric action functional.
\end{abstract}

\maketitle

\section{Introduction}

It is well-known that holomorphic curves are special extrema of the harmonic
action functional (Refs.~\onlinecite{MS04,Xin96,Jos02}).
Let us briefly recall the relevant background.
Let $\Sigma$ be a closed Riemann surface with a fixed complex structure $j$.
Moreover, for simplicity, we assume that $\Sigma$ is connected and fix a Riemann metric $h$
in the conformal class corresponding to $j$.
Let $X$ denote a smooth manifold of dimension $2n$. With respect to an almost complex
structure $J$ on $X$, a smooth map $\varphi:\Sigma\rightarrow X$
is called a ($J$-)holomorphic curve if
$\dbar_J\varphi:=\frac{1}{2}(d\varphi+J\circ d\varphi\circ j)=0$ vanishes.
If, on the other hand, $g$ is a Riemann metric on $X$,
we let $\mA(\varphi)$ denote the harmonic action functional,
\begin{align}
\label{eqnHarmonicAction}
\mA(\varphi):=\frac{1}{2}\int_{\Sigma}\abs{d\varphi}_{h,g}^2\dvol_{\Sigma}
\end{align}
$\mA$ is conformally invariant. Critical points, which are called harmonic,
are precisely those maps $\varphi$ whose tension field
$\tau(\varphi):=\tr\left((\nabla_{\cdot}d\varphi)(\cdot)\right)\in\Gamma(\varphi^*TX)$
vanishes. Now, let $\omega$ be a symplectic form on $X$ and $J$ be an almost complex
structure which is $\omega$-compatible and thus determines a Riemann metric $g_J$ by prescribing
$\scal[g_J]{v}{w}:=\scal[\omega]{v}{Jw}$.
Then every smooth map $\varphi:\Sigma\rightarrow X$ satisfies the action identity
\begin{align}
\label{eqnEnergyIdentity}
\frac{1}{2}\int_{\Sigma}\abs{d\varphi}_{h,g_J}^2\dvol_{\Sigma}
=\int_{\Sigma}\varphi^*\omega+\int_{\Sigma}\abs{\dbar_J\varphi}_{h,g_J}^2\dvol_{\Sigma}
\end{align}
It follows that $J$-holomorphic curves minimise the energy within their homology
class $[\varphi]:=\varphi_*([\Sigma])\in H_2(X,\bZ)$
and, as a consequence, every $J$-holomorphic curve is harmonic.

In this article, we introduce holomorphic supercurves as a natural generalisation of
holomorphic curves and show that they are special extrema of an action functional
$\mA_1$ which extends $\mA$. It is organised as follows.
In Sec. \ref{secHolomorphicSupercurvesAdhoc}, we give an ad-hoc definition of
holomorphic supercurves to be motivated later and construct non-trivial examples
in important cases.
In Sec. \ref{secSusySigmaModels}, we introduce the action functional $\mA_1$
as an extension of the harmonic action. We then compare it with another such extension,
known as the Dirac-harmonic action, calculating the Euler-Lagrange equations.
In Sec. \ref{secHolomorphicSupercurves}, we define holomorphic supercurves in the
natural context of supermanifolds. We show that these  (genuine) holomorphic supercurves may be
identified with holomorphic supermanifolds in the sense of the previous ad-hoc definition.
Finally, we state and prove in Sec. \ref{secSuperActionIdentity} a
generalisation of the classical action identity (\ref{eqnEnergyIdentity}), by means of
supermanifold theory. We show that, as a consequence,
every holomorphic supercurve extremises $\mA_1$, thus proving our main theorem.

\section{Holomorphic Supercurves}
\label{secHolomorphicSupercurvesAdhoc}

Holomorphic supercurves generalise holomorphic curves as described in the
following ad-hoc definition.
For smooth maps $\varphi:\Sigma\rightarrow X$, we define the operator
$D_{\varphi}:\Omega^0(\Sigma,\varphi^*TX)\rightarrow\Omega^{0,1}(\Sigma,\varphi^*TX)$ by
\begin{align}
\label{eqnLinearisedDbar}
D_{\varphi}\xi:=\frac{1}{2}(\nabla\xi+J(\varphi)\nabla\xi\circ j)
-\frac{1}{2}J(\varphi)(\nabla_{\xi}J)(\varphi)\partial_J(\varphi)
\end{align}
where $\nabla:=\nabla^{g_J}$ denotes the Levi-Civita connection of $g_J$ (and its pullback under $\varphi$).
It can also be expressed in the form
\begin{align}
\label{eqnLinearisedDbarCompatible}
D_{\varphi}\xi=(\nabla^J\xi)^{0,1}+\frac{1}{4}N_J(\xi,\partial_J\varphi)
\end{align}
where $N_J$ denotes the Nijenhuis tensor, $\nabla^J$ the $J$-complexification, and $(\nabla^J)^{0,1}$
the $(0,1)$-part of the $1$-form $\nabla^J$.
If $\varphi$ is holomorphic, (\ref{eqnLinearisedDbar}) and (\ref{eqnLinearisedDbarCompatible})
coincide with the vertical differential (the linearisation) of $\dbar_J$ at $\varphi$.
Let $D_{\varphi}^{\bC}$ and $N_J^{\bC}$ denote the complex-linear extensions
of $D_{\varphi}$ and $N_J$, respectively.

\begin{Def}
\label{defHolomorphicSupercurveAdhoc}
Let $L\rightarrow\Sigma$ be a holomorphic line bundle and $J$ be an almost complex structure on $X$.
Then a \emph{holomorphic supercurve} is a tuple $(\varphi,\psi_1,\psi_2,\xi)$,
consisting of a smooth map $\varphi:\Sigma\rightarrow X$ and sections
$\psi_1,\psi_2\in\Gamma(\Sigma,L\otimes_J\varphi^*TX)$ and $\xi\in\Gamma(\Sigma,\varphi^*T^{\bC}X)$
such that
\begin{align*}
N_J^{\bC}(\psi_{1\theta},\psi_{2\theta})=0\,,\quad\dbar_J\varphi=0\,,\quad
D_{\varphi}^{\bC}\xi=0\,,\quad D_{\varphi}^{\bC}\psi_{1\theta}=0\,,\quad D_{\varphi}^{\bC}\psi_{2\theta}=0
\end{align*}
where, for $U\subseteq\Sigma$ sufficiently small, we fix a nonvanishing holomorphic section
$\theta\in\Gamma(U,L)$ and let $\psi_{j\theta}\in\Gamma(U,\varphi^*T^{1,0}X)$ be
such that $\psi_j=\theta\cdot\psi_{j\theta}$ holds for $j=1,2$.
\end{Def}

Upon identifying $\psi_{j\theta}$ with a (local) section of $\varphi^*TX$,
the last two conditions may be reformulated into
$D_{\varphi}\psi_{j\theta}=D_{\varphi}(J\psi_{j\theta})=0$.
Since $D_{\varphi }$ is a real linear Cauchy-Riemann operator, it is clear that the
conditions stated do not depend on the choice of $\theta\in\Gamma(U,L)$,
provided that $\theta$ is chosen holomorphic.
Well-definedness also follows from the characterisation of holomorphic supercurves thus defined
with natural (global) objects in terms of supergeometry (Prp. \ref{prpHolomorphicSupercurveLocal}).
It is possible to perturb the defining equations for holomorphic supercurves
by making them depend on a connection $A$ such that the resulting moduli spaces are,
generically, finite dimensional manifolds and such that, in important cases, every sequence
has a convergent subsequence provided that a suitable extension of the
classical energy is uniformly bounded, a version of Gromov compactness.
These objects are called $(A,J)$-holomorphic supercurves and will be thoroughly examined in upcoming papers.

We shall next provide a class of examples of non-trivial holomorphic supercurves for the case $L$
being a half spinor bundle, using a construction similar to an example in Ref.~\onlinecite{CJLW06}.
To fix notation, let us first recall some basics of $2$-dimensional spin geometry
(Refs.~\onlinecite{Fri00,Bau81,Hij01}).
The standard representation $\Gamma$ on $\bC^2$ of the complexified Clifford algebra
$Cl_2^{\bC}:=Cl_2\otimes\bC$ is determined by the gamma matrices which, in our convention,
are given by
\begin{align*}
\Gamma_{e_1}=\left(\arr{cc}{0&-1\\1&0}\right)\,,\;
\Gamma_{e_2}=\left(\arr{cc}{0&i\\i&0}\right)\,,\quad\mathrm{whence}\quad
\Gamma_{e_1}\Gamma_{e_2}=\Gamma_{e_1e_2}=\left(\arr{cc}{-i&0\\0&i}\right)
\end{align*}
where $e_1,e_2$ denotes the standard basis of $\bR^2$.
The induced action $\mu:=\Gamma_{|\Spin(2)}$ on $\bC^2$ of the spin group
$\Spin(2)=\{a\cdot 1+b\cdot e_1 e_2\setsep a^2+b^2=1\}\cong S^1\subseteq\bC$ is then
\begin{align*}
\mu(s)\left(\arr{c}{x\\y}\right)=\mu(a\cdot 1+b\cdot e_1e_2)\left(\arr{c}{x\\y}\right)
=\left(\arr{c}{(a-ib)\cdot x\\(a+ib)\cdot y}\right)
=\left(\arr{c}{\os\cdot x\\s\cdot y}\right)
=:\left(\arr{c}{\mu_+(s)(x)\\\mu_-(s)(y)}\right)
\end{align*}
with the two irreducible $\Spin(2)$ representations $\mu_+$ and $\mu_-$ such that
$\mu=\mu_+\oplus\mu_-$.
Fixing a spin structure $\Spin(\Sigma)$ on $(\Sigma,h)$,
we define the (half-)spinor bundles
\begin{align*}
S^{\pm}:=\Spin(\Sigma)\times_{(\Spin(2),\mu_{\pm})}\bC\;,\qquad
S:=\Spin(\Sigma)\times_{(\Spin(2),\mu)}\bC^2=S^+\oplus S^-
\end{align*}
carrying natural holomorphic structures induced by the canonical isomorphisms\\
$(S^+)^2:=S^+\otimes_{\bC}S^+\cong \Lambda^{1,0}\Sigma$ and $(S^-)^2\cong T^{1,0}\Sigma$
of complex line bundles (Ref.~\onlinecite{Ati71}).

Let $L$ be a complex line bundle and $d\in\bZ^*$. There is a canonical bundle map
$l_d:L\rightarrow L^d$ induced by the corresponding power function in compatible trivialisations of $L$ and $L^d$.
We use the following terminology. Let $v$ be a tangent vector of $\Sigma$. Then either preimage
under the bundle map $l_2:S^-\rightarrow (S^-)^2\cong T^{1,0}\Sigma$
is called a square root of $v$, and analogous with $S^-$ and $T^{1,0}\Sigma$ replaced
by $S^+$ and $\Lambda^{1,0}\Sigma$, respectively.

\begin{Def}
\label{defSquareRootCoordinates}
Let $z=s+it$ be holomorphic coordinates on $\Sigma$ and denote by $\partial_z:=\dd{z}$ and
$e_z:=\lambda^{-\frac{1}{2}}\partial_z$ the induced local vector fields in $T^{1,0}\Sigma$
where $\lambda=\scal[h]{\partial_s}{\partial_s}$ is the conformal factor of the metric.
We denote by $\theta^-$ either square root of $\partial_z$
and by $\theta^+:=l_{-1}(\theta^-)$ its image under $l_{-1}:S^-\rightarrow (S^-)^{-1}\cong S^+$.
Moreover, we define $e^+:=\lambda^{\frac{1}{4}}\theta^+$ and $e^-:=\lambda^{-\frac{1}{4}}\theta^-$.
\end{Def}

By definition, $\theta^+$ is a square root of $dz$ since $dz=l_{-1}(\partial_z)$ holds
and, moreover, $\theta^{\pm}$ are nonvanishing (local) holomorphic
sections of $S^{\pm}$. Similarly, $e^-$ is a square root of $e_z$ and $e^+=l_{-1}(e^-)$.
Let $s$ be a local section of $\Spin(\Sigma)$ such that we may identify
$e_z=[s,1]\in T\Sigma|_U\cong\Spin(\Sigma)|_U\times_{(\mu_-)^2}\bC$.
Then, upon replacing $s$ with $-s$ if necessary, we may further identify
$e^+=[s,(1,0)],\,e^-=[s,(0,1)]\in\Spin(\Sigma)\times_{\mu}\bC^2\cong S$.

We need the following construction.
Let $\varphi\in C^{\infty}(\Sigma,X)$ be a smooth map and $\zeta\in\Gamma(\Sigma,S^-)$ be a section
of $S^-$, (locally) identified with a complex valued function $\zeta_-$ via
$\zeta=\zeta_-\cdot\theta^-$. Upon complex linear extension $d\varphi[iv]:=id\varphi[v]\in T^{\bC}X$
for $iv\in T^{\bC}\Sigma$, $\varphi$ and $\zeta$ together induce a section
\begin{align}
\label{eqnInducedSection}
\psi^{\varphi,\zeta}:=\zeta_-\theta^+\cdot d\varphi[\partial_z]\in\Gamma(S^+\otimes_{\bC}\varphi^*T^{\bC}X)
\end{align}
Note that, due to the transformation behaviour of holomorphic coordinates and the induced sections of $S^{\pm}$,
this definition is independent thereof and $\psi^{\varphi,\zeta}$ is indeed a global section.
If $\varphi$ is $J$-holomorphic, then $i\cdot\psi^{\varphi,\zeta}=J\psi^{\varphi,\zeta}$ follows.
In other words, $\dbar_J\varphi=0$ implies $\psi^{\varphi,\zeta}\in\Gamma(S^+\otimes_J\varphi^*TX)$.

\begin{Lem}
Let $(X,\omega)$ be a symplectic manifold and $J$ be an $\omega$-compatible almost complex structure.
Let $\varphi\in C^{\infty}(\Sigma,X)$ be a $J$-holomorphic curve, 
$\zeta_1,\zeta_2\in\Gamma(S^-)$ be holomorphic spinors and $\xi\in\Gamma(\Sigma,\varphi^*T^{\bC}X)$
be such that $D_{\varphi}^{\bC}\xi=0$.
Then $(\varphi,\psi^{\varphi,\zeta_1},\psi^{\varphi,\zeta_2},\xi)$ is a holomorphic supercurve.
\end{Lem}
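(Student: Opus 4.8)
The plan is to verify that the tuple $(\varphi,\psi^{\varphi,\zeta_1},\psi^{\varphi,\zeta_2},\xi)$ satisfies the five defining conditions of Definition~\ref{defHolomorphicSupercurveAdhoc} for the line bundle $L=S^+$. First I would work locally: fix holomorphic coordinates $z=s+it$ and take as trivialising section the nonvanishing holomorphic $\theta:=\theta^+\in\Gamma(U,S^+)$ of Definition~\ref{defSquareRootCoordinates}, which is admissible since the conditions are independent of the choice of holomorphic $\theta$. Writing $\zeta_j=\zeta_{j-}\cdot\theta^-$, the function $\zeta_{j-}$ is holomorphic because both $\zeta_j$ and $\theta^-$ are, and (\ref{eqnInducedSection}) gives $\psi^{\varphi,\zeta_j}=\theta^+\cdot(\zeta_{j-}\,d\varphi[\partial_z])$, so that $\psi_{j\theta}=\zeta_{j-}\,d\varphi[\partial_z]$. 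As $\varphi$ is $J$-holomorphic, the discussion preceding the lemma already shows $\psi^{\varphi,\zeta_j}\in\Gamma(S^+\otimes_J\varphi^*TX)$ and $d\varphi[\partial_z]\in\Gamma(U,\varphi^*T^{1,0}X)$, so $\psi_{j\theta}$ has the type required by the definition.

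Two of the five conditions, namely $\dbar_J\varphi=0$ and $D_{\varphi}^{\bC}\xi=0$, are exactly the hypotheses. For the Nijenhuis condition I would use that $\psi_{1\theta}$ and $\psi_{2\theta}$ are both scalar multiples of the single section $d\varphi[\partial_z]$; by $\bC$-bilinearity and the antisymmetry of the Nijenhuis tensor this forces
\begin{align*}
N_J^{\bC}(\psi_{1\theta},\psi_{2\theta})=\zeta_{1-}\zeta_{2-}\,N_J^{\bC}(d\varphi[\partial_z],d\varphi[\partial_z])=0\,.
\end{align*}
The remaining two conditions $D_{\varphi}^{\bC}\psi_{j\theta}=0$ carry the content of the lemma. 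Since $D_{\varphi}^{\bC}$ is the complex-linear extension of a Cauchy--Riemann operator, and on $\varphi^*T^{1,0}X$ the structure $J$ acts as multiplication by $i$, the Leibniz rule applies to the holomorphic factor $\zeta_{j-}$ and gives $D_{\varphi}^{\bC}\psi_{j\theta}=(\dbar\zeta_{j-})\,d\varphi[\partial_z]+\zeta_{j-}\,D_{\varphi}^{\bC}(d\varphi[\partial_z])$; the first term vanishes because $\zeta_{j-}$ is holomorphic. Everything therefore reduces to the single identity $D_{\varphi}^{\bC}(d\varphi[\partial_z])=0$.

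This last identity is the step I expect to be the main obstacle, and I would obtain it from reparametrisation invariance. The real vector fields $\partial_s,\partial_t$ generate local biholomorphisms of $\Sigma$, namely the translation flows $\phi_\tau:z\mapsto z+\tau$ and $z\mapsto z+i\tau$, so each $\varphi\circ\phi_\tau$ is again $J$-holomorphic; differentiating $\dbar_J(\varphi\circ\phi_\tau)=0$ at $\tau=0$ and using that the vertical differential of $\dbar_J$ at its zero $\varphi$ equals the naive derivative, one finds $D_{\varphi}(d\varphi[\partial_s])=D_{\varphi}(d\varphi[\partial_t])=0$, whence $D_{\varphi}^{\bC}(d\varphi[\partial_z])=\frac{1}{2}\left(D_{\varphi}(d\varphi[\partial_s])-i\,D_{\varphi}(d\varphi[\partial_t])\right)=0$. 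As an internal check one may instead substitute $d\varphi[\partial_z]$ into (\ref{eqnLinearisedDbarCompatible}): the Nijenhuis contribution vanishes by antisymmetry, reducing the claim to $(\nabla^J d\varphi[\partial_z])^{0,1}=0$, which is precisely the holomorphicity of $\partial_J\varphi$ supplied by the reparametrisation argument. Assembling the five verified conditions then proves the lemma.
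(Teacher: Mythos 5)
Your proof is correct, and while the reductions at the beginning coincide with the paper's (both pull out the holomorphic scalars $\zeta_{j-}$, both kill the Nijenhuis condition by skew-symmetry), your treatment of the one substantive identity is genuinely different. The paper proves $(\nabla^J_{\partial_s}d\varphi[\partial_s])^{0,1}=0$ by brute force: it expands $\nabla_{\partial_a}d\varphi[\partial_b]$ in local coordinates, uses symmetry of the Christoffel symbols to discard one bracket, identifies the other bracket with $\lambda\cdot\tau(\varphi)$ via the local tension-field formula (\ref{eqnLocalTension}), and then invokes the classical fact that holomorphic curves are harmonic (which rests on the action identity (\ref{eqnEnergyIdentity}) and hence on $\omega$-compatibility). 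You instead prove the equivalent statement $D_{\varphi}(d\varphi[\partial_s])=D_{\varphi}(d\varphi[\partial_t])=0$ softly, by observing that local coordinate translations are biholomorphisms, so $\varphi\circ\phi_\tau$ is a family of $J$-holomorphic curves whose $\tau$-derivative must lie in the kernel of the linearisation of $\dbar_J$ — and the paper states (after (\ref{eqnLinearisedDbarCompatible})) that this linearisation at a holomorphic $\varphi$ is exactly $D_{\varphi}$. Both arguments are sound; the trade-off is that the paper's computation is self-contained apart from "holomorphic $\Rightarrow$ harmonic" and produces the formula (\ref{eqnLocalTension}) reused later (in Lem. \ref{lemNonsuperAction}), whereas your argument is more conceptual, needs no coordinate computation, and in fact never uses the symplectic form or harmonicity at all — only the connection-independent characterisation of $D_{\varphi}$ as the vertical differential of $\dbar_J$ at its zeros, so it would apply verbatim to any almost complex target. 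Two small points you handled correctly but should keep explicit in a write-up: the Leibniz rule for the merely real-linear operator $D_{\varphi}^{\bC}$ with a complex-valued factor is only valid because $d\varphi[\partial_z]$ takes values in $\varphi^*T^{1,0}X$ (where $J^{\bC}$ and $i$ agree, so the antilinear Nijenhuis part still pulls the scalar out holomorphically), and the translation flows exist only on relatively compact subsets of the chart, which suffices since $D_{\varphi}$ is a local operator.
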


Since the space of holomorphic sections of $S^-$ has dimension $1-g+\deg S^-=2-g$
by the classical Riemann-Roch theorem (Ref.~\onlinecite{Gas06}),
it follows in particular that non-trivial holomorphic supercurves do exist for $\Sigma$
having genus $g=0$ or $g=1$ (sphere or torus, respectively),
provided that a non-trivial holomorphic curve $\varphi$ exists.

\begin{proof}
We set $\psi_j:=\psi^{\varphi,\zeta_j}$ for $j=1,2$.
Then, by construction, $\psi_{j\theta^+}=\zeta_{j-}\cdot d\varphi[\partial_z]$.
Among the conditions of Def. \ref{defHolomorphicSupercurveAdhoc},
it remains to show that $\scal[N_J^{\bC}]{\psi_{1\theta^+}}{\psi_{2\theta^+}}=0$ and
$D_{\varphi}^{\bC}\psi_{j\theta^+}=0$.
The second condition can be reformulated into $D_{\varphi}\psi_{j\theta^+}=D_{\varphi}(J\psi_{j\theta^+})=0$
or, equivalently, since $D_{\varphi}$ is the sum of a complex linear and a complex antilinear operator
by (\ref{eqnLinearisedDbarCompatible}), into
$(\nabla^J\psi_{j\theta^+})^{0,1}=\scal[N_J]{\psi_{j\theta^+}}{\partial_J\varphi}=0$. Since $\zeta_{j-}$
and $\varphi$ are holomorphic and $N_J$ is complex antilinear with respect to $J$,
it thus suffices to prove that
\begin{align}
\label{eqnNablaVarphi}
\scal[N_J^{\bC}]{d\varphi[\partial_s]}{d\varphi[\partial_s]}=0\;,\qquad
(\nabla^J_{\partial_s}d\varphi[\partial_s])^{0,1}=0
\end{align}
in order to show that $(\varphi,\psi_1,\psi_2,\xi)$ is a holomorphic supercurve.
The first condition is void as $N_J$ is skew-symmetric,
and it remains to show the second one. We calculate
\begin{align*}
4(\nabla^J_{\partial_s}d\varphi[\partial_s])^{0,1}
&=2\left(\nabla^J_{\partial_s}d\varphi[\partial_s]+J\nabla^J_{j[\partial_s]}d\varphi[\partial_s]\right)\\
&=\nabla_{\partial_s}d\varphi[\partial_s]-J\nabla_{\partial_s}(Jd\varphi[\partial_s])
+J\nabla_{\partial_t}d\varphi[\partial_s]+\nabla_{\partial_t}(Jd\varphi[\partial_s])\\
&=\left(\nabla_{\partial_s}d\varphi[\partial_s]+\nabla_{\partial_t} d\varphi[\partial_t] \right)
+J\left(\nabla_{\partial_t}d\varphi[\partial_s]-\nabla_{\partial_s}d\varphi[\partial_t]\right)
\end{align*}
For $a,b\in\{s,t\}$, we obtain the local formula
\begin{align*}
\nabla_{\partial_a} d\varphi[\partial_b]&=\nabla_{\partial_a}\left(\dd[\varphi^m]{b}\dd{x^m}\right)
=\partial_a\left(\dd[\varphi^k]{b}\right)\dd{x^k}+\dd[\varphi^m]{b}\nabla_{d\varphi[\partial_a]}\dd{x^m}\\
&=\frac{\partial^2\varphi^k}{\partial_a\partial_b}\dd{x^k}+\dd[\varphi^m]{b}\dd[\varphi^l]{a}
\nabla_{\dd{x^l}}\dd{x^m}\\
&=\left(\frac{\partial^2\varphi^k}{\partial_a\partial_b}
+\dd[\varphi^m]{b}\dd[\varphi^l]{a}\Gamma_{lm}^k\right)\dd{x^k}
\end{align*}
which is symmetric in $a,b$. Therefore, the second bracket in the
previous expression for $4(\nabla^Jd\varphi[\partial_s])^{0,1}$ vanishes,
while the first bracket becomes
\begin{align*}
&\nabla_{\partial_s} d\varphi[\partial_s]+\nabla_{\partial_t} d\varphi[\partial_t]\\
&\qquad\qquad=\left(\frac{\partial^2\varphi^k}{\partial_s\partial_s}
+\dd[\varphi^m]{s}\dd[\varphi^l]{s}\Gamma_{lm}^k\right)\dd{x^m}
+\left(\frac{\partial^2\varphi^k}{\partial_t\partial_t}
+\dd[\varphi^m]{t}\dd[\varphi^l]{t}\Gamma_{lm}^k\right)\dd{x^m}\\
&\qquad\qquad=\delta^{ij}\left(\frac{\partial^2\varphi^k}{\partial_i\partial_j}
+\dd[\varphi^l]{i}\dd[\varphi^m]{j}\Gamma_{lm}^k\right)\dd{x^m}
=\lambda\cdot\tau(\varphi)
\end{align*}
where $\lambda$ is the conformal factor of the metric $h$ and
we have used the following local formula for the tension field, which is obtained
by a standard calculation involving (1.2.9) in Ref.~\onlinecite{Xin96}.
\begin{align}
\label{eqnLocalTension}
\tau(\varphi)=\frac{4}{\lambda}\left(\frac{\partial^2\varphi^k}{\partial z\partial\oz}
+\Gamma^k_{lm}\dd[\varphi^l]{z}\dd[\varphi^m]{\oz}\right)\dd{x^k}
\end{align}
But $\tau(\varphi)$ vanishes since $\varphi$ is holomorphic and as such harmonic,
thus proving (\ref{eqnNablaVarphi}).
\end{proof}

\section{Variants of Supersymmetric Sigma Models}
\label{secSusySigmaModels}

In this section, we introduce and examine the action functional $\mA_1$.
Let us start with the mathematical framework.
With the notations from the previous sections, we assume that $\Sigma$ carries a spin structure
and let $g$ be a Riemann metric on $X$.
The $\bC$-bilinear form $B$ and the Hermitian metric $H$ on $\bC^2$, defined by
\begin{align*}
\scal[B]{\left(\arr{c}{x\\y}\right)}{\left(\arr{c}{z\\w}\right)}:=xw+yz\;,\qquad
\scal[H]{\left(\arr{c}{x\\y}\right)}{\left(\arr{c}{z\\w}\right)}:=\Re(\ox z+\oy w)
\end{align*}
are both invariant under the action $\mu$ of $\Spin(2)$ and thus induce on $S$ a well-defined bilinear form
and a Hermitian metric, respectively, in the following denoted by the same symbols.
Existence and properties of $B$ also follow from more general considerations
(Ref.~\onlinecite{Var04}, P.242, and Refs.~\onlinecite{Har90,Hij01}).
Now let $\varphi:\Sigma\rightarrow X$ be a smooth map. We consider the tensor product
$S\otimes\varphi^*TX:=S\otimes_{\bR}\varphi^*TX\cong S\otimes_{\bC}\varphi^*T^{\bC}X$.
Denoting the $\bC$-bilinear extension of the metric $g$
by the same symbol, $B$ and $g$ together induce a $\bC$-bilinear form on
$S\otimes_{\bC}\varphi^*T^{\bC}X\cong\Hom_{\bC}(S^*,\varphi^*T^{\bC}X)$ that we shall also denote by $B$.
Similarly, $g$ and the Hermitian metric $H$ on $S$ together induce a Hermitian bundle metric
on $S\otimes_{\bR}\varphi^*TX$, which we shall denote $\scal[(]{\cdot}{\cdot}$.
For later calculations, we need local expressions.
Let $z=s+it$ be holomorphic coordinates on $\Sigma$ and $\theta^{\pm}$ and $e^{\pm}$
be square roots of the induced (co)tangent fields as in Def. \ref{defSquareRootCoordinates}.
Any section $\psi\in\Gamma(S\otimes_{\bC}\varphi^*T^{\bC}X)$ may then be (locally) written
\begin{align}
\label{eqnLocalPsi}
\psi=\psi_+\theta^++\psi_-\theta^-=\psi_{e^+}e^++\psi_{e^-}e^-=:\left(\arr{c}{\psi_{e^+}\\\psi_{e^-}}\right)
\end{align}
and we obtain
\begin{align}
\label{eqnB}
\scal[B]{\psi}{\psi'}&=\scal[B]{\psi_{e^a}e^a}{\psi'_{e^b}e^b}
=B^{ab}\scal[g]{\psi_{e^a}}{\psi'_{e^b}}=\scal[g]{\psi_{e^+}}{\psi'_{e^-}}+\scal[g]{\psi_{e^-}}{\psi'_{e^+}}\\
\label{eqnH}
\scal[(]{\psi}{\psi'}&=\scal[(]{\psi_{e^A}e^A}{\psi'_{e^B}e^B}
=\scal[g]{\psi_{e^A}}{\psi'_{e^B}}\cdot\scal[H]{e^A}{e^B}
\end{align}
where we use Einstein's summation convention and, in the first line,
$a,b\in\{+,-\}$ and $B^{ab}$ denotes the matrix elements of the inverse of $B$ in the coordinates
determined by $e^{\pm}$, whereas in the second line, we write
$\psi=\psi_{e^A}e^A$ where $e^A\in\{e^+,ie^+,e^-,ie^-\}$ and $\psi_{e^A}\in\Gamma(U,\varphi^*TX)$.
We further need the (twisted) Dirac operator along $\varphi$ as defined by the concatenation
\begin{align*}
\Dirac:\Gamma(S\otimes_{\bR}\varphi^*TX)\stackrel{\nabla}{\rightarrow}\Gamma(T^*\Sigma\otimes_{\bR} S\otimes_{\bR}\varphi^*TX)
\stackrel{\Gamma}{\rightarrow}\Gamma(S\otimes_{\bR}\varphi^*TX)
\end{align*}
where $\nabla:=\nabla^S\otimes_{\bR}\id+\id\otimes_{\bR}\nabla^{\varphi^*TX}$ is the
induced connection on $S\otimes_{\bR}\varphi^*TX$.
Using $\nabla^S e^{\pm}=0$, we further obtain the following local expression.
\begin{align}
\label{eqnTwistedDiracOperatorLocal}
\Dirac\psi=\Dirac(\psi_{e^+}e^++\psi_{e^-}e^-)=\Dirac\left(\arr{c}{\psi_{e^+}\\\psi_{e^-}}\right)
=2\lambda^{-\frac{1}{2}}\left(\arr{c}{-\nabla_{\dd{z}}\psi_{e^-}\\\nabla_{\dd{\oz}}\psi_{e^+}}\right)
\end{align}
where $\lambda$ is the conformal factor of the metric.
Finally, we use the following conventions. In general, let $\bigwedge\bC^N$ denote the complex Graßmann algebra
(also called exterior algebra) with $N$ generators $\eta^1,\ldots,\eta^N$, which we fix in the following
(this can be any basis of $\bC^N$), and let $\bigwedge\bC^N_{\mathrm{odd}}$ and
$\bigwedge\bC^N_{\mathrm{even}}$ denote the respective subspaces of odd and even forms.
Any $\bC$-bilinear form $\scal[(]{\cdot}{\cdot}$ on a complex
vector space $V$ can be extended to a $\bigwedge\bC^N$-bilinear form on $V\otimes_{\bC}\bigwedge\bC^N$
by prescribing
\begin{align*}
\scal[(]{\eta^{i_1}\cdot\ldots\cdot\eta^{i_l}\cdot v}{\eta^{j_1}\cdot\ldots\cdot\eta^{j_k}\cdot w}
:=\eta^{i_1}\cdot\ldots\cdot\eta^{i_l}\cdot\eta^{j_1}\cdot\ldots\cdot\eta^{j_k}\cdot\scal[(]{v}{w}
\end{align*}
for $v,w\in V$ and abbreviating $\eta^1\cdot\eta^2:=\eta^1\wedge\eta^2$ for the wedge product.
Similarly, we define the integral over a function $f$ with values in $\bigwedge\bC^N$ by
\begin{align*}
\int\left(\eta^{i_1}\cdot\ldots\cdot\eta^{i_l}\cdot f_i
+\eta^{j_1}\cdot\ldots\cdot\eta^{j_k}\cdot f_j\right)
:=\left(\eta^{i_1}\cdot\ldots\cdot\eta^{i_l}\cdot\int f_i\right)
+\left(\eta^{j_1}\cdot\ldots\cdot\eta^{j_k}\cdot\int f_j\right)
\end{align*}
In the following, we restrict ourselves to the case $N=2$, and we fix two
generators $\eta^1$ and $\eta^2$ of $\bigwedge\bC^2$.

\begin{Def}
\label{defSupersymmetricAction}
For $\varphi\in C^{\infty}(\Sigma,X)$, $\xi\in\Gamma(\Sigma,\varphi^*T^{\bC}X)$ and
\begin{align*}
\psi=\eta^1\psi_1+\eta^2\psi_2
\in\Gamma\left(\Sigma,S^+\otimes_{\bC}\varphi^*T^{\bC}X\otimes\bigwedge\bC^2_{\mathrm{odd}}\right)
\end{align*}
we denote by $\mA_1$ the functional
\begin{align*}
\mA_1(\varphi,\psi_1,\psi_2,\xi):=\frac{1}{2}\int_{\Sigma}\dvol_{\Sigma}\left(\abs{d\varphi}^2
-2\eta^1\eta^2\,\scal[g]{\xi}{\tau(\varphi)}-2i\,\scal[B]{\psi}{\Dirac\psi}\right)
\in\bigwedge\bC^2_{\mathrm{even}}
\end{align*}
where $\tau(\varphi)$ is the tension field of $\varphi$, $B$ is the complex
bilinear form from (\ref{eqnB}) and $\Dirac$ denotes the twisted Dirac operator,
restricted to sections of $S^+\otimes\varphi^*TX\subseteq S\otimes\varphi^*TX$.
\end{Def}

Being Graßmann valued, the functional $\mA_1$ may be considered as consisting of two integrals:
one which is proportional to $1$ and one which is proportional to $\eta^1\eta^2$.
$\mA_1$ is an extension of the harmonic action $\mA$ in (\ref{eqnHarmonicAction}), and
its critical points are sometimes called superharmonic maps (Ref.~\onlinecite{Khe05}).
$\mA_1$ is known as $\mN=(1,0)$ supersymmetric sigma model in the physical literature
(Refs.~\onlinecite{DF99b,Wit92}). $\mA_1$ remains well-defined upon replacing $S^+$ by the
full spinor bundle $S=S^+\oplus S^-$. Further adding an additional curvature term
makes it the so called $\mN=(1,1)$-model.
For comparison with the literature note that, by the proof of Lem. \ref{lemNonsuperAction}
below, the first two terms may be rewritten
\begin{align*}
\frac{1}{2}\int_{\Sigma}\dvol_{\Sigma}\left(\abs{d\varphi}^2
-2\eta^1\eta^2\,\scal[g]{\xi}{\tau(\varphi)}\right)
=i\int_{\Sigma}dz\wedge d\oz\;\phi_0(g_{ij})\dd[\phi_0^i]{z}\dd[\phi_0^j]{\oz}
\end{align*}
for some $\bigwedge\bC^2_{\mathrm{even}}$-valued field $\phi_0$ in which the even field $\xi$
is implicitly contained.

\begin{Thm}
\label{thmSupersymmetricAction}
Let $(X,\omega)$ be a symplectic manifold, $J$ be an $\omega$-compatible almost complex structure
and $g$ be the induced Riemann metric.
Then the action functional $\mA_1$ satisfies the following implication.
If the tuple $(\varphi,\psi_1,\psi_2,\xi)$ is a holomorphic supercurve in the sense
of Def. \ref{defHolomorphicSupercurveAdhoc} with $L=S^+$, then it is a critical point of $\mA_1$.
\end{Thm}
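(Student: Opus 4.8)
The plan is to use the $\bigwedge\bC^2$-grading of $\mA_1$ to split the problem into independent pieces and verify the resulting Euler--Lagrange equations one by one against Def.~\ref{defHolomorphicSupercurveAdhoc}. Writing $\psi=\eta^1\psi_1+\eta^2\psi_2$ and expanding $B(\psi,\Dirac\psi)$, the relations $\eta^1\eta^1=\eta^2\eta^2=0$ collapse it to $\eta^1\eta^2\bigl(B(\psi_1,\Dirac\psi_2)-B(\psi_2,\Dirac\psi_1)\bigr)$, so that $\mA_1$ decomposes into a part proportional to $1$, namely the classical harmonic action $\tfrac12\int_\Sigma|d\varphi|^2\dvol_\Sigma$ depending only on $\varphi$, and a part proportional to $\eta^1\eta^2$ coupling $\varphi$, $\xi$ and the $B$-pairing of $\psi_1,\psi_2$. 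Being a critical point means both graded components are stationary under independent variations of $\varphi$, $\xi$, $\psi_1$ and $\psi_2$.

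First I would dispose of the easy equations. The $1$-component is the harmonic action, whose $\varphi$-variation is the tension field; by the classical identity~(\ref{eqnEnergyIdentity}) every $J$-holomorphic curve is harmonic, and since $\dbar_J\varphi=0$ belongs to Def.~\ref{defHolomorphicSupercurveAdhoc}, this component is automatically stationary. In the $\eta^1\eta^2$-component, $\xi$ occurs only linearly through $g(\xi,\tau(\varphi))$, so its Euler--Lagrange equation is again $\tau(\varphi)=0$, which holds. Varying $\psi_1$ and $\psi_2$, and using that $\Dirac$ is formally self-adjoint for $B$ so that integration by parts transfers it onto the variation, yields the twisted Dirac equations $\Dirac\psi_1=\Dirac\psi_2=0$. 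The substantive point here is that the Cauchy--Riemann conditions $D_\varphi^{\bC}\psi_{j\theta}=0$ force these: by~(\ref{eqnTwistedDiracOperatorLocal}) the Dirac equation reads $\nabla_{\partial_{\oz}}\psi_{j,e^+}=0$ in the Levi--Civita connection, whereas, via~(\ref{eqnLinearisedDbarCompatible}), $D_\varphi^{\bC}\psi_{j\theta}=0$ separates into $\nabla^J$-holomorphicity $(\nabla^J\psi_{j\theta})^{0,1}=0$ and the Nijenhuis condition $N_J(\psi_{j\theta},\partial_J\varphi)=0$. The gap between the two connections is measured by $\nabla J$, which for an $\omega$-compatible $J$ (an almost Kähler structure, as $\omega$ is closed) is itself expressible through $N_J$; the vanishing Nijenhuis term then reconciles them and delivers $\Dirac\psi_j=0$, exactly as the reduction to~(\ref{eqnNablaVarphi}) did for the construction~(\ref{eqnInducedSection}).

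The genuinely hard variation is that of $\varphi$ in the $\eta^1\eta^2$-component, and I expect it to be the main obstacle. Varying $g(\xi,\tau(\varphi))$ linearises the tension field and, after integrating by parts, produces a second-order Jacobi-type operator applied to $\xi$; varying $B(\psi,\Dirac\psi)$ produces Riemann-curvature terms quadratic in $\psi_1,\psi_2$, since $\varphi$ enters $\Dirac$ only through the pulled-back connection and its Christoffel symbols. The Euler--Lagrange equation for $\varphi$ is thus a coupled identity balancing the Jacobi term in $\xi$ against the curvature term in $\psi$, and the task is to show it is annihilated precisely when $D_\varphi^{\bC}\xi=0$, $D_\varphi^{\bC}\psi_{j\theta}=0$ and $N_J^{\bC}(\psi_{1\theta},\psi_{2\theta})=0$ hold simultaneously. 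I anticipate that reorganising the curvature pairing requires the almost-Kähler curvature identities and that the symmetric part of the pairing cancels only by virtue of the last condition $N_J^{\bC}(\psi_{1\theta},\psi_{2\theta})=0$, which is therefore indispensable rather than decorative.

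A cleaner and more conceptual route, matching the strategy flagged in the abstract, is to prove a super action identity extending~(\ref{eqnEnergyIdentity}) before varying anything: one rewrites the components of $\mA_1$ as a term invariant under the admissible variations (of topological, index-theoretic origin, generalising $\int_\Sigma\varphi^*\omega$) plus manifestly nonnegative defects built from $|\dbar_J\varphi|^2$, $|D_\varphi^{\bC}\xi|^2$, $|D_\varphi^{\bC}\psi_{j\theta}|^2$ and $|N_J^{\bC}(\psi_{1\theta},\psi_{2\theta})|^2$. On a holomorphic supercurve every defect vanishes by Def.~\ref{defHolomorphicSupercurveAdhoc}, so $\mA_1$ attains its variation-invariant value and its first variation vanishes, which is exactly the assertion. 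This shifts the whole burden onto the identity, where once more the $B$-self-adjointness of $\Dirac$ and the almost-Kähler relation between $\nabla J$ and $N_J$ do the real work; the theorem then follows as an immediate corollary.
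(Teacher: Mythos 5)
Your first route founders on a concrete error: the fermionic Euler--Lagrange equations of $\mA_1$ are \emph{not} the Dirac equations $\Dirac\psi_1=\Dirac\psi_2=0$. By Prp.~\ref{prpEulerLagrange} they read $\nabla_{\partial_{\oz}}\psi_{j+}=0$, where $\psi_{j+}$ is the coefficient with respect to the \emph{holomorphic} frame $\theta^+$, whereas by (\ref{eqnTwistedDiracOperatorLocal}) the equation $\Dirac\psi_j=0$ means $\nabla_{\partial_{\oz}}\psi_{je^+}=0$ in the unitary frame $e^+=\lambda^{\frac{1}{4}}\theta^+$. The paper's comparison of $\mA_1$ with the Dirac-harmonic functional $\mA_2$ at the end of Sec.~\ref{secSusySigmaModels} shows that these two conditions differ by the term $\partial_{\oz}(\lambda^{\frac{1}{4}})\psi_{je^+}$ and are inequivalent because the conformal factor is not holomorphic; this is precisely why no analogue of Thm.~\ref{thmSupersymmetricAction} holds for $\mA_2$. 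So route 1, as you set it up, aims to verify equations that holomorphic supercurves do not satisfy in general; moreover the genuinely hard $\varphi$-variation in the $\eta^1\eta^2$-component (the balance between the Jacobi operator on $\xi$ and the curvature term in $\psi$) is only ``anticipated,'' never carried out.

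Your second route matches the paper's strategy in outline --- prove an action identity first, then read off criticality --- but the mechanism you invoke cannot work as stated. $\mA_1$ takes values in $\bigwedge\bC^2_{\mathrm{even}}$, so its $\eta^1\eta^2$-component carries no order structure: ``manifestly nonnegative defects'' and ``attains its variation-invariant value'' (a minimization argument) are meaningless there. The paper's proof uses no positivity at all: after lifting $(\varphi,\psi_1,\psi_2,\xi)$ to a superfield $\Phi$ (Prp.~\ref{prpMorphismStructure}) and identifying $\mA_1=\int_{\Sigma}\mL(\Phi)$ (Lem.~\ref{lemNonsuperAction}), the super action identity (Prp.~\ref{prpSuperActionIdentity}) exhibits the defect as a term \emph{bilinear in the single super-defect} $\dbar_J\Phi$, namely $-2i\int_{\Sigma}dz\wedge d\oz\,\partial_{\theta^+}\partial_{\theta^-}\,\scal[g_{\Phi}]{\dbar_J\Phi(D_+)}{\dbar_J\Phi(D_-)}$, so that its first variation vanishes at any zero of $\dbar_J\Phi$ by the product rule, while the remaining term $\int_{\Sigma}\varphi^*\omega$ is a homology invariant. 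Two ingredients you omit are indispensable for this: the equivalence of the ad-hoc Def.~\ref{defHolomorphicSupercurveAdhoc} with the single equation $\dbar_J\Phi=0$ (Prp.~\ref{prpHolomorphicSupercurveLocal}), without which the hypothesis cannot be brought to bear on the identity at all, and the construction of the identity itself, which occupies Secs.~\ref{secHolomorphicSupercurves} and \ref{secSuperActionIdentity}; your posited component-wise decomposition into squares of the five defining equations is neither established nor what the identity actually yields.
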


This is our main theorem. We defer its proof to the very end of this article and,
for the time being, turn to the Euler-Lagrange equations. Analogous to (\ref{eqnLocalPsi}),
we shall use local expressions such as
$\psi=\psi_+\theta^+=(\eta^1\psi_{1+}+\eta^2\psi_{2+})\theta^+$ for $\psi$ as in Def. \ref{defSupersymmetricAction}.

\begin{Prp}
\label{prpEulerLagrange}
The tuple $(\varphi,\psi_1,\psi_2,\xi)$ is a critical point of $\mA_1$ if and only if
the following (Euler-Lagrange) equations are satisfied.
\begin{align*}
&\tau(\varphi)=0\;,\qquad\nabla_{\dd{\oz}}\psi_{1+}=0\;,\qquad\nabla_{\dd{\oz}}\psi_{2+}=0\\
&R^{\nabla}(\psi_{1e_+},\psi_{2e_+})e_{\oz}(\varphi)=\nabla^2\xi+R^{\nabla}(e_s,\xi)e_s+R^{\nabla}(e_t,\xi)e_t
\end{align*}
\end{Prp}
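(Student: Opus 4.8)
The plan is to compute the Euler--Lagrange equations of $\mA_1$ by taking the first variation with respect to each of the independent fields $\varphi$, $\psi_1$, $\psi_2$ and $\xi$ separately, exploiting the fact that $\mA_1$ splits into a part proportional to $1$ and a part proportional to $\eta^1\eta^2$. First I would vary the spinor fields $\psi_1,\psi_2$, which only enter through the Dirac term $-2i\,\scal[B]{\psi}{\Dirac\psi}$. Since $\Dirac$ is formally self-adjoint with respect to the $\bC$-bilinear pairing $\scal[B]{\cdot}{\cdot}$ (up to a divergence that integrates to zero on the closed surface $\Sigma$), a variation $\psi\mapsto\psi+\delta\psi$ produces, after integration by parts, a term proportional to $\scal[B]{\delta\psi}{\Dirac\psi}$. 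Using the local form (\ref{eqnTwistedDiracOperatorLocal}) of the Dirac operator restricted to $S^+$, together with the local expression (\ref{eqnB}) for $B$, this pairing picks out exactly the component $\nabla_{\dd{\oz}}\psi_{j+}$. Since $\eta^1$ and $\eta^2$ are independent Graßmann generators, the vanishing of the total variation forces $\nabla_{\dd{\oz}}\psi_{1+}=0$ and $\nabla_{\dd{\oz}}\psi_{2+}=0$ separately.

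Next I would vary $\xi$. The field $\xi$ appears only in the middle term $-2\eta^1\eta^2\,\scal[g]{\xi}{\tau(\varphi)}$, which is linear in $\xi$; hence its variation immediately yields $\tau(\varphi)=0$, the harmonicity equation. This is the cleanest of the four equations and confirms the first displayed equation in the proposition.

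The genuinely delicate computation is the variation with respect to $\varphi$, which enters every term of $\mA_1$: through $\abs{d\varphi}^2$, through the pullback connection implicit in $\tau(\varphi)$ and in $\Dirac$, and through the metric evaluated along $\varphi$. I would deform $\varphi$ along a vector field, write the variation of $\frac12\int\abs{d\varphi}^2$ in the standard way as $-\int\scal[g]{\delta\varphi}{\tau(\varphi)}$, and then carefully differentiate the connection coefficients appearing in $\tau(\varphi)$, $\Dirac\psi$ and the pullback metric. The curvature tensor $R^{\nabla}$ arises precisely from commuting the variational derivative past the pullback connection, i.e.\ from terms of the form $\nabla_{\delta\varphi}\nabla\,-\,\nabla\nabla_{\delta\varphi}$. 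Collecting the coefficient of $\eta^1\eta^2$ (into which the bilinear Dirac term contributes after the substitution $\psi=\eta^1\psi_1+\eta^2\psi_2$, so that $\psi\otimes\psi$ is proportional to $\eta^1\eta^2\,\psi_1\otimes\psi_2$) produces on one side the curvature expression $R^{\nabla}(\psi_{1e_+},\psi_{2e_+})e_{\oz}(\varphi)$ and on the other the second-order operator $\nabla^2\xi+R^{\nabla}(e_s,\xi)e_s+R^{\nabla}(e_t,\xi)e_t$ acting on $\xi$.

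The main obstacle is precisely this last variation: keeping track of all curvature contributions that appear when the $\varphi$-dependence of the pullback connection is differentiated, and organising them so that the Graßmann-odd and Graßmann-even pieces separate cleanly. In particular, I expect that several integration-by-parts steps are needed to move derivatives off $\delta\varphi$, and that the skew-symmetry and Bianchi-type identities of $R^{\nabla}$ must be invoked to assemble the compact curvature form stated in the proposition. A convenient bookkeeping device is to use the local frame $e^{\pm}$ and the formulas (\ref{eqnB})--(\ref{eqnTwistedDiracOperatorLocal}), which diagonalise the Dirac operator and make the $\partial_z/\partial_{\oz}$ decomposition manifest, reducing the curvature algebra to a finite number of explicit terms.
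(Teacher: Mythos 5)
Your proposal is correct and follows essentially the same route as the paper's proof: separate variations in $\psi$, $\xi$ and $\varphi$, use of the Graßmann grading to split the equations, integration by parts against the bilinear pairing for the spinor equations, and curvature commutation together with the formal self-adjointness of the Jacobi operator $\nabla^2+R^{\nabla}(e_i,\cdot)\,e_i$ to assemble the fourth equation from the $\varphi$-variation. The one detail to watch when executing it is that the local Dirac formula (\ref{eqnTwistedDiracOperatorLocal}) is written in the unit frame $e^+$, so that arriving at the stated equation $\nabla_{\dd{\oz}}\psi_{j+}=0$ (rather than $\nabla_{\dd{\oz}}\psi_{je^+}=0$, which is a genuinely different condition since the conformal factor $\lambda$ is not holomorphic) requires the conformal-factor bookkeeping that the paper handles via the identity $\scal[g]{\psi_+}{\psi_+}=0$ for odd $\psi_+$; your careful integration by parts on the closed surface produces the same outcome, but this step must not be short-circuited.
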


It is clear that the Euler-Lagrange equations stated must be globally well-defined,
i.e. do not depend on the choice of holomorphic coordinates. As in (\ref{eqnInducedSection}),
this also follows directly from the transformation behaviour.

\begin{proof}
By the local formula for the twisted Dirac operator in (\ref{eqnTwistedDiracOperatorLocal})
and oddness of $\psi$, we first transform the term of $\mA_1$ involving $\Dirac\psi$ as follows.
\begin{align*}
\int_{\Sigma}\dvol_{\Sigma}\,\scal[B]{\psi}{\Dirac\psi}
&=\int_{\Sigma}\dvol_{\Sigma}\,2\lambda^{-\frac{1}{2}}\scal[g]{\psi_{e^+}}{\nabla_{\partial_{\oz}}\psi_{e^+}}\\
&=\int_{\Sigma}\dvol_{\Sigma}\,2\lambda^{-\frac{1}{2}}\scal[g]{\lambda^{-\frac{1}{4}}\psi_+}{\nabla_{\partial_{\oz}}(\lambda^{-\frac{1}{4}}\psi_+)}\\
&=\int_{\Sigma}\dvol_{\Sigma}\,2\lambda^{-1}\scal[g]{\psi_+}{\nabla_{\partial_{\oz}}\psi_+}\\
&=\int_{\Sigma}dz\wedge d\oz\,\scal[g]{\psi_+}{\nabla_{\partial_{\oz}}\psi_+}
\end{align*}
where we used in the third equation that $\scal[g]{\psi_+}{\psi_+}=0$ vanishes since $\psi_+$ is odd.
Now, for $\psi,\psi'\in\Gamma(\Sigma,S^+\otimes_{\bC}\varphi^*T^{\bC}X\otimes_{\bC}\bigwedge\bC^2_{\mathrm{odd}})$,
we define the (global) Graßmann valued $1$-form $\Omega:=-\scal[g]{\psi_+}{\psi'_+}dz$.
Since $\Sigma$ is closed, we obtain
\begin{align*}
0&=\int_{\Sigma}d\Omega=\int_{\Sigma}dz\wedge d\oz\,\dd{\oz}\scal[g]{\psi_+}{\psi'_+}\\
&=\int_{\Sigma}dz\wedge d\oz\,\left(\scal[g]{\nabla_{\dd{\oz}}\psi_+}{\psi'_+}
+\scal[g]{\psi_+}{\nabla_{\dd{\oz}}\psi'_+}\right)
\end{align*}
After this preparation, consider a variation $\psi=\psi_{\varepsilon}$ with $\frac{d\psi}{d\varepsilon}=\gamma$
at $\varepsilon=0$ and fix $(\varphi,\xi)$. Then, using the previous calculations, we yield
\begin{align*}
i\cdot\frac{d\mA_1}{d\varepsilon}|_0
&=\frac{d}{d\varepsilon}|_0\int_{\Sigma}\dvol_{\Sigma}\,\scal[B]{\psi}{\Dirac\psi}\\
&=\int_{\Sigma}dz\wedge d\oz\,\frac{d}{d\varepsilon}|_0\,\scal[g]{\psi_+}{\nabla_{\dd{\oz}}\psi_+}\\
&=\int_{\Sigma}dz\wedge d\oz\,\left(\scal[g]{\gamma_+}{\nabla_{\dd{\oz}}\psi_+}+\scal[g]{\psi_+}{\nabla_{\dd{\oz}}\gamma_+}\right)\\
&=\int_{\Sigma}dz\wedge d\oz\,\left(\scal[g]{\gamma_+}{\nabla_{\dd{\oz}}\psi_+}-\scal[g]{\nabla_{\dd{\oz}}\psi_+}{\gamma_+}\right)\\
&=2\int_{\Sigma}dz\wedge d\oz\,\scal[g]{\gamma_+}{\nabla_{\dd{\oz}}\psi_+}
\end{align*}
This expression is required to vanish for any $\gamma$ and, therefore, $\nabla_{\dd{\oz}}\psi_+=0$ follows.
Next, we consider a variation $\xi=\xi_{\varepsilon}$ with $\frac{d\xi}{d\varepsilon}=\chi$ at
$\varepsilon=0$ and fix $(\varphi,\psi)$. Then
\begin{align*}
-\frac{d\mA_1}{d\varepsilon}|_0
=\eta^1\eta^2\int_{\Sigma}\dvol_{\Sigma}\,\frac{d}{d\varepsilon}|_0\,\scal[g]{\xi}{\tau(\varphi)}
=\eta^1\eta^2\int_{\Sigma}\dvol_{\Sigma}\,\scal[g]{\chi}{\tau(\varphi)}
\end{align*}
and hence we obtain $\tau(\varphi)=0$.
Next, we consider a variation $\varphi=\varphi_{\varepsilon}$
with $\frac{d\varphi}{d\varepsilon}=\zeta$ at $\varepsilon=0$ and fix $(\xi,\psi)$.
The terms in $\mA_1$ which are proportional to $\eta^1\eta^2$ are independent of those which are not and we may
treat both cases separately. Considering only the part not proportional to $\eta^1\eta^2$ reduces $\mA_1$ to the
classical action $\frac{1}{2}\int_{\Sigma}\dvol_{\Sigma}\,\abs{d\varphi}^2$ whose Euler-Lagrange equations
are known to be $\tau(\varphi)=0$, which we thus obtain a second time (Ref.~\onlinecite{Xin96}, Sec. 1.2.3).
Now we consider the part of $\mA_1$ which is proportional to $\eta^1\eta^2$, here denoted $\mA_1^{\eta}$.
We use the equations $\tau(\varphi)=0$ and $\nabla_{\dd{\oz}}\psi_+=0$ for critical points already established
to get
\begin{align*}
2\cdot\frac{d\mA_1^{\eta}}{d\varepsilon}|_0
&=-2\eta^1\eta^2\int_{\Sigma}\dvol_{\Sigma}\frac{d}{d\varepsilon}|_0\,\scal[g]{\xi}{\tau(\varphi)}
-2i\int_{\Sigma}dz\wedge d\oz\,\frac{d}{d\varepsilon}|_0\,\scal[g]{\psi_+}{\nabla_{\dd{\oz}}\psi_+}\\
&=-2\eta^1\eta^2\int_{\Sigma}\dvol_{\Sigma}\,\scal[g]{\xi}{\nabla_{\dd{\varepsilon}}\tau(\varphi)}
-2i\int_{\Sigma}dz\wedge d\oz\,\scal[g]{\psi_+}{\nabla_{\dd{\varepsilon}}\nabla_{\dd{\oz}}\psi_+}
\end{align*}
As for the first term, note that $\nabla_{\dd{\varepsilon}}\tau(\varphi)=\nabla^2\zeta+R^{\nabla}(e_i,\zeta)e_i$
by Ref.~\onlinecite{Xin96}, Sec. 1.4.3, where $\nabla^2\zeta:=\tr(\nabla_{\cdot}(\nabla\zeta))(\cdot)$
is the trace-Laplacian operator, which is formally self-adjoint. We thus calculate
\begin{align*}
\int_{\Sigma}\dvol_{\Sigma}\,\scal[g]{\xi}{\nabla_{\dd{\varepsilon}}\tau(\varphi)}
&=\int_{\Sigma}\dvol_{\Sigma}\,\scal[g]{\xi}{\nabla^2\zeta+R^{\nabla}(e_i,\zeta)e_i}\\
&=\int_{\Sigma}\dvol_{\Sigma}\,\scal[g]{\zeta}{\nabla^2\xi+R^{\nabla}(e_i,\xi)e_i}
\end{align*}
and, for the second term,
\begin{align*}
&-2i\int_{\Sigma}dz\wedge d\oz\,\scal[g]{\psi_+}{\nabla_{\dd{\varepsilon}}\nabla_{\dd{\oz}}\psi_+}\\
&\qquad\qquad=-2i\int_{\Sigma}dz\wedge d\oz\,\scal[g]{\psi_+}{\nabla_{\dd{\oz}}\nabla_{\dd{\varepsilon}}\psi_+
+R(\partial_{\varepsilon},\partial_{\oz})\psi_+}\\
&\qquad\qquad=2i\int_{\Sigma}dz\wedge d\oz\,\scal[g]{\nabla_{\dd{\oz}}\psi_+}{\nabla_{\dd{\varepsilon}}\psi_+}
-2i\int_{\Sigma}dz\wedge d\oz\,\scal[g]{\psi_+}{R(\partial_{\varepsilon},\partial_{\oz})\psi_+)}\\
&\qquad\qquad=2i\int_{\Sigma}dz\wedge d\oz\,\scal[g]{R\left(\dd[\varphi]{\varepsilon},\dd[\varphi]{\oz}\right)\psi_+}{\psi_+}\\
&\qquad\qquad=-2i\int_{\Sigma}dz\wedge d\oz\,\scal[g]{R(\psi_+,\psi_+)\dd[\varphi]{\oz}}{\zeta}\\
&\qquad\qquad=\int_{\Sigma}\dvol_{\Sigma}\,\lambda^{-1}\scal[g]{R(\psi_+,\psi_+)\dd[\varphi]{\oz}}{\zeta}\\
&\qquad\qquad=\int_{\Sigma}\dvol_{\Sigma}\,\scal[g]{R(\psi_{e_+},\psi_{e_+})e_{\oz}(\varphi)}{\zeta}
\end{align*}
Putting everything together, we obtain
\begin{align*}
2\nabla^2\xi+2R^{\nabla}(e_x,\xi)e_x+2R^{\nabla}(e_y,\xi)e_y
=R(\psi_{e_+},\psi_{e_+})e_{\oz}(\varphi)
=2R(\psi_{1e_+},\psi_{2e_+})e_{\oz}(\varphi)
\end{align*}
which completes the proof.
\end{proof}

We close this section by comparing $\mA_1$ with a similar model.

\begin{Def}
For $\varphi\in C^{\infty}(\Sigma,X)$ and $\psi_1\in\Gamma(S\otimes_{\bC}\varphi^*T^{\bC}X)$,
we denote by $\mA_2$ the \emph{Dirac-harmonic action functional}
\begin{align*}
\mA_2(\varphi,\psi_1):=\int_{\Sigma}\dvol_{\Sigma}\,\left(\abs{d\varphi}^2+\scal[(]{\psi_1}{\Dirac\psi_1}\right)
\end{align*}
where $\scal[(]{\cdot}{\cdot}$ is the bundle metric on $S\otimes_{\bR}\varphi^*TX\cong S\otimes_{\bC}\varphi^*T^{\bC}X$
as in (\ref{eqnH}).
\end{Def}

Note that $\mA_2$ is not Graßmann-valued. This functional and its extrema, the Dirac-harmonic maps, were introduced in
Refs.~\onlinecite{CJLW05,CJLW06}. Their main properties are studied in the same articles.

\begin{Prp}[Ref.~\onlinecite{CJLW06}]
\label{prpDiracHarmonic}
The Euler-Lagrange equations of $\mA_2$ are
\begin{align*}
\tau(\varphi)=\mR(\varphi,\psi_1)\;,\qquad\Dirac\psi_1=0
\end{align*}
where $\mR(\varphi,\psi_1)$ is a curvature term, which vanishes for
$\psi_1\in\Gamma(S^+\otimes_{\bC}\varphi^*T^{\bC}X)$.
\end{Prp}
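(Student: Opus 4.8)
The Euler--Lagrange equations of $\mA_2$ are $\tau(\varphi)=\mR(\varphi,\psi_1)$ and $\Dirac\psi_1=0$, with the curvature term $\mR$ vanishing when $\psi_1$ lives in the $S^+$-part.

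Let me think about how I'd prove this.

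The functional is $\mA_2(\varphi,\psi_1)=\int_\Sigma \dvol_\Sigma(|d\varphi|^2 + (\psi_1,\Dirac\psi_1))$. It's a standard first-variation computation. There are two independent variations: varying $\psi_1$ with $\varphi$ fixed, and varying $\varphi$ with $\psi_1$ fixed.

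The $\psi_1$-variation is the easy direction. I'd vary $\psi_1=\psi_{1,\varepsilon}$ with $\frac{d}{d\varepsilon}|_0 = \gamma$. Only the second term depends on $\psi_1$. Using that $\Dirac$ is formally self-adjoint with respect to the Hermitian metric $(\cdot,\cdot)$ (integrating by parts over the closed surface $\Sigma$, the boundary term drops), the variation gives $2\,\mathrm{Re}\int (\gamma, \Dirac\psi_1)$ — actually I'd want to be careful with the real/Hermitian structure, but the upshot is that vanishing for all $\gamma$ forces $\Dirac\psi_1=0$.

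The $\varphi$-variation is the hard part. Varying $\varphi=\varphi_\varepsilon$ with $\frac{d\varphi}{d\varepsilon}=\zeta$, the first term $\int|d\varphi|^2$ produces the tension field: its first variation is $-2\int(\zeta,\tau(\varphi))$, the classical harmonic-map computation (same as used in Prop.\ \ref{prpEulerLagrange}). The genuinely new work is the $\varphi$-dependence of the second term $\int(\psi_1,\Dirac\psi_1)$: both the connection $\nabla^{\varphi^*TX}$ entering $\Dirac$ and the bundle metric $(\cdot,\cdot)$ depend on $\varphi$. Here $\nabla_{\partial_\varepsilon}$ acting through $\Dirac$ will, upon commuting covariant derivatives, produce a Riemann curvature term $R^\nabla$ of $X$ contracted against $\psi_1$, $\psi_1$, and $d\varphi$; collecting these and pairing against $\zeta$ defines exactly the term $\mR(\varphi,\psi_1)$. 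This is the same mechanism that produced the curvature term in the second Euler--Lagrange equation of Prop.\ \ref{prpEulerLagrange}, so I'd follow that calculation closely: commute $\nabla_{\partial_\varepsilon}$ past $\nabla_{\partial_{\oz}}$ (or the relevant holomorphic/antiholomorphic derivative), pick up $R(\partial_\varepsilon,\cdot)\psi_1$, integrate by parts once to move a derivative off $\psi_1$, and use the symmetries of $R$ to rewrite everything as a single curvature expression paired with $\zeta$. The main obstacle is bookkeeping: carrying the Hermitian (rather than bilinear) pairing through the integration by parts correctly, and getting the index/curvature-symmetry contractions right so the term assembles into a well-defined section $\mR(\varphi,\psi_1)\in\Gamma(\varphi^*TX)$.

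Finally, the vanishing claim. When $\psi_1\in\Gamma(S^+\otimes_\bC\varphi^*T^\bC X)$, the local formula \eqref{eqnTwistedDiracOperatorLocal} shows $\Dirac$ maps $S^+$-sections into $S^-$-sections via a single antiholomorphic derivative, and the curvature term $\mR$ is built from a contraction of $R(\psi_1,\psi_1)$ — but by the skew-symmetry of $R$ in its first two slots, $R(\psi_{1e_+},\psi_{1e_+})$ vanishes when both arguments coincide (exactly as the diagonal Nijenhuis term was void in the Lemma's proof). So I'd verify that the local expression for $\mR$ has this diagonal $R(\psi_1,\psi_1)$ shape and invoke skew-symmetry to conclude $\mR=0$ in the $S^+$ case. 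Since this proposition is explicitly attributed to Ref.~\onlinecite{CJLW06}, I'd present the computation at the level of the key steps and cite the source for the detailed derivation of $\mR$.
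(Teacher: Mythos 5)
First, a remark on the ground truth: the paper itself gives no proof of this proposition at all --- it is quoted from Ref.~\onlinecite{CJLW06}, as the bracketed attribution in the header indicates. So your proposal can only be measured against the derivation in that source. Your variational skeleton is the correct one and agrees with it: varying $\psi_1$ and using formal self-adjointness of $\Dirac$ over the closed surface $\Sigma$ forces $\Dirac\psi_1=0$, and varying $\varphi$ produces $\tau(\varphi)$ from the harmonic term, while the $\varphi$-dependence of $\nabla^{\varphi^*TX}$ inside $\Dirac$ yields, after commuting covariant derivatives and integrating by parts, the curvature term $\mR(\varphi,\psi_1)$.

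The genuine gap is your mechanism for the vanishing of $\mR$ when $\psi_1\in\Gamma(S^+\otimes_{\bC}\varphi^*T^{\bC}X)$. You assert that $\mR$ is a contraction of $R(\psi_1,\psi_1)$ and dies by skew-symmetry of $R$, ``exactly as the diagonal Nijenhuis term''. This cannot be right, for a reason the paper itself stresses: $\mA_2$ is \emph{not} Gra{\ss}mann-valued, so $\psi_1$ is an ordinary commuting section. If $\mR$ had the diagonal shape $R(\psi_1,\psi_1)(\cdots)$, skew-symmetry would kill it for \emph{every} $\psi_1$, the coupling between $\varphi$ and $\psi_1$ would be identically trivial, and the proposition's restriction to the $S^+$-case would be vacuous. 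The analogy with Prp.~\ref{prpEulerLagrange} fails precisely because there $\psi_+$ is odd, so $R(\psi_+,\psi_+)=2\eta^1\eta^2R(\psi_{1e_+},\psi_{2e_+})$ survives; nothing of that sort happens for $\mA_2$. The actual curvature term of Ref.~\onlinecite{CJLW06} has the local form
\begin{align*}
\mR(\varphi,\psi_1)=\frac{1}{2}\,R^m_{\phantom{m}lij}\;\scal[(]{\psi_1^i}{e_\alpha(\varphi^l)\,\Gamma_{e_\alpha}\psi_1^j}\,\dd{x^m}
\end{align*}
(sum over an orthonormal frame $e_\alpha$ of $T\Sigma$): the commutator $[\nabla_{\partial_\varepsilon},\Dirac]$ necessarily carries the Clifford multiplication $\Gamma_{e_\alpha}$ sitting inside $\Dirac$, so $\psi_1$ is paired against a \emph{Clifford-multiplied} copy of itself, not against itself. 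No symmetry cancellation occurs for general $\psi_1$: the pairing $\scal[(]{\psi^i}{\Gamma_{e_\alpha}\psi^j}$ is skew in $(i,j)$ because Clifford multiplication is skew-adjoint for the real metric $\scal[(]{\cdot}{\cdot}$, and $R^m_{\phantom{m}lij}$ is also skew in $(i,j)$, so the contraction is generically nonzero. The correct vanishing argument is chirality: the gamma matrices fixed in Sec.~\ref{secHolomorphicSupercurvesAdhoc} are off-diagonal, so $\Gamma_{e_\alpha}$ maps $S^+$ into $S^-$, while the Hermitian metric $H$ makes $S^+$ and $S^-$ orthogonal; hence every term $\scal[(]{\psi_1^i}{\Gamma_{e_\alpha}\psi_1^j}$ vanishes when $\psi_1$ is $S^+$-valued. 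This chirality argument --- which you essentially had in hand when you observed that $\Dirac$ maps $S^+$-sections to $S^-$-sections --- is what the vanishing claim requires, not the skew-symmetry of $R$.
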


Knowing the Euler-Lagrange equations for both $\mA_1$ and $\mA_2$, we can now consider the common case
\begin{align*}
\varphi\in C^{\infty}(\Sigma,X)\;,\qquad\psi_1\in\Gamma(S^+\otimes_{\bC}\varphi^*T^{\bC}X)\;,\qquad
\psi_2=0\;,\qquad\xi=0
\end{align*}
Note that we do not mean to restrict $\mA_2$ to sections $\psi_1\in\Gamma(S^+\otimes_{\bC}\varphi^*T^{\bC}X)$,
in which case its second term would trivially vanish. Instead, we are interested in critical points
that happen to be sections of this subbundle, in other words allowing for variations with respect to
the full spinor bundle $S$.
By Prp. \ref{prpEulerLagrange}, Prp. \ref{prpDiracHarmonic} and (\ref{eqnTwistedDiracOperatorLocal}),
we have to compare
\begin{align*}
\tau(\varphi)=0\,,\quad\nabla_{\dd{\oz}}\psi_{1+}=0\qquad\mathrm{versus}\qquad
\tau(\varphi)=0\,,\quad\nabla_{\dd{\oz}}\psi_{1e^+}=0
\end{align*}
for $\mA_1$ and $\mA_2$, respectively. The respective second conditions are not equivalent
since the conformal factor $\lambda$ is, in general, not holomorphic:
\begin{align*}
\nabla_{\partial_{\oz}}\psi_{1+}=\nabla_{\partial_{\oz}}(\lambda^{\frac{1}{4}}\psi_{1e^+})
=\lambda^{\frac{1}{4}}\nabla_{\partial_{\oz}}\psi_{1e^+}+\partial_{\oz}(\lambda^{\frac{1}{4}})\psi_{1e^+})
\end{align*}
In particular, one cannot expect an analogon of Thm. \ref{thmSupersymmetricAction}
for $\mA_1$ replaced with $\mA_2$.

\section{Holomorphic Supercurves in Supergeometry}
\label{secHolomorphicSupercurves}

In this section, we show how the tuple $(\varphi,\psi_1,\psi_2,\xi)$ from Def.
\ref{defHolomorphicSupercurveAdhoc} constitutes a morphism $\Phi$ of supermanifolds
such that the differential equations stated are equivalent to the single
condition $\dbar_J\Phi=0$. This result, which is clearly interesting in its own right,
will turn out to be important in the proof of Thm. \ref{thmSupersymmetricAction}.
In physical terminology, $\Phi$ is called a superfield, while
$\varphi$,$\psi_1$,$\psi_2$ and $\xi$ are known as its component fields
(Refs.~\onlinecite{DF99b,Fre99}).

Concerning supermanifolds, we follow the standard Berezin-Leites approach (Ref.~\onlinecite{Lei80})
describing supermanifolds in terms of ringed spaces, our main reference being Ref.~\onlinecite{Var04}.
We use complex, not real, ringed spaces which is not standard, however.
Staying in the smooth category, considering these CS supermanifolds, as they are sometimes
called (Ref.~\onlinecite{DM99}), over the usual (real) supermanifolds, does not make much
of a difference. Our convention is mainly due to convenience: The
supermanifold $\Sigma_L$, to be introduced below, then carries only one
odd supercoordinate, which makes our computations considerably easier.

\subsection{Supermanifolds and Maps with Flesh}

A supermanifold of dimension $n|m$ is a ringed space $(M,\mO_M)$ locally isomorphic to the model space
$(\bR^n,\mO_{n|m}):=\left(\bR^n,C^{\infty}(\bR^n,\bC)\otimes\bigwedge\bC^m\right)$,
where $M$ is a manifold and $\mO_M$ is a sheaf of $\bC$-superalgebras, sections of which
are called superfunctions. A morphism of supermanifolds is a morphism
$\Phi=(\varphi,\phi):(M,\mO_M)\rightarrow (X,\mO_X)$ of
$\bC$-ringed spaces, where $\varphi:M\rightarrow X$ is a smooth function and
$\phi:\mO_X\rightarrow\varphi_*\mO_M$ is a morphism of sheaves of $\bC$-superalgebras.
In particular, $\phi$ is even, i.e. parity preserving, a fact which is important for the
differential calculus.
By definition, $(M,\mO_M)$ is covered by open sets $U\subseteq M$ such that there are isomorphisms
$(\varphi,\phi):(U,\mO_M|_U)\rightarrow(\bR^n,\mO_{n|m})$ identifying the superfunctions
(locally) with Graßmann-valued functions on $\bR^n$.
A tuple $(x^1,\ldots,x^n,\theta^1,\ldots,\theta^m)$ such that $x^j$ are coordinates of $\bR^n$
and $\theta^j$ are generators of $\bigwedge\bC^m$, all identified with sections of $\mO_M|_U$,
is called a tuple of supercoordinates.
An important construction for supermanifolds is provided by complex vector bundles $E\rightarrow M$
over a manifold $M$. Denoting the sheaf of smooth sections of the Graßmann bundle
$\bigwedge E:=\bigwedge^0 E\oplus\bigwedge^1 E\oplus\cdots\oplus\bigwedge^n E$ by $\mO_E$,
$M_E:=(M,\mO_E)$ is a supermanifold, which is called split associated to $E$.
For split supermanifolds, super coordinates are provided by frames of local sections of the
vector bundle. Examples include the superpoint
$\bC^{0|m}:=\left(\{\mathrm{point}\},\bigwedge\bC^m\right)=\{\mathrm{point}\}_{\underline{\bC}^m}$
of dimension $0|m$ which is split associated to the (constant) vector bundle $\underline{\bC}^m\rightarrow\mathrm{point}$.
Moreover, any manifold $M$ may be identified with the supermanifold
$(M,\mO_{n|0}=C^{\infty}(M,\bC))$ of dimension $n|0$.

Let $(M,\mO_M)$ be a supermanifold and $U\subseteq M$. We define
$\mathrm{Der}(\mO_M(U))\subseteq\mathrm{End}_{\bC}(\mO_M(U))$
to be the complex vector space of superderivations of $\mO_M(U)$.
$\mS M:=\mathrm{Der}\mO_M$ is called the super tangent sheaf of $(M,\mO_M)$,
and sections thereof are called super vector fields.
For a smooth manifold $(M,C^{\infty}(M,\bC))$,
$\mS M=T^{\bC}M$ is the complexified tangent bundle (considered as a sheaf).
Prescribing $(f\cdot Y)(g):=f\cdot Y(g)$ makes $\mS M(U)$ a
left supermodule for the super algebra of superfunctions $\mO_M(U)$,
and setting $Y\cdot f:=(-1)^{\abs{f}{\abs{Y}}}f\cdot Y$ (for homogeneous
elements) gives $\mS M(U)$ the structure of a right $\mO_M(U)$-supermodule,
where $\abs{\cdot}\in\{0,1\}$ denotes the parity of an element in a $\bZ_2$-graded space.
As usual, we let $\dd{x^j}$ and $\dd{\theta^j}$ denote the super vector fields
on the model space $(\bR^n,\mO_{n|m})$, which are induced by the canonical (global) supercoordinates
$(x^1,\ldots,x^n,\theta^1,\ldots,\theta^m)$.
$\mathrm{Der}\mO_{n|m}(U)$ is a free $\mO_{n|m}(U)$-supermodule with $\mO_{n|m}(U)$-basis
$\left(\dd{x^1},\ldots,\dd{x^n},\dd{\theta^1},\ldots,\dd{\theta^m}\right)$.
For a general supermanifold $(M,\mO_M)$ of dimension $n|m$, $\mS M(U)$ is, therefore, free of rank $n|m$
provided that $U\subseteq M$ is contained in a super coordinate chart.

\subsubsection*{Differential Calculus and Tensors}

Let $\Phi=(\varphi,\phi):(M,\mO_M)\rightarrow(X,\mO_X)$ be a morphism of supermanifolds.
Super vector fields along $\Phi$ are sections of the sheaf (over $X$) of derivations
$\mS\Phi:=\mathrm{Der}(\mO_X,\varphi_*\mO_M)\subseteq Hom_{\bC}(\mO_X,\varphi_*\mO_M)$ along $\Phi$.
We define the differential of $\Phi$ to be the sheaf morphism
\begin{align*}
d\Phi:\varphi_*\mS M\rightarrow\mS\Phi\;,\qquad d\Phi_V(Y):=Y\circ\phi_V
\end{align*}
for $Y\in \mO_M(U)$ with $U:=\varphi^{-1}(V)$.

\begin{Lem}
\label{lemVectorFieldsAlongPhi}
Let $\Phi=(\varphi,\phi):(M,\mO_M)\rightarrow(X,\mO_X)$ be a morphism of supermanifolds and $V\subseteq X$.
Let $(Y_1,\ldots,Y_{r+s})$ be an $\mO_X|_V$-basis of $\mS X|_V$.
Then $(\phi\circ Y_1,\ldots,\phi\circ Y_{r+s})$ is a basis of $\mS\Phi|_V$, i.e. on $V$
(slightly abusing notation),
$S\Phi=\mathrm{span}_{\varphi_*\mO_M}\left(\phi\circ Y_1,\ldots,\phi\circ Y_{r+s}\right)$.
\end{Lem}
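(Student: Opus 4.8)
The plan is to argue locally on $X$ and reduce to the coordinate basis. Since being a basis of $\mS\Phi$ is a local condition over $X$, I may assume that $V$ is contained in a super coordinate chart with coordinates $(x^1,\ldots,x^r,\theta^1,\ldots,\theta^s)$ and coordinate derivations $\partial_{x^a},\partial_{\theta^\alpha}$. Any $\mO_X|_V$-basis $(Y_1,\ldots,Y_{r+s})$ differs from these by an invertible matrix $A$ over $\mO_X|_V$, and since $\phi$ is an even algebra morphism one gets $\phi\circ Y_j=\sum_i\phi(A_{ij})\,(\phi\circ\partial_i)$ with $\phi(A)$ invertible over $\varphi_*\mO_M|_V$ (its inverse being $\phi(A^{-1})$); hence $(\phi\circ Y_i)$ is a basis if and only if $(\phi\circ\partial_i)$ is, so I may take $Y_i=\partial_i$. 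Along the way I would check that each $\phi\circ\partial_i$ indeed lies in $\mS\Phi$: applying the multiplicative, parity-preserving $\phi$ to the Leibniz rule for $\partial_i$ shows $\phi\circ\partial_i$ is a derivation along $\Phi$ of the same parity as $\partial_i$.

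Linear independence is then immediate. If $\sum_a f_a\,(\phi\circ\partial_{x^a})+\sum_\alpha g_\alpha\,(\phi\circ\partial_{\theta^\alpha})=0$ with coefficients in $\varphi_*\mO_M|_V$, I evaluate on the coordinate superfunctions $x^b$ and $\theta^\beta$; using $\phi(1)=1$ and that the $\partial_i$ act on coordinates by Kronecker deltas, each $f_a$ and $g_\alpha$ must vanish.

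The real content is spanning. Given $D\in\mS\Phi(V)$, I claim
\[
D=\sum_a D(x^a)\,(\phi\circ\partial_{x^a})+\sum_\alpha D(\theta^\alpha)\,(\phi\circ\partial_{\theta^\alpha}),
\]
i.e.\ that $D$ is recovered from its values on coordinates by a chain rule. First $D$ annihilates constants, since $D(1)=D(1\cdot1)=2D(1)$, whence $D(1)=0$. Expanding an arbitrary superfunction as a finite Graßmann polynomial $f=\sum_I f_I\,\theta^I$ with smooth even coefficients $f_I$, $\bC$-linearity and the Leibniz rule reduce the claim to two ingredients: the purely algebraic identity $D(\theta^I)=\sum_\alpha D(\theta^\alpha)\,\phi(\partial_{\theta^\alpha}\theta^I)$, proved by induction on $|I|$ from Leibniz and the supercommutativity of $\varphi_*\mO_M$ (the even factors $\phi(f_I)$ being central); and the chain rule $D(g)=\sum_a D(x^a)\,\phi(\partial_{x^a}g)$ for the smooth even coefficients $g=f_I$.

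This last identity is the main obstacle, since $g$ is merely smooth while $D$ is only assumed additive and Leibniz. I would establish it by comparing germs at each point $q\in\varphi^{-1}(V)$ and filtering by the nilpotent soul ideal $\mN$ of $\varphi_*\mO_M$, which satisfies $\mN^{m+1}=0$ where $m$ is the odd dimension of $M$ (a product of more than $m$ odd generators vanishing). Modulo $\mN$ the target becomes honest smooth functions and $\phi$ becomes the pullback $\varphi^*$, so that Hadamard's lemma centred at $x_0=\varphi(q)$ — where $\varphi^*(x^a-x_0^a)$ vanishes at $q$, killing the remainder — reproduces the chain rule on the body exactly as for an ordinary vector field. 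One then propagates the identity up the finite filtration $\varphi_*\mO_M\supset\mN\supset\cdots\supset\mN^{m+1}=0$, the correction term produced by one application of Hadamard always carrying an extra factor of vanishing body, until it is forced into $\mN^{m+1}=0$. This is the classical fundamental lemma that superderivations are determined by their action on coordinates (Ref.~\onlinecite{Var04}); the only delicate point is reconciling Hadamard's lemma with the soul filtration.
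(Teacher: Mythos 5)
You cannot really be compared against the paper here, because the paper offers no proof of Lem.~\ref{lemVectorFieldsAlongPhi}: it is stated as standard background, on a par with the quoted fact that $\mathrm{Der}\,\mO_{n|m}(U)$ is a free $\mO_{n|m}(U)$-supermodule on the coordinate derivations (Refs.~\onlinecite{Lei80,Var04}). Your proposal supplies exactly the standard argument the paper implicitly invokes, and it is correct. The reduction to the coordinate basis is legitimate since $\phi(A)\phi(A^{-1})=\phi(AA^{-1})=\id$ entrywise; linear independence by evaluating on $x^b,\theta^\beta$ is immediate; the induction for $D(\theta^I)$ closes because the Leibniz sign $(-1)^{\abs{D}}$ combines with supercommutativity (the coefficient $D(\theta^\alpha)$ has parity $\abs{D}+1$) to reproduce the signs of $\partial_{\theta^\alpha}\theta^I$; and your Hadamard-plus-filtration treatment of the smooth coefficients is the right way to handle the one genuinely delicate step. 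Indeed, setting $E:=D-\sum_i D(\xi^i)\,(\phi\circ\partial_{\xi^i})$, one application of Hadamard's lemma centred at $x_0=\varphi(q)$ together with $E(x^a)=0$ yields $E(g)=\sum_a\pm\,\phi(x^a-x_0^a)\,E(h_a)$; if $E(h)\in\mN^k$ for all smooth $h$, then, since the body of $\phi(x^a-x_0^a)$ vanishes on the whole fibre $\varphi^{-1}(x_0)$, the $\mN^k/\mN^{k+1}$-component of $E(g)$ vanishes at every point $q$, whence $E(g)\in\mN^{k+1}$, and after $m+1$ steps $E(g)\in\mN^{m+1}=0$. Three points you should make explicit in a final write-up, all standard: work with homogeneous $D$ (a superderivation is by definition a sum of homogeneous ones, and the Leibniz sign depends on $\abs{D}$); justify passing to germs and restrictions (derivations along $\Phi$ are local operators, by the usual bump-function argument, or by taking sections of $\mS\Phi$ to be sheaf morphisms); and record the compatibility that the body of $\phi(f)$ equals $\varphi^*$ applied to the body of $f$, which is precisely what makes $\phi(x^a-x_0^a)$ have vanishing body over $x_0$. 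Finally, note that the paper expands super vector fields with coefficients on the right, $Y=\dd{\xi^j}\cdot Y^j$, while you use left coefficients $D(\xi^i)$; since $f\cdot Y=\pm\, Y\cdot f$ the two spans coincide, so this is purely cosmetic.
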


Local differential calculus works analogous to the situation of ordinary manifolds.
Let $(\xi^1,\ldots,\xi^{n+m})$ and $(\eta^1,\ldots,\eta^{r+s})$ be local super coordinates of
$(M,\mO_M)$ and $(X,\mO_X)$, respectively, where, in order not to overload notation,
we leave the domains of the super coordinates implicit and treat even and odd coordinates on an equal footing.
Let $Y\in\mS M$ be a super vector field on $(M,\mO_M)$,
and let $Y^j$ be super functions such that (locally) $Y=\dd{\xi^j}\cdot Y^j$.
Then, abbreviating $\dd[\phi(\eta^i)]{\xi^k}:=\dd{\xi^k}\left(\phi(\eta^i)\right)$, we have
\begin{align*}
d\Phi(Y)&=d\Phi\left(\dd{\xi^k}\cdot Y^k\right)=\left(\dd{\xi^k}\circ\phi\right)\cdot Y^k
=\dd[\phi(\eta^i)]{\xi^k}\cdot\left(\phi\circ\dd{\eta^i}\right)\cdot Y^k\\
&=\left(\phi\circ\dd{\eta^i}\right)
\left((-1)^{(\abs{\xi^k}+\abs{\eta^i})\cdot\abs{\eta^i}}\dd[\phi(\eta^i)]{\xi^k}\right)\cdot Y^k\\
&=:\left(\phi\circ\dd{\eta^i}\right)\cdot d\Phi^i_{\phantom{i}k}\cdot Y^k
\end{align*}
The last equation defines $d\Phi^i_{\phantom{i}k}$, which reduces to
$d\Phi^i_{\phantom{i}k}=\dd[\phi(\eta^i)]{\xi^k}$ if
the target space is an ordinary manifold (all $\eta^i$ are even).
We introduce half-index notation
\begin{align*}
d\Phi_{\phantom{i}k}:=d\Phi\left(\dd{\xi^k}\right)=\left(\phi\circ\dd{\eta^i}\right)\cdot d\Phi^i_{\phantom{i}k}
\end{align*}

Let $\Phi=(\varphi,\phi):(M,\mO_M)\rightarrow(X,\mO_X)$ be a morphism of supermanifolds
and consider the sheaf $\mS X$ of $\mO_X$-supermodules and the sheaf $\mS\Phi$ of $\varphi_*\mO_M$-supermodules.
Tensors on the super tangent sheaf can be pulled back as in the case of ordinary manifolds.

\begin{Lem}
\label{lemSuperPullback}
Let $E\in\mathrm{End}_{\mO_X}(\mS X)$ and $B\in\Hom_{\mO_X}(\mS X\otimes_{\mO_X}\mS X,\mO_X)$ be sections
of the sheaves of superlinear endomorphisms and superbilinear maps, respectively. Prescribing
\begin{align*}
E_{\Phi}(\phi\circ Y):=\phi\circ E(Y)\;,\qquad
\scal[B_{\Phi}]{\phi\circ Y}{\phi\circ Z}:=\phi\circ\scal[B]{Y}{Z}
\end{align*}
for $Y,Z\in\mS X$, together with super(bi)linear extensions for general sections of $\mS\Phi$, yields
well-defined sections $E_{\Phi}\in\mathrm{End}_{\varphi_*\mO_M}(\mS\Phi)$ and
$B_{\Phi}\in\Hom_{\varphi_*\mO_M}(\mS\Phi\otimes_{\varphi_*\mO_M}\mS\Phi,\varphi_*\mO_M)$.
\end{Lem}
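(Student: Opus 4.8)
The plan is to reduce the entire statement to Lemma~\ref{lemVectorFieldsAlongPhi}: since the image under $\phi\circ(\,\cdot\,)$ of a basis of $\mS X$ is a basis of $\mS\Phi$, a superlinear object on $\mS\Phi$ is both freely prescribable and completely determined by its values on such a basis. I would therefore work locally over an open set $V\subseteq X$ on which an $\mO_X|_V$-basis $(Y_1,\ldots,Y_{r+s})$ of $\mS X|_V$ exists, and define $E_\Phi$ and $B_\Phi$ on the induced basis of $\mS\Phi|_V$, afterwards checking that the result is intrinsic.

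First, by Lemma~\ref{lemVectorFieldsAlongPhi} the tuple $(\phi\circ Y_1,\ldots,\phi\circ Y_{r+s})$ is a $\varphi_*\mO_M$-basis of $\mS\Phi|_V$. Being a free basis, the prescriptions $E_\Phi(\phi\circ Y_i):=\phi\circ E(Y_i)$ and $B_\Phi(\phi\circ Y_i,\phi\circ Y_j):=\phi(\scal[B]{Y_i}{Y_j})$ extend uniquely to a $\varphi_*\mO_M$-superlinear endomorphism, respectively a $\varphi_*\mO_M$-superbilinear form, on $\mS\Phi|_V$, once the Koszul signs of the right-supermodule structure are fixed. No compatibility condition is needed at this stage precisely because the $\phi\circ Y_i$ are free generators.

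The main point is then to verify that these extensions reproduce the prescribed formulas on all sections $\phi\circ Y$, not merely on the chosen basis; this simultaneously shows independence of the basis and hence gluing to a global object. The key observation is that $Y\mapsto\phi\circ Y$ intertwines the module structures: for $f\in\mO_X$ one computes, using that $\phi$ is an even morphism of sheaves of superalgebras together with the sign rule $Y\cdot f=(-1)^{\abs{f}\abs{Y}}f\cdot Y$, that
\[
\phi\circ(Y\cdot f)=(\phi\circ Y)\cdot\phi(f),
\]
since both sides send $g\in\mO_X$ to $(-1)^{\abs{f}\abs{Y}}\phi(f)\,\phi(Y(g))$, the signs agreeing because $\abs{\phi(f)}=\abs{f}$ and $\abs{\phi\circ Y}=\abs{Y}$. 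Expanding an arbitrary $Y=\sum_iY_i\cdot f^i$ and using this identity, the superlinearity of $E_\Phi$, and the $\mO_X$-superlinearity of $E$ gives
\[
E_\Phi(\phi\circ Y)=\sum_i(\phi\circ E(Y_i))\cdot\phi(f^i)=\phi\circ\Big(\sum_iE(Y_i)\cdot f^i\Big)=\phi\circ E(Y),
\]
and the analogous two-slot computation yields $B_\Phi(\phi\circ Y,\phi\circ Z)=\phi(\scal[B]{Y}{Z})$. As these intrinsic formulas do not refer to the basis, two local definitions agree on the generators $\phi\circ Y$ and hence everywhere, so they patch to global sections $E_\Phi\in\mathrm{End}_{\varphi_*\mO_M}(\mS\Phi)$ and $B_\Phi\in\Hom_{\varphi_*\mO_M}(\mS\Phi\otimes_{\varphi_*\mO_M}\mS\Phi,\varphi_*\mO_M)$.

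I expect the only genuine obstacle to be the sign bookkeeping: one must fix the right-supermodule conventions on both $\mS X$ and $\mS\Phi$ and check that the even morphism $\phi$ transports them compatibly, both in the intertwining identity and, for $B_\Phi$, through the superbilinear extension in each argument. Because $\phi$ is parity-preserving, every Koszul sign produced on the $X$-side reappears verbatim on the $\Phi$-side, so no sign obstruction can arise; all the mathematical content is carried by Lemma~\ref{lemVectorFieldsAlongPhi}, and the remainder is routine verification.
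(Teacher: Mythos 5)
Your proposal is correct and takes essentially the same route as the paper: the paper's own proof is just the one-line assertion that ``a short calculation shows that this prescription does not depend on the sections of $\mS X$,'' and your argument --- reducing to the free basis supplied by Lem.~\ref{lemVectorFieldsAlongPhi} and using the intertwining identity $\phi\circ(Y\cdot f)=(\phi\circ Y)\cdot\phi(f)$, valid because $\phi$ is an even morphism of superalgebras --- is precisely that calculation carried out in detail, with the basis-independence and gluing made explicit.
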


\begin{proof}
A short calculations shows that this prescription does not depend on the sections of $\mS X$,
and the resulting objects satisfy all properties claimed (compare also Ref.~\onlinecite{Goe08}
for the special case of semi-Riemannian supermetrics.).
\end{proof}

\begin{Def}
\label{defAlmostComplexStructure}
An \emph{(even) almost complex structure} on a supermanifold $(M,\mO_M)$ is an even section
$J\in\mathrm{End}_{\mO_M}(\mS M)$ such that $J^2=-\id$.
\end{Def}

We consider only even structures here. Consult Sec. 5.2. in Ref.~\onlinecite{Sac07} and the
references therein for a discussion of that matter.
Def. \ref{defAlmostComplexStructure} leads to the notion of an almost complex supermanifold.
In particular, complex supermanifolds are almost complex (Ref.~\onlinecite{HW87}).
Note that, for the last conclusion, it does not suffice that the underlying smooth manifold
is complex. One can show that the split supermanifold associated to a holomorphic vector bundle
over a complex manifold is complex.
In order to avoid a thorough treatment of complex supermanifolds
and since the explicit form is hard to find in the literature,
we construct the resulting almost complex structure next.
We need the notion of holomorphic split coordinates on $M_E$,
that is supercoordinates $\Phi=(\varphi,\phi):(U\subseteq M,\mO_E|_U)\rightarrow(\bR^{2n},\mO_{2n|m})$
such that $\varphi=(x^1,y^1,\ldots,x^n,y^n):U\rightarrow\bR^{2n}$ are holomorphic coordinates and $\phi$ identifies the
Graßmann generators $\theta^j$ on $\mO_{2n|m}$ with (nonvanishing) holomorphic sections of $E|_U$.

\begin{Lem}
\label{lemAlmostComplexStructure}
Let $M_E=(M,\mO_E)$ be the split supermanifold associated to a holomorphic vector bundle $E\rightarrow M$
over a complex manifold $M$.
Then $M_E$ carries a canonical almost-complex structure $j$ as follows.
Let $\Phi=(\varphi,\phi):(U,\mO_E)\rightarrow(\bR^{2n},\mO_{2n|m})$ be holomorphic split coordinates
and (locally) define $j(Y):=d\Phi^{-1}\circ j_{n|m}\circ d\Phi\circ Y$
for $Y\in\mS M$, where $j_{n|m}$ is the standard complex structure on $(\bR^{2n},\mO_{2n|m})$ defined by
\begin{align*}
j_{n|m}\left(\dd{x^i}\right):=\dd{y^i}\;,\qquad j_{n|m}\left(\dd{y^i}\right):=-\dd{x^i}\;,\qquad
j_{n|m}\left(\dd{\theta^k}\right):=i\cdot\dd{\theta^k}
\end{align*}
\end{Lem}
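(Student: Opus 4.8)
The plan is to argue in two stages: first that the local prescription defines an even endomorphism squaring to $-\id$, and then, the essential point, that it does not depend on the chosen holomorphic split coordinates, so that the local definitions glue to a global section $j\in\mathrm{End}_{\mO_M}(\mS M)$.

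For the first stage I would check directly that $j_{n|m}$ is an even almost complex structure on the model space $(\bR^{2n},\mO_{2n|m})$: it sends the even generators $\dd{x^i},\dd{y^i}$ to even vector fields and the odd generator $\dd{\theta^k}$ to the odd vector field $i\cdot\dd{\theta^k}$, so it is even, and $j_{n|m}^2=-\id$ is immediate from the defining formulas. Since $\Phi$ is an isomorphism of supermanifolds onto its image, its differential $d\Phi$ is an even $\mO_M$-linear isomorphism, and conjugating $j_{n|m}$ by it is precisely the pullback of the endomorphism $j_{n|m}$ along $\Phi$, which is well defined by Lem. \ref{lemSuperPullback}. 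Hence $j|_U:=d\Phi^{-1}\circ j_{n|m}\circ d\Phi$ is a well-defined even section of $\mathrm{End}_{\mO_M}(\mS M|_U)$ inheriting $(j|_U)^2=-\id$.

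For the second stage I would first observe that $M_E$ is covered by holomorphic split charts: holomorphic coordinates on the complex manifold $M$ furnish the even part, and local holomorphic frames of the holomorphic bundle $E$ furnish the odd generators. To glue the endomorphisms $j|_U$, it then suffices to show that for two overlapping charts $\Phi,\Phi'$ the transition $T:=\Phi'\circ\Phi^{-1}$, an isomorphism of open subsupermanifolds of the model space, preserves the standard structure in the sense that $dT\circ j_{n|m}=j_{n|m}\circ dT$; for then $j_{n|m}$ is invariant under $T$ and the two conjugations coincide on the overlap:
\begin{align*}
(d\Phi')^{-1}\circ j_{n|m}\circ d\Phi'
&=d\Phi^{-1}\circ\left((dT)^{-1}\circ j_{n|m}\circ dT\right)\circ d\Phi\\
&=d\Phi^{-1}\circ j_{n|m}\circ d\Phi
\end{align*}
which is precisely the required coordinate independence.

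The commutation $dT\circ j_{n|m}=j_{n|m}\circ dT$ I would establish by an eigensheaf argument. Writing $z^j:=x^j+iy^j$, the $(+i)$-eigensheaf of $j_{n|m}$ is spanned over $\mO_{2n|m}$ by the $\dd{z^j}$ and the $\dd{\theta^k}$, while the $(-i)$-eigensheaf is spanned by the $\dd{\oz^j}$, and commutation is equivalent to $dT$ preserving this splitting. Since both charts are holomorphic split coordinates, the reduced transition $\varphi'\circ\varphi^{-1}$ is a biholomorphism and the two holomorphic frames of $E$ differ by a transition matrix whose entries are holomorphic functions of the even coordinates; consequently $T$ carries the even coordinates to holomorphic functions of $z$ and the odd generators to combinations of the $\theta^l$ with holomorphic coefficients. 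Expanding the half-index components $dT^i_{\phantom{i}k}$ as in the differential calculus set up above, every antiholomorphic derivative $\dd{\oz^j}$ of these coefficients vanishes, so $dT$ maps each eigensheaf into itself. The main obstacle I anticipate is the bookkeeping in exactly this step: because the holomorphic frame change depends on the even coordinates, $dT$ acquires mixed terms sending even directions to odd ones, and one must verify that these terms carry only holomorphic coefficients and hence still respect the eigensheaf decomposition rather than spoiling it. This is the precise point where holomorphicity of both $M$ and $E$ enters, and it is what makes the construction canonical.
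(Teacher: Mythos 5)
Your proposal is correct, and its skeleton coincides with the paper's proof: $j^2=-\id$ is dismissed as immediate, well-definedness is reduced to showing that the differential of a transition $\Phi_{12}=\Phi_1\circ\Phi_2^{-1}$ between holomorphic split charts commutes with $j_{n|m}$ (the paper's (\ref{eqnSuperTransition})), and the gluing is completed by the identical conjugation one-liner. Where you genuinely differ is in how that commutation is verified. The paper works in the real basis $\left(\dd{x^i},\dd{y^i},\dd{\theta^k}\right)$: it expands $d\Phi_{12}\circ j_{n|m}$ and $j_{n|m}\circ d\Phi_{12}$ on $\dd{x^i}$, compares coefficients via Lem.~\ref{lemVectorFieldsAlongPhi}, and identifies the resulting conditions as precisely the Cauchy--Riemann equations for the base transition $(a^k,b^k)$ and for the bundle transition matrix $D$. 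You instead decompose the tangent sheaf into the $\pm i$-eigensheaves of $j_{n|m}$ --- legitimate here because the structure sheaves are complex (the CS convention the paper adopts), so $i$ is a scalar --- and argue that $d\Phi_{12}$ preserves the splitting because all transition data are holomorphic. The two verifications are equivalent, since the Cauchy--Riemann equations are exactly the vanishing of the antiholomorphic derivatives your argument invokes; yours is more conceptual and explains at once why checking $\dd{y^i}$ and $\dd{\theta^k}$ yields nothing new, while the paper's is more elementary and displays explicitly where each hypothesis enters. One point in your final step should be stated more carefully: preservation of the $(-i)$-eigensheaf is not only a matter of ``antiholomorphic derivatives of the coefficients vanishing''; writing $w^k=a^k+ib^k$, you also need that $\phi_{12}(\overline{w}^k)$ carries no odd part, i.e.\ that a split transition pulls even coordinates back to $\theta$-free functions (and pulls $\theta^k$ back to $D^k_{\phantom{k}l}\theta^l$), since otherwise $d\Phi_{12}\left(\dd{\theta^m}\right)$ could acquire a component along the antiholomorphic directions. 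You do invoke exactly this special form of split transitions --- the same fact the paper records before its computation --- so the argument closes.
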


\begin{proof}
The property $j^2=-\id$ is obvious, so it remains to show that the prescription is well-defined, i.e.
does not depend on $\Phi$. Let $\Phi_1$ and $\Phi_2$ be two holomorphic split charts and denote the induced
transition morphism by
\begin{align*}
\Phi_{12}:=\Phi_1\circ\Phi_2^{-1}=(\varphi,\phi):(\bR^{2n},\mO_{2n|m})\rightarrow(\bR^{2n},\mO_{2n|m})
\end{align*}
We claim that
\begin{align}
\label{eqnSuperTransition}
d\Phi_{12}\circ j_{n|m}=j_{n|m}\circ d\Phi_{12}
\end{align}
holds. Then well-definedness follows immediately by the following one-liner.
\begin{align*}
d\Phi_2^{-1}\circ j_{n|m}\circ d\Phi_2=d\Phi_1^{-1}\circ d\Phi_{12}\circ j_{n|m}\circ d\Phi_{12}^{-1}\circ d\Phi_1
=d\Phi_1^{-1}\circ j_{n|m}\circ d\Phi_1
\end{align*}
We show (\ref{eqnSuperTransition}). By definition, $\Phi_{12}=(\varphi_{12},\phi_{12})$ is composed
of a holomorphic transition map
$\varphi_{12}=(a^1,b^1,\ldots,a^n,b^n):\bR^{2n}\rightarrow\bR^{2n}$ and a sheaf morphism $\phi_{12}$ that corresponds to a holomorphic transition map
$D\in C^{\infty}(\bR^{2n},GL(m,\bC))$ of the bundle $E$ such that $\phi(\theta^k)=D^k_{\phantom{k}l}\cdot\theta^l$,
upon writing $D$ in matrix form. We calculate
\begin{align*}
d\Phi_{12}\circ j_{n|m}\left(\dd{x^i}\right)&=d\Phi_{12}\circ\dd{y^i}=\dd{y^i}\circ\phi\\
&=\left(\phi\circ\dd{a^k}\right)\dd[\phi(a^k)]{y^i}+\left(\phi\circ\dd{b^k}\right)\dd[\phi(b^k)]{y^i}
-\left(\phi\circ\dd{\theta^k}\right)\dd[\phi(\theta^k)]{y^i}\\
&=\left(\phi\circ\dd{a^k}\right)\dd[a^k]{y^i}+\left(\phi\circ\dd{b^k}\right)\dd[b^k]{y^i}
-\left(\phi\circ\dd{\theta^k}\right)\dd[D^k_{\phantom{k}l}]{y^i}\theta^l
\end{align*}
and
\begin{align*}
&j_{n|m}\circ d\Phi_{12}\left(\dd{x^i}\right)=j_{n|m}\circ\left(\dd{x^i}\circ\phi\right)\\
&\qquad\qquad=j_{n|m}\circ\left(\left(\phi\circ\dd{a^k}\right)\dd[a^k]{x^i}+\left(\phi\circ\dd{b^k}\right)\dd[b^k]{x^i}
-\left(\phi\circ\dd{\theta^k}\right)\dd[D^k_{\phantom{k}l}]{x^i}\theta^l\right)\\
&\qquad\qquad=\left(\phi\circ j_{n|m}\dd{a^k}\right)\dd[a^k]{x^i}+\left(\phi\circ j_{n|m}\dd{b^k}\right)\dd[b^k]{x^i}
-\left(\phi\circ j_{n|m}\dd{\theta^k}\right)\dd[D^k_{\phantom{k}l}]{x^i}\theta^l\\
&\qquad\qquad=\left(\phi\circ\dd{b^k}\right)\dd[a^k]{x^i}-\left(\phi\circ\dd{a^k}\right)\dd[b^k]{x^i}
-\left(\phi\circ\dd{\theta^k}\right)i\cdot\dd[D^k_{\phantom{k}l}]{x^i}\theta^l
\end{align*}
Comparing coefficients (which is feasible by Lem. \ref{lemVectorFieldsAlongPhi}), we find
that (\ref{eqnSuperTransition}) applied to $\dd{x^i}$ holds if and only if
\begin{align*}
\dd[a^k]{y^i}=-\dd[b^k]{x^i}\;,\qquad\dd[b^k]{y^i}=\dd[a^k]{x^i}\;,\qquad\dd[D^k_{\phantom{k}l}]{y^i}=i\dd[D^k_{\phantom{k}l}]{x^i}
\end{align*}
The first two equations are just the Cauchy-Riemann equations for $\varphi$
and, writing $D^k_{\phantom{k}l}$ in the form $D^k_{\phantom{k}l}=:A+iB$, the third equation is equivalent to
the Cauchy-Riemann equations $\dd[A]{y^i}=-\dd[B]{x^i}$ and $\dd[B]{y^i}=\dd[A]{x^i}$ for $D$.
(\ref{eqnSuperTransition}) applied to $\dd{\theta^i}$
is automatically satisfied while considering $\dd{y^i}$ does not lead to any new conditions.
Since $(\dd{x^i},\dd{y^i},\dd{\theta^i})$ is a (local) basis of the super tangent sheaf,
(\ref{eqnSuperTransition}) is thus established.
\end{proof}

\subsubsection*{Maps with Flesh}

Morphisms $\Phi=(\varphi,\phi):(M,\mO_M)\rightarrow(X,\mO_X)$ of supermanifolds are even.
On the other hand, one is interested in component fields of $\phi$ which are odd.
This apparent contradiction is solved by considering maps with flesh, i.e. morphisms
$(M,\mO_M)\times(B,\mO_B)\rightarrow(X,\mO_X)$ where $(B,\mO_B)$ is another supermanifold.
Consult Ref.~\onlinecite{Hel09} as well as Refs.~\onlinecite{Khe05,DF99b} for details.
In the following we fix, for simplicity, $N\in\bN$ and let $B:=\bC^{0|N}$ be the superpoint.
By construction, the supermanifold with flesh associated to a split supermanifold $M_E$, which corresponds
to a complex vector bundle $E\rightarrow M$, is split with respect to the bundle
$E\oplus\underline{\bC}^N\rightarrow M$, where $\underline{\bC}^N$ denotes the trivial bundle:
\begin{align*}
M_E\times\bC^{0|N}=(M,\mO_E)\times(\{\mathrm{point}\},\mO_{\underline{\bC}^N})\cong (M,\mO_{E\oplus\underline{\bC}^N})
\end{align*}

If $(x^1,\ldots,x^n,\theta^1,\ldots,\theta^m)$ are local supercoordinates of a supermanifold
$(M,\mO_M)$ and $\eta^1,\ldots,\eta^N$ are generators of $\bigwedge\bC^N$, then the associated
supermanifold with flesh has local supercoordinates $(x^1,\ldots,x^n,\theta^1,\ldots,\theta^m,\eta^1,\ldots,\eta^N)$.
We define its \emph{super tangent sheaf with flesh} to be the subsheaf of the original super tangent sheaf as follows.
\begin{align*}
\mS_F M:=\mathrm{span}_{\mO_M\times\bC^{0|N}}\left(\dd{x^1},\ldots,\dd{x^n},\dd{\theta^1},\ldots,\dd{\theta^m}\right)
\subseteq\mS\left((M,\mO_M)\times\bC^{0|N}\right)
\end{align*}

\begin{Def}
A \emph{map with flesh} $\Phi_F:(M,\mO_M)\rightarrow(X,\mO_X)$
is a morphism of supermanifolds $\Phi:(M,\mO_M)\times\bC^{0|N}\rightarrow(X,\mO_X)$. The differential
$d\Phi_F$ of a map with flesh is the ordinary differential restricted to $\mS_F M$:
\begin{align*}
d\Phi_F:=d\Phi|_{\varphi_*\mS_FM}:\varphi_*\mS_FM\rightarrow\mS\Phi
\end{align*}
\end{Def}

Let $E\in\mathrm{End}_{\mO_M}(\mS M)$ be an endomorphism of the (original) super tangent sheaf.
By superlinear extension over $\bigwedge\bC^N$, we obtain a unique extension to an endomorphism
on the super tangent sheaf with flesh, denoted $E_F\in\mathrm{End}_{\mO_M\otimes\bigwedge\bC^N}(\mS_F M)$.
In particular, this applies to almost complex structures $J$. Note that
the analogous construction works for general tensors on $\mS M$.
On the other hand, let $\Phi_F:(M,\mO_M)\rightarrow(X,\mO_X)$ be a map with flesh and
$E\in\mathrm{End}_{\mO_X}(\mS X)$ and $B\in\Hom_{\mO_X}(\mS X\otimes_{\mO_X}\mS X,\mO_X)$ be an
endomorphism and a bilinear form on the super tangent sheaf of the target supermanifold, respectively.
As in Lem. \ref{lemSuperPullback}, one may consider the pullback tensors $E_{\Phi}$ and $B_{\Phi}$,
which are defined by the associated morphism of supermanifolds
$\Phi:(M,\mO_M)\times\bC^{0|N}\rightarrow(X,\mO_X)$.

\begin{Def}
\label{defHolomorphicFleshMap}
Let $J^M$ and $J^X$ be almost complex structures on the supermanifolds $(M,\mO_M)$ and $(X,\mO_X)$, respectively.
Then a \emph{holomorphic map with flesh} is a map with flesh $\Phi_F:(M,\mO_M)\rightarrow(X,\mO_X)$ such that
\begin{align*}
\dbar\Phi_F:=\frac{1}{2}\left(d\Phi_F+J^X_{\Phi}\circ d\Phi_F\circ J^M_F\right)=0
\in\Hom_{\varphi_*(\mO_M\otimes\bigwedge\bC^N)}(\varphi_*S_F M,\mS\Phi)
\end{align*}
\end{Def}

Now we consider maps with flesh having an ordinary smooth manifold $X$ as target space.
In general, for every morphism $\Phi=(\varphi,\phi):(\bR^n,\mO_{n|m})\rightarrow X$,
there are unique vector fields $\xi_I\in\Gamma(\bR^n, \varphi^*T^{\bC}X)$
such that $\phi(f)=e^{\hat{\Xi}}f$ where $\hat{\Xi}$ is an extension (which is not unique)
of $\Xi:=\sum_I\xi_I\eta^I$ to a Graßmann valued vector field along $\pi:\bR^n\times X\rightarrow X$,
the $\eta^j$ are generators of $\bigwedge\bC^m$ and $I$ is some multiindex.
This follows from the proof of Thm. 1.1 in Ref.~\onlinecite{Hel08}.
If $M_E=(M,\mO_E)$ is a split supermanifold and $\Phi:M_E\rightarrow X$ is a morphism of
supermanifolds then, passing to local supercoordinates, the sections thus obtained
fit together to a global vector field $\Xi\in\Gamma(M,\varphi^*T^{\bC}X\otimes_{\bC}\bigwedge E)$,
and accordingly if we consider maps with flesh $\Phi_F:M_E\rightarrow X$.
The following result is a special case of this general principle. The direct proof
resembles a calculation in Sec. 4.1.1 of Ref.~\onlinecite{Hel09}.

\begin{Prp}
\label{prpMorphismStructure}
Let $\Sigma_L$ denote the split supermanifold associated to a complex line bundle $L\rightarrow\Sigma$
over a Riemann surface $\Sigma$ and $X$ be an ordinary manifold. Let
\begin{align*}
\Phi=(\varphi,\phi):\Sigma_L\times\bC^{0|2}\cong(M,\mO_{L\oplus\underline{\bC}^2})\rightarrow X
\end{align*}
be a morphism of supermanifolds (i.e. a map with flesh such that $N=2$). Then there are sections
$\xi\in\Gamma(\Sigma,\varphi^*T^{\bC}X)$ and $\psi_1,\psi_2\in\Gamma(\Sigma,L\otimes_{\bC}\varphi^*T^{\bC}X)$
such that
\begin{align}
\label{eqnMorphismStructure}
\phi=(\varphi^*+\eta^1\eta^2\xi)+(\eta^1\psi_1+\eta^2\psi_2)=:\phi_0+\psi
\end{align}
where $\xi$ acts on $f\in C^{\infty}(X,\bC)$ by (the complex linear extension of) $\xi(f):=df[\xi]$
and analogous for $\psi_1$ and $\psi_2$.
The correspondence $\Phi\cong(\varphi,\psi_1,\psi_2,\xi)$ is bijective.
\end{Prp}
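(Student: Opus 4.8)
The plan is to compute the sheaf morphism $\phi$ directly in local supercoordinates and read off the component fields, using crucially that the target $X$ is an ordinary manifold, so that $\mO_X=C^{\infty}(X,\bC)$ is purely even. A morphism $\Phi=(\varphi,\phi)$ is then nothing but a smooth map $\varphi$ together with an even morphism $\phi:C^{\infty}(X,\bC)\rightarrow\varphi_*\mO_{L\oplus\underline{\bC}^2}$ of sheaves of $\bC$-superalgebras whose body (reduction modulo nilpotents) is $\varphi^*$. Over a sufficiently small $U\subseteq\Sigma$ I would fix a nonvanishing holomorphic section $\theta$ of $L$, so that the odd generators of $\mO_M$ on $U$ are $\theta,\eta^1,\eta^2$. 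Since $\phi$ is even and each $f\in C^{\infty}(X,\bC)$ is even, $\phi(f)$ lands in the even part of the Graßmann algebra on these three generators, which is spanned by the four monomials $1,\theta\eta^1,\theta\eta^2,\eta^1\eta^2$. Hence necessarily
\begin{align*}
\phi(f)=\varphi^*(f)+(\theta\eta^1)\,a_1(f)+(\theta\eta^2)\,a_2(f)+(\eta^1\eta^2)\,b(f)
\end{align*}
for $\bC$-linear local maps $a_1,a_2,b:C^{\infty}(X,\bC)\rightarrow C^{\infty}(U,\bC)$. The step I expect to require the most conceptual care is precisely this one: in the general principle of Ref.~\onlinecite{Hel08} the exponential $e^{\hat{\Xi}}$ genuinely contributes higher jets of $f$, but here the presence of only three odd generators leaves no room for higher monomials, which is what collapses everything to first order.

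The heart of the argument is then to extract this first-order behaviour from the algebra-morphism property $\phi(fg)=\phi(f)\phi(g)$. I would first observe that any product of two of the degree-two monomials $\theta\eta^1,\theta\eta^2,\eta^1\eta^2$ vanishes, since four generator slots cannot be filled without repetition from only three generators. Expanding both sides and comparing coefficients of the four monomials yields, on the one hand, $\varphi^*(fg)=\varphi^*(f)\varphi^*(g)$, confirming that $\varphi^*$ is pullback along a smooth map $\varphi$, and on the other hand the Leibniz rules
\begin{align*}
a_j(fg)=\varphi^*(f)\,a_j(g)+a_j(f)\,\varphi^*(g)\,,\qquad b(fg)=\varphi^*(f)\,b(g)+b(f)\,\varphi^*(g)
\end{align*}
so that $a_1,a_2,b$ are $\varphi$-derivations. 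Invoking the standard fact that a $\varphi$-derivation $C^{\infty}(X,\bC)\rightarrow C^{\infty}(U,\bC)$ is represented by a unique section of $\varphi^*T^{\bC}X$, I obtain vector fields realising $b(f)=df[\xi]$ and $a_j(f)=df[v_j]$. Reinserting $\theta$ and tracking the sign $\eta^j\theta=-\theta\eta^j$, I set $\psi_j:=\theta\cdot\psi_{j\theta}$ with $\psi_{j\theta}:=-v_j$, which produces exactly $\phi=(\varphi^*+\eta^1\eta^2\xi)+(\eta^1\psi_1+\eta^2\psi_2)$.

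Next I would check that the locally defined fields glue to global sections. Since $\eta^1,\eta^2$ are global generators of the flesh, the $\eta^I$-graded pieces of $\phi$ are intrinsic; the coefficient $\xi$ involves no power of $\theta$ and is therefore directly a global section of $\varphi^*T^{\bC}X$, whereas under a change of holomorphic frame $\theta\mapsto\rho\theta$ one has $\psi_{j\theta}\mapsto\rho^{-1}\psi_{j\theta}$, so that the products $\psi_j=\theta\cdot\psi_{j\theta}$ are frame-independent and define global sections of $L\otimes_{\bC}\varphi^*T^{\bC}X$, exactly as in the transformation behaviour used for (\ref{eqnInducedSection}). This is routine bookkeeping given the preceding identifications.

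Finally, for bijectivity I would argue in both directions. The decomposition above is unique, because the four monomials are linearly independent over functions and the vector-field representation of a $\varphi$-derivation is unique; hence $\Phi\mapsto(\varphi,\psi_1,\psi_2,\xi)$ is well-defined and injective. Conversely, given a tuple $(\varphi,\psi_1,\psi_2,\xi)$, I would define $\phi$ locally by the stated formula and verify that it is an even morphism of sheaves of $\bC$-superalgebras: evenness is clear, compatibility with restrictions is immediate, and multiplicativity $\phi(fg)=\phi(f)\phi(g)$ follows from the derivation property of $\xi$ and $\psi_{j\theta}$ together with the vanishing of all products of the degree-two monomials established above, while frame-independence guarantees that these local morphisms patch to a global one. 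The two constructions are mutually inverse by uniqueness of the monomial decomposition, which establishes the claimed bijection.
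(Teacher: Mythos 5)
Your proof is correct, but it takes a genuinely different route from the paper's. The paper does not write out a self-contained argument: it obtains Prp.~\ref{prpMorphismStructure} as a special case of the representation formula $\phi(f)=e^{\hat{\Xi}}f$ of Ref.~\onlinecite{Hel08} (Thm.~1.1), together with the gluing remark for split supermanifolds, and merely notes that a direct proof resembles a calculation in Sec.~4.1.1 of Ref.~\onlinecite{Hel09}. You instead prove the decomposition from scratch: evenness of $\phi$ together with the fact that the even part of the Graßmann algebra on the three odd generators $\theta,\eta^1,\eta^2$ is spanned by $1,\theta\eta^1,\theta\eta^2,\eta^1\eta^2$ forces the ansatz; multiplicativity of $\phi$ together with the vanishing of all products of two degree-two monomials forces the coefficients to be $\varphi$-derivations; and the standard identification of $\varphi$-derivations with sections of $\varphi^*T^{\bC}X$ then yields the component fields, with the sign from $\eta^j\theta=-\theta\eta^j$ and the frame-transformation argument handled correctly. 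Your route buys a self-contained elementary proof, a transparent explanation of why the exponential of Ref.~\onlinecite{Hel08} truncates at first order here (nilpotency with only three odd generators kills all iterated terms), and a clean uniqueness statement giving bijectivity; the paper's route is shorter and exhibits the result as a special case of a principle valid for arbitrarily many odd generators, where the exponential genuinely contributes higher-order jet terms and no such truncation occurs. One small correction: the proposition assumes only that $L$ is a complex line bundle, so you should take $\theta$ to be an arbitrary nonvanishing smooth local section (these provide the split supercoordinates); requiring $\theta$ to be holomorphic is unnecessary for this statement and presupposes a holomorphic structure on $L$ that the hypotheses do not provide.
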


\subsection{Holomorphic Supercurves}

In the rest of this section, we define holomorphic supercurves using the supermanifold theory
developed so far and prove equivalence of this definition with the original ad-hoc
Definition \ref{defHolomorphicSupercurveAdhoc}.
Let $L\rightarrow\Sigma$ be a holomorphic line bundle over a Riemann surface $\Sigma$
and $\Sigma_L$ be the associated split supermanifold.
By Lem. \ref{lemAlmostComplexStructure}, it carries a canonical almost complex structure $j$
which, in holomorphic split supercoordinates $(s,t,\theta)$, is given by
\begin{align}
\label{eqnAlmostComplexStructure}
j\left(\dd{s}\right)=\dd{t}\;,\qquad j\left(\dd{t}\right)=-\dd{s}\;,\qquad
j\left(\dd{\theta}\right)=i\cdot\dd{\theta}
\end{align}
On the other hand, let $(X,J)$ be an ordinary almost complex manifold.
We consider maps with flesh $\Phi_F:\Sigma_L\rightarrow X$ with respect to the superpoint with $N=2$ odd dimensions.

\begin{Def}
\label{defHolomorphicSupercurve}
A \emph{holomorphic supercurve} is a map with flesh $\Phi_F:\Sigma_L\rightarrow X$
which is holomorphic with respect to $j$ and $J$ in the sense of Def. \ref{defHolomorphicFleshMap}, i.e.
which satisfies $\dbar_J\Phi_F=0$.
\end{Def}

\begin{Lem}
\label{lemHolomorphicSupercurveLocal}
A map with flesh $\Phi_F:\Sigma_L\rightarrow X$ is a holomorphic supercurve if and only if,
upon identification with a tuple $(\varphi,\psi_1,\psi_2,\xi)$ as in Prp. \ref{prpMorphismStructure},
the following condition is satisfied. Let $(s,t,\theta)$ and $\{x^i\}$ denote
holomorphic split supercoordinates on $\Sigma_L$ and coordinates on $X$, respectively. Then
\begin{align*}
&\dd[\varphi^i]{s}+J^i_{\phantom{i}k}\cdot\dd[\varphi^k]{t}=0\;,\qquad
\psi_{\theta}^i+i\,J^i_{\phantom{i}k}\cdot\psi_{\theta}^k=0\;,\qquad
\psi_{\theta}^l\psi_{\theta}^k\dd{x^l}(J^i_{\phantom{i}k})=0\;,\\
&\dd[\psi_{\theta}^i]{s}+J^i_{\phantom{i}k}\cdot\dd[\psi_{\theta}^k]{t}
+\psi_{\theta}(J^i_{\phantom{i}k})\cdot\dd[\varphi^k]{t}=0\;,\qquad
\dd[\xi^i]{s}+J^i_{\phantom{i}k}\cdot\dd[\xi^k]{t}
+\xi(J^i_{\phantom{i}k})\cdot\dd[\varphi^k]{t}=0
\end{align*}
holds true. Here, $\psi_{1\theta},\psi_{2\theta}\in\Gamma(U,\varphi^*T^{\bC}X)$ are such that
$\psi_j=\theta\cdot\psi_{j\theta}$ and, moreover, we prescribe
$\psi_{\theta}:=\eta^1\psi_{1\theta}+\eta^2\psi_{2\theta}$ and
abbreviate $J\circ\varphi$ by $J$.
\end{Lem}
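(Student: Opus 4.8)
The plan is to evaluate the single tensorial equation $\dbar_J\Phi_F=0$ of Def.~\ref{defHolomorphicFleshMap} on a local frame of the super tangent sheaf with flesh and to read off its homogeneous components. Since $\mS_F M$ is spanned over $\mO_M\otimes\bigwedge\bC^2$ by $\dd{s},\dd{t},\dd{\theta}$ and $\dbar_J\Phi_F$ is $\varphi_*(\mO_M\otimes\bigwedge\bC^2)$-superlinear, the condition $\dbar_J\Phi_F=0$ is equivalent to its vanishing on these three generators. Applying the defining formula $\dbar_J\Phi_F(Y)=\frac12(d\Phi_F(Y)+J^X_\Phi(d\Phi_F(J^M_F Y)))$ together with (\ref{eqnAlmostComplexStructure}), one checks that $\dbar_J\Phi_F(\dd{t})=-J^X_\Phi\,\dbar_J\Phi_F(\dd{s})$, using $(J^X_\Phi)^2=-\id$; hence the $\dd{s}$- and $\dd{t}$-conditions are equivalent and it suffices to treat $\dd{s}$ and $\dd{\theta}$, yielding two vector-valued equations along $\Phi$.

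Next I would expand both in the frame $(\phi\circ\dd{x^i})$ of $\mS\Phi$, which is a basis by Lem.~\ref{lemVectorFieldsAlongPhi}. Using the local differential formula $d\Phi_F(\dd{\xi^k})=(\phi\circ\dd{x^i})\,\dd[\phi(x^i)]{\xi^k}$ and the pullback rule $J^X_\Phi(\phi\circ\dd{x^k})=(\phi\circ\dd{x^i})\,\phi(J^i_k)$ from Lem.~\ref{lemSuperPullback}, the $\dd{s}$-equation reads componentwise
\begin{align*}
\dd[\phi(x^i)]{s}+\phi(J^i_k)\,\dd[\phi(x^k)]{t}=0\,,
\end{align*}
while the $\dd{\theta}$-equation becomes $\psi_{\theta}^i+i\,\phi(J^i_k)\,\psi_{\theta}^k=0$, where I have used $\dd[\phi(x^i)]{\theta}=\psi_{\theta}^i$ (up to a sign from reordering $\theta$ and the flesh generators). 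Into these I substitute the component decomposition $\phi(x^i)=\varphi^i+\eta^1\eta^2\xi^i+\theta\psi_{\theta}^i$ of Prp.~\ref{prpMorphismStructure} and Taylor-expand $\phi(J^i_k)=J^i_k(\varphi)+\nu^l\,\dd[J^i_k]{x^l}(\varphi)+\tfrac12\nu^l\nu^m\,\partial_l\partial_m J^i_k+\cdots$, where $\nu^l:=\eta^1\eta^2\xi^l+\theta\psi_{\theta}^l$ collects the nilpotent part.

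The decisive simplification is that there are only three odd generators $\theta,\eta^1,\eta^2$: every monomial in $\nu^l\nu^m$ has total odd degree four and therefore vanishes, so only the constant and linear Taylor terms of $\phi(J^i_k)$ survive. A short computation gives the single nontrivial contraction $\nu^l\psi_{\theta}^k=\theta\eta^1\eta^2(\psi_{1\theta}^l\psi_{2\theta}^k-\psi_{2\theta}^l\psi_{1\theta}^k)$, which is exactly where the antisymmetric, Nijenhuis-type structure enters. I would then collect coefficients with respect to the monomial basis $\{1,\theta\eta^1,\theta\eta^2,\eta^1\eta^2,\theta\eta^1\eta^2\}$. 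From the $\dd{s}$-equation, the coefficient of $1$ yields $\dbar_J\varphi=0$, the coefficients of $\theta\eta^1$ and $\theta\eta^2$ assemble to the stated equation for $\psi_{\theta}$, and the coefficient of $\eta^1\eta^2$ yields the equation for $\xi$; from the $\dd{\theta}$-equation, the coefficients of $\eta^1,\eta^2$ give $\psi_{\theta}^i+iJ^i_k\psi_{\theta}^k=0$, and the coefficient of $\theta\eta^1\eta^2$ reproduces $\psi_{\theta}^l\psi_{\theta}^k\,\dd{x^l}(J^i_k)=0$.

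I expect the main obstacle to be purely bookkeeping: consistently tracking the Koszul signs when commuting $\theta,\eta^1,\eta^2$ past one another, and correctly handling the right- versus left-superlinearity of $d\Phi_F$ and of the pullback $J^X_\Phi$ when the coefficients $\psi_{\theta}^k$ are themselves odd. Once the conventions are fixed—in particular the placement of $\theta$ relative to the flesh generators in $\theta\psi_{\theta}$—the identification of the five displayed equations as the homogeneous components of the two frame equations is automatic, and since the monomials are linearly independent (and the factor $i\ne 0$ in the $\theta\eta^1\eta^2$-part is irrelevant to vanishing) the equivalence is genuinely an if-and-only-if.
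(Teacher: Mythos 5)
Your proposal is correct and follows essentially the same route as the paper's proof: evaluate $\dbar_J\Phi_F$ on the frame $\dd{s},\dd{\theta}$ (discarding $\dd{t}$ as redundant), expand in the $\mS\Phi$-basis $\{\phi\circ\dd{x^i}\}$ via Lem.~\ref{lemVectorFieldsAlongPhi} and Lem.~\ref{lemSuperPullback}, substitute the component decomposition of Prp.~\ref{prpMorphismStructure}, and sort by Gra{\ss}mann monomials. The only cosmetic differences are that you justify the redundancy of the $\dd{t}$-equation by the explicit identity $\dbar_J\Phi_F(\dd{t})=-J^X_\Phi\,\dbar_J\Phi_F(\dd{s})$ (where the paper refers back to Lem.~\ref{lemAlmostComplexStructure}), and that you re-derive the truncation of $\phi(J^i_{\phantom{i}k})$ by a Taylor expansion in the nilpotent part, which Prp.~\ref{prpMorphismStructure} already packages exactly.
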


\begin{proof}
Since $(\dd{s}$,$\dd{t}$,$\dd{\theta})$ is a (local) basis of $\mS_F\Sigma$,
it is clear that $\dbar_J\Phi_F=0$ vanishes
(restricted to the domain of the coordinates) if and only if $\dbar_J\Phi_F$,
individually applied to each basis vector, vanishes.
As in the proof of Lem. \ref{lemAlmostComplexStructure},
it does moreover suffice to consider only $\dd{s}$ and $\dd{\theta}$. Using half-index notation
$\Phi_F$ is, therefore, a holomorphic supercurve if and only if
\begin{align}
\label{eqnHolomorphicSupercurveLocal}
d\Phi_{\phantom{i}s}+J_{\Phi}\circ d\Phi_{\phantom{i}t}=0\;,\qquad
d\Phi_{\phantom{i}\theta}+i\,J_{\Phi}\circ d\Phi_{\phantom{i}\theta}=0
\end{align}
holds. We calculate the second equation of (\ref{eqnHolomorphicSupercurveLocal}), using
Lem. \ref{lemSuperPullback} and Prp. \ref{prpMorphismStructure}.
\begin{align*}
d\Phi_{\phantom{i}\theta}=\left(\phi\circ\dd{x^i}\right)\cdot\dd[\phi(x^i)]{\theta}
=\left(\phi\circ\dd{x^i}\right)\cdot\psi_{\theta}^i
\end{align*}
and
\begin{align*}
J_{\Phi}\circ d\Phi_{\phantom{i}\theta}
&=J_{\Phi}\left(\phi\circ\dd{x^k}\right)\cdot\psi_{\theta}^k
=\left(\phi\circ J\left(\dd{x^k}\right)\right)\cdot\psi_{\theta}^k
=\phi\circ\left(J^i_{\phantom{i}k}\dd{x^i}\right)\cdot\psi_{\theta}^k\\
&=\phi(J^i_{\phantom{i}k})\cdot\left(\phi\circ\dd{x^i}\right)\cdot\psi_{\theta}^k
=\left(\phi\circ\dd{x^i}\right)\cdot\phi(J^i_{\phantom{i}k})\cdot\psi_{\theta}^k
\end{align*}
Comparing coefficients of the $\mS\Phi$-basis $\{\phi\circ\dd{x^i}\}_i$, we conclude that
the second equation of (\ref{eqnHolomorphicSupercurveLocal}) is equivalent to
$\psi_{\theta}^i+i\,\phi(J^i_{\phantom{i}k})\cdot\psi_{\theta}^k=0$.
Sorting into terms with and without $\theta$, we thus yield
\begin{align*}
0=\psi_{\theta}^i+i\,\phi_0(J^i_{\phantom{i}k})\cdot\psi_{\theta}^k
=\psi_{\theta}^i+i\,J^i_{\phantom{i}k}\cdot\psi_{\theta}^k\;,\qquad
0=\psi_{\theta}(J^i_{\phantom{i}k})\cdot\psi_{\theta}^k
\end{align*}
where $\phi_0:=\varphi^*+\eta^1\eta^2\xi$ and, for the second equation, we used that $(\eta^1)^2=(\eta^2)^2=0$.
In other words, we have shown that
$d\Phi_{\phantom{i}\theta}+i\,J_{\Phi}\circ d\Phi_{\phantom{i}\theta}=0$ holds if and only if
\begin{align*}
\psi_{\theta}^i+i\,J^i_{\phantom{i}k}\cdot\psi_{\theta}^k=0\;,\qquad
\psi_{\theta}^l\psi_{\theta}^k\dd{x^l}(J^i_{\phantom{i}k})=0
\end{align*}
Similarly, we calculate $d\Phi_{\phantom{i}s}=\left(\phi\circ\dd{x^i}\right)\cdot\dd[\phi(x^i)]{s}$ and
\begin{align*}
J_{\Phi}\circ d\Phi_{\phantom{i}t}&=J_{\Phi}\left(\phi\circ\dd{x^k}\right)\cdot\dd[\phi(x^k)]{t}
=\phi\circ\left(J^i_{\phantom{i}k}\dd{x^i}\right)\cdot\dd[\phi(x^k)]{t}\\
&=\left(\phi\circ\dd{x^i}\right)\cdot\phi(J^i_{\phantom{i}k})\cdot\dd[\phi(x^k)]{t}
\end{align*}
The first equation of (\ref{eqnHolomorphicSupercurveLocal}) is thus
equivalent to $\dd[\phi(x^i)]{s}+\phi(J^i_{\phantom{i}k})\cdot\dd[\phi(x^k)]{t}=0$.
Setting $\phi_0^i:=\phi_0(x^i)$ and sorting into terms without and with $\theta$, we yield
\begin{align*}
\dd[\phi_0^i]{s}+\phi_0(J^i_{\phantom{i}k})\cdot\dd[\phi_0^k]{t}=0\;,\qquad
\dd[\psi_{\theta}^i]{s}+J^i_{\phantom{i}k}\cdot\dd[\psi_{\theta}^k]{t}
+\psi_{\theta}(J^i_{\phantom{i}k})\cdot\dd[\phi_0^k]{t}=0
\end{align*}
Further sorting into terms with and without $\eta$-terms, we see that
$d\Phi_{\phantom{i}s}+\,J_{\Phi}\circ d\Phi_{\phantom{i}t}=0$ holds if and only if
\begin{align*}
0&=\dd[\varphi^i]{s}+J^i_{\phantom{i}k}\cdot\dd[\varphi^k]{t}\;,\qquad
0=\dd[\xi^i]{s}+J^i_{\phantom{i}k}\cdot\dd[\xi^k]{t}
+\xi(J^i_{\phantom{i}k})\cdot\dd[\varphi^k]{t}\\
0&=\dd[\psi_{\theta}^i]{s}+J^i_{\phantom{i}k}\cdot\dd[\psi_{\theta}^k]{t}
+\psi_{\theta}(J^i_{\phantom{i}k})\cdot\dd[\varphi^k]{t}
\end{align*}
is satisfied, which concludes the proof.
\end{proof}

We shall bring the local conditions from Lem. \ref{lemHolomorphicSupercurveLocal} into a more concise form next.

\begin{Lem}
\label{lemNijenhuisPsi}
Let $\Phi_F:\Sigma_L\rightarrow X$ be a map with flesh, $(s,t,\theta)$ and $\{x^i\}$ be as in
Lem. \ref{lemHolomorphicSupercurveLocal} and assume that
$\psi_{\theta}^i+i\,J^i_{\phantom{i}k}\cdot\psi_{\theta}^k=0$ holds. Then
\begin{align*}
\psi_{\theta}^l\psi_{\theta}^k\dd{x^l}(J^i_{\phantom{i}k})
=-\frac{i}{2}\eta^1\eta^2 N_J^{\bC}(\psi_{1\theta},\psi_{2\theta})
\end{align*}
where $N_J^{\bC}$ denotes the complex linear extension of the Nijenhuis tensor.
\end{Lem}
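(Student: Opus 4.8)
The plan is to prove the identity by a direct local computation, expanding the Nijenhuis tensor in coordinates and using the hypothesis $\psi_\theta^i + i\,J^i_{\phantom{i}k}\psi_\theta^k = 0$ to simplify. Recall that for an almost complex structure $J$ with local matrix $J^i_{\phantom{i}k}$, the Nijenhuis tensor applied to coordinate vector fields has the standard expression
\begin{align*}
N_J(\partial_{x^a},\partial_{x^b})^i = J^l_{\phantom{l}a}\bigl(\partial_{x^l}J^i_{\phantom{i}b}\bigr)
- J^l_{\phantom{l}b}\bigl(\partial_{x^l}J^i_{\phantom{i}a}\bigr)
- J^i_{\phantom{i}l}\bigl(\partial_{x^a}J^l_{\phantom{l}b}\bigr)
+ J^i_{\phantom{i}l}\bigl(\partial_{x^b}J^l_{\phantom{l}a}\bigr).
\end{align*}
First I would contract this against $\psi_{1\theta}^a\psi_{2\theta}^b$ (the $\bC$-linear extension $N_J^{\bC}$ is exactly this contraction, since $\psi_{1\theta}$, $\psi_{2\theta}$ are complex vectors) and aim to match it to the left-hand side $\psi_\theta^l\psi_\theta^k\,\partial_{x^l}(J^i_{\phantom{i}k})$, after writing $\psi_\theta = \eta^1\psi_{1\theta}+\eta^2\psi_{2\theta}$ and extracting the coefficient of $\eta^1\eta^2$.

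\textbf{Exploiting the eigenvector condition.}
The hypothesis says each $\psi_{j\theta}$ is a $(+i)$-eigenvector-type section: $J\psi_{j\theta} = i\psi_{j\theta}$, i.e. $\psi_{j\theta} \in \varphi^*T^{1,0}X$. The key simplification is that $J^i_{\phantom{i}l}\psi_\theta^l = i\,\psi_\theta^i$ (and likewise for each index slot), so contractions of the form $J^i_{\phantom{i}l}(\cdots)\psi_\theta^l$ collapse to $i(\cdots)\psi_\theta^i$. I expect that the two ``derivative-of-$J$-transported-by-$J$'' terms in $N_J$ (those of the form $J^l_{\phantom{l}a}\partial_{x^l}J^i_{\phantom{i}b}$) survive after contraction, while the remaining two terms either combine with them or produce the factor that converts the raw expression $\psi_\theta^l\psi_\theta^k\partial_{x^l}J^i_{\phantom{i}k}$ into its antisymmetrised, $J$-twisted form. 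Because $\psi_\theta^l\psi_\theta^k = (\eta^1\psi_{1\theta}^l+\eta^2\psi_{2\theta}^l)(\eta^1\psi_{1\theta}^k+\eta^2\psi_{2\theta}^k)$ and $(\eta^1)^2=(\eta^2)^2=0$, the product automatically antisymmetrises: its $\eta^1\eta^2$-coefficient is $\psi_{1\theta}^l\psi_{2\theta}^k - \psi_{2\theta}^l\psi_{1\theta}^k = \psi_{1\theta}^l\psi_{2\theta}^k - \psi_{1\theta}^k\psi_{2\theta}^l$. This antisymmetrisation is exactly what is needed, and the constant $-\tfrac{i}{2}$ should emerge from the single factor of $i$ (from one eigenvector substitution) together with a factor of $\tfrac12$ from the antisymmetric versus full-index bookkeeping.

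\textbf{Carrying it out and the main obstacle.}
Concretely, I would substitute $J\psi_{j\theta}=i\psi_{j\theta}$ into each of the four terms of $N_J^{\bC}(\psi_{1\theta},\psi_{2\theta})$, turning every $J$ that meets a $\psi$-index into a factor $i$ and every $J$ that meets a free (output) index into the matrix $J^i_{\phantom{i}l}$. After this, terms of the form $\partial_{x^a}J^l_{\phantom{l}b}$ contracted with $\psi^a\psi^b$ and then with $J^i_{\phantom{i}l}$ on the output should be compared slot-by-slot against $\psi_\theta^l\psi_\theta^k\partial_{x^l}J^i_{\phantom{i}k}$. I expect to need the differentiated form of the compatibility $J^2=-\id$, namely $0 = \partial_{x^a}(J^i_{\phantom{i}l}J^l_{\phantom{l}k}) = (\partial_{x^a}J^i_{\phantom{i}l})J^l_{\phantom{l}k} + J^i_{\phantom{i}l}(\partial_{x^a}J^l_{\phantom{l}k})$, to convert between ``derivative on the first factor'' and ``derivative on the second factor'' of $J$, which is precisely the mechanism that relates the un-twisted expression $\partial_{x^l}J^i_{\phantom{i}k}$ to the $J$-twisted terms appearing in $N_J$. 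The main obstacle is purely bookkeeping: keeping the index placement, the order of the (anticommuting) $\psi_\theta$ factors, and the signs from both the Graßmann generators and the antisymmetry of $N_J$ consistent, so that the four Nijenhuis terms collapse cleanly to a single multiple of $\psi_\theta^l\psi_\theta^k\partial_{x^l}J^i_{\phantom{i}k}$ with the claimed coefficient $-\tfrac{i}{2}$. No conceptual difficulty is expected beyond this careful tracking, so the proof should reduce to a short but disciplined index calculation.
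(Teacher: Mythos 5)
Your proposal is correct and follows essentially the same route as the paper's proof: the paper likewise writes the Nijenhuis tensor in the local coordinate form you give, contracts it with the Gra{\ss}mann-valued field $\psi_\theta=\eta^1\psi_{1\theta}+\eta^2\psi_{2\theta}$ (so the anticommutativity $\psi_\theta^a\psi_\theta^b=-\psi_\theta^b\psi_\theta^a$ performs exactly the antisymmetrisation you describe), collapses the contractions via $J^a_{\phantom{a}b}\psi_\theta^b=i\psi_\theta^a$, and uses the differentiated identity $J^2=-\id$ to move the derivative off the $J$ attached to the output index, arriving at $N_J^{\bC}(\psi_\theta,\psi_\theta)^p=4i\,\psi_\theta^k\psi_\theta^j\partial_k(J^p_{\phantom{p}j})$ and then at the coefficient $-\tfrac{i}{2}$ by skew-symmetry of $N_J$. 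The only cosmetic difference is that the paper computes $N_J^{\bC}(\psi_\theta,\psi_\theta)$ with the full odd field in both slots and extracts $\eta^1\eta^2 N_J^{\bC}(\psi_{1\theta},\psi_{2\theta})$ at the very end, whereas you frame the contraction against $\psi_{1\theta}^a\psi_{2\theta}^b$ from the start; these bookkeeping choices are equivalent.
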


\begin{proof}
In the local frame $\{\partial_i:=\dd{x^i}\}$, the Nijenhuis tensor reads
\begin{align*}
N_{ij}^p=J^k_{\phantom{k}i}\partial_k(J^p_{\phantom{p}j})-J^m_{\phantom{m}j}\partial_m(J^p_{\phantom{p}i})
+J^p_{\phantom{p}k}\partial_j(J^k_{\phantom{k}i})-J^p_{\phantom{p}m}\partial_i(J^m_{\phantom{m}j})
\end{align*}
such that $N_{ij}^p\partial_p=N(\partial_i,\partial_j)$.
For the following calculation, we extend $N_J^{\bC}$ to a $\bigwedge\bC^2$-bilinear tensor as in
the discussion in the beginning of Sec. \ref{secSusySigmaModels}.
We use that, by assumption, $J^a_{\phantom{a}b}\psi_{\theta}^b=i\cdot\psi_{\theta}^a$ holds
and $\psi^a_{\theta}\cdot\psi^b_{\theta}=-\psi^b_{\theta}\cdot\psi^a_{\theta}$, to calculate
\begin{align*}
N_J^{\bC}(\psi_{\theta},\psi_{\theta})^p
&=\left(J^k_{\phantom{k}i}\partial_k(J^p_{\phantom{p}j})-J^m_{\phantom{m}j}\partial_m(J^p_{\phantom{p}i})
+J^p_{\phantom{p}k}\partial_j(J^k_{\phantom{k}i})-J^p_{\phantom{p}m}\partial_i(J^m_{\phantom{m}j})\right)\psi_{\theta}^i\psi_{\theta}^j\\
&=i\psi_{\theta}^k\psi_{\theta}^j\partial_k(J^p_{\phantom{p}j})-\psi_{\theta}^ii\psi_{\theta}^k\partial_k(J^p_{\phantom{p}i})
+J^p_{\phantom{p}k}\left(\psi_{\theta}^i\psi_{\theta}^j\partial_j(J^k_{\phantom{k}i})
-\psi_{\theta}^i\psi_{\theta}^j\partial_i(J^k_{\phantom{k}j})\right)\\
&=2i\psi_{\theta}^k\psi_{\theta}^j\partial_k(J^p_{\phantom{p}j})+2J^p_{\phantom{p}k}\psi_{\theta}^i\psi_{\theta}^j\partial_j(J^k_{\phantom{k}i})\\
&=2i\psi_{\theta}^k\psi_{\theta}^j\partial_k(J^p_{\phantom{p}j})-2\partial_j(J^p_{\phantom{p}k})J^k_{\phantom{k}i}\psi_{\theta}^i\psi_{\theta}^j\\
&=2i\psi_{\theta}^k\psi_{\theta}^j\partial_k(J^p_{\phantom{p}j})-2i\psi_{\theta}^k\psi_{\theta}^j\partial_j(J^p_{\phantom{p}k})\\
&=4i\psi_{\theta}^k\psi_{\theta}^j\partial_k(J^p_{\phantom{p}j})
\end{align*}
We thus obtain
\begin{align*}
\psi_{\theta}^l\psi_{\theta}^k\dd{x^l}(J^i_{\phantom{i}k})=-\frac{i}{4}N_J^{\bC}(\psi_{\theta},\psi_{\theta})^i
=-\frac{i}{4}\eta^1\eta^2\left(N_J^{\bC}(\psi_{1\theta},\psi_{2\theta})^i-N_J^{\bC}(\psi_{2\theta},\psi_{1\theta})^i\right)
\end{align*}
and the statement follows from skew-symmetry of $N_J$.
\end{proof}

\begin{Lem}
\label{lemDXiLocal}
Let $\varphi\in C^{\infty}(\Sigma,X)$ and $\xi\in\Gamma(\Sigma,\varphi^*T^{\bC}X)$,
and assume that $\dbar_J\varphi=0$. Let $(s,t)$ and $\{x^i\}$ be holomorphic coordinates on $\Sigma$ and
coordinates on $X$, respectively. Then
\begin{align*}
\dd[\xi^i]{s}+J^i_{\phantom{i}k}\cdot\dd[\xi^k]{t}
+\xi(J^i_{\phantom{i}k})\cdot\dd[\varphi^k]{t}=0
\end{align*}
holds if and only if $D^{\bC}_{\varphi}\xi=0$ (restricted to the domain of the coordinates),
where $D^{\bC}_{\varphi}$ denotes the complex linear extension of the linearised $\dbar_J$-operator
(\ref{eqnLinearisedDbar}).
\end{Lem}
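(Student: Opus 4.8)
The plan is to expand $D_\varphi^{\bC}\xi$ in the given coordinates and match it, term by term, against the stated equation. Since $D_\varphi\xi$ is a $(0,1)$-form with values in $\varphi^*TX$, it is completely determined by its value on $\partial_s$, because $(D_\varphi\xi)(\partial_t) = (D_\varphi\xi)(j[\partial_s]) = -J\,(D_\varphi\xi)(\partial_s)$. Hence $D_\varphi^{\bC}\xi = 0$ is equivalent to $(D_\varphi^{\bC}\xi)(\partial_s) = 0$, and it suffices to expand this single expression; passing to the complex linear extension changes nothing beyond allowing the $\xi^i$ to be complex valued, so I would work with the coordinate identity directly.

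Using the definition (\ref{eqnLinearisedDbar}), I would write $2\,(D_\varphi\xi)(\partial_s) = \nabla_{\partial_s}\xi + J\nabla_{\partial_t}\xi - J(\nabla_\xi J)\bigl(\partial_J\varphi(\partial_s)\bigr)$. Expanding the two connection terms in coordinates produces $\dd[\xi^i]{s} + J^i_{\phantom{i}k}\dd[\xi^k]{t}$ together with Christoffel contributions; this already reproduces the first two summands of the target. For the remaining term I would invoke holomorphicity: $\dbar_J\varphi = 0$ gives $\dd[\varphi^i]{s} = -J^i_{\phantom{i}k}\dd[\varphi^k]{t}$, equivalently $\partial_t\varphi = J\partial_s\varphi$, whence $\partial_J\varphi(\partial_s) = \partial_s\varphi$. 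Expanding $-J(\nabla_\xi J)(\partial_s\varphi)$ and using the identity $\xi(J^p_{\phantom{p}l})\,J^l_{\phantom{l}k} = -J^p_{\phantom{p}l}\,\xi(J^l_{\phantom{l}k})$, obtained by differentiating $J^2 = -\id$, together with $J^2 = -\id$ itself, the non-Christoffel part of this term collapses to exactly $\xi(J^i_{\phantom{i}k})\dd[\varphi^k]{t}$, which is the third summand.

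The crux is to show that \emph{all} Christoffel terms cancel, so that the connection-dependent operator $D_\varphi$ agrees with the connection-free coordinate expression of the lemma. I would collect the four Christoffel contributions: two arising from $\nabla_{\partial_s}\xi + J\nabla_{\partial_t}\xi$ and two from $-J(\nabla_\xi J)(\partial_s\varphi)$. Rewriting the latter two by means of $\partial_t\varphi = J\partial_s\varphi$ and $J^2 = -\id$ turns them into one contribution proportional to $\partial_s\varphi$ and one proportional to $\partial_t\varphi$. The cancellation then proceeds in two pairs, the $\partial_s\varphi$-terms against each other and the $\partial_t\varphi$-terms against each other, in both cases using the symmetry $\Gamma^i_{lk} = \Gamma^i_{kl}$ of the Levi-Civita connection. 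This is the main obstacle, since it is precisely here that one must use \emph{both} the torsion-freeness of $\nabla$ \emph{and} the $J$-holomorphicity of $\varphi$ in tandem; without the holomorphicity relations the substitutions relating $\partial_s\varphi$ and $\partial_t\varphi$ would be unavailable and the Christoffel terms would not cancel. Once this is established, the identity $2\,(D_\varphi^{\bC}\xi)(\partial_s)^i = \dd[\xi^i]{s} + J^i_{\phantom{i}k}\dd[\xi^k]{t} + \xi(J^i_{\phantom{i}k})\dd[\varphi^k]{t}$ holds, and the claimed equivalence follows at once.
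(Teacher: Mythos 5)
Your proposal is correct and follows essentially the same route as the paper: both reduce the statement to the coordinate expression of the $ds$-component (equivalently, the value on $\partial_s$) of $D_\varphi\xi$, using the $(0,1)$-form property, the holomorphicity relation $\partial_t\varphi = J\partial_s\varphi$, and the anticommutation $\xi(J)\,J = -J\,\xi(J)$ obtained from $J^2=-\id$. The only difference is one of packaging: the paper imports from Ref.~\onlinecite{MS04} the Christoffel-free coordinate formula $D_\varphi\xi = \dbar_J\xi - \tfrac{1}{2}(J\partial_\xi J)(\varphi)\partial_J\varphi$ as its starting point, whereas you verify by hand (correctly, via torsion-freeness together with holomorphicity) that the four Christoffel contributions cancel in pairs, making your argument self-contained where the paper cites a known calculation.
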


\begin{proof}
A short calculation in coordinates (as in Ref.~\onlinecite{MS04}) yields
\begin{align*}
D_{\varphi}\xi&=\dbar_J\xi-\frac{1}{2}(J\partial_{\xi}J)(\varphi)\partial_J\varphi\\
\dbar_J\xi&=\frac{1}{2}\dd{x^i}\left(\dd[\xi^i]{s}+J^i_{\phantom{i}k}\dd[\xi^k]{t}\right)ds
+\frac{1}{2}\dd{x^i}\left(\dd[\xi^i]{t}-J^i_{\phantom{i}k}\dd[\xi^k]{s}\right)dt
\end{align*}
Moreover, using $d\varphi=\partial_J\varphi+\dbar_J\varphi=\partial_J\varphi$, we calculate
\begin{align*}
-\frac{1}{2}(J\partial_{\xi}J)(\varphi)\partial_J\varphi
&=\frac{1}{2}\partial_{\xi}J\circ J\circ d\varphi
=\frac{1}{2}\xi(J)\circ d\varphi\circ j\\
&=\frac{1}{2}\xi(J)\circ\left(\dd[\varphi]{s}ds+\dd[\varphi]{t}dt\right)\circ j\\
&=\frac{1}{2}\xi(J)\circ\left(-\dd[\varphi]{s}dt+\dd[\varphi]{t}ds\right)\\
&=\frac{1}{2}\dd{x^i}\cdot\xi(J^i_{\phantom{i}k})\left(\dd[\varphi^k]{t}ds-\dd[\varphi^k]{s}dt\right)
\end{align*}
and, therefore,
\begin{align*}
D_{\varphi}\xi&=\frac{1}{2}\dd{x^i}\left(\dd[\xi^i]{s}+J^i_{\phantom{i}k}\dd[\xi^k]{t}
+\xi(J^i_{\phantom{i}k})\cdot\dd[\varphi^k]{t}\right)ds\\
&\qquad+\frac{1}{2}\dd{x^i}\left(\dd[\xi^i]{t}-J^i_{\phantom{i}k}\dd[\xi^k]{s}
-\xi(J^i_{\phantom{i}k})\cdot\dd[\varphi^k]{s}\right)dt
\end{align*}
where the sum in the second pair of parentheses equals $J$ times the first sum.
\end{proof}

We close this section with the important result that our two definitions of holomorphic
supercurves are equivalent.

\begin{Prp}
\label{prpHolomorphicSupercurveLocal}
Let $\Phi_F:\Sigma_L\rightarrow X$ be a map with flesh, identified with a tuple
$(\varphi,\psi_1,\psi_2,\xi)$ as in Prp. \ref{prpMorphismStructure}.
Then $\Phi_F$ is a holomorphic supercurve in the sense of Def. \ref{defHolomorphicSupercurve}
if and only if $(\varphi,\psi_1,\psi_2,\xi)$
is a holomorphic supercurve in the sense of Def. \ref{defHolomorphicSupercurveAdhoc}.
\end{Prp}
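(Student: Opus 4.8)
The plan is to reduce the claimed equivalence to a bookkeeping exercise that assembles the three preparatory lemmas. By Lem.~\ref{lemHolomorphicSupercurveLocal}, the condition $\dbar_J\Phi_F=0$ is, in holomorphic split supercoordinates $(s,t,\theta)$, equivalent to the five local equations listed there. I would work chart by chart, since both notions are local and invariant under changes of holomorphic coordinate and of the holomorphic section $\theta$ (the latter invariance having been recorded after Def.~\ref{defHolomorphicSupercurveAdhoc}), so that a chartwise equivalence yields the global one. My strategy is then to match each of the five local equations, one at a time, with a condition of Def.~\ref{defHolomorphicSupercurveAdhoc}.

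First I would observe that the first local equation of Lem.~\ref{lemHolomorphicSupercurveLocal} is literally the coordinate form of $\dbar_J\varphi=0$. Next, the second equation $\psi_\theta^i+i\,J^i_{\phantom{i}k}\psi_\theta^k=0$ says precisely that $J\psi_{j\theta}=i\,\psi_{j\theta}$, i.e. that $\psi_{j\theta}$ lies in $\varphi^*T^{1,0}X$, equivalently $\psi_j\in\Gamma(L\otimes_J\varphi^*TX)$. This is the point at which the a priori larger space $L\otimes_{\bC}\varphi^*T^{\bC}X$ supplied by Prp.~\ref{prpMorphismStructure} is cut down to the subbundle implicitly used in Def.~\ref{defHolomorphicSupercurveAdhoc}, and I would make this compatibility explicit rather than tacit.

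Assuming this second equation, Lem.~\ref{lemNijenhuisPsi} identifies the third local equation with $-\tfrac{i}{2}\eta^1\eta^2 N_J^{\bC}(\psi_{1\theta},\psi_{2\theta})=0$, so extracting the coefficient of $\eta^1\eta^2$ turns it into the Nijenhuis condition $N_J^{\bC}(\psi_{1\theta},\psi_{2\theta})=0$. Finally, assuming $\dbar_J\varphi=0$, Lem.~\ref{lemDXiLocal} rewrites the fifth equation as $D_{\varphi}^{\bC}\xi=0$; applying the same lemma with $\xi$ replaced by $\psi_{j\theta}$ and splitting the fourth equation (which carries $\psi_\theta=\eta^1\psi_{1\theta}+\eta^2\psi_{2\theta}$) into its $\eta^1$- and $\eta^2$-components rewrites it as $D_{\varphi}^{\bC}\psi_{1\theta}=0$ and $D_{\varphi}^{\bC}\psi_{2\theta}=0$. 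Collecting these five matches gives both implications of the biconditional.

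The computations themselves are routine; the one place that needs care is the logical order of the standing hypotheses, and this I would flag as the main obstacle. The translations furnished by Lem.~\ref{lemNijenhuisPsi} and Lem.~\ref{lemDXiLocal} are valid only once the second and first local equations, respectively, are known to hold. I would therefore record that in the forward direction all five equations hold simultaneously, so these hypotheses are available, while in the backward direction the membership $\psi_j\in\Gamma(L\otimes_J\varphi^*TX)$ built into Def.~\ref{defHolomorphicSupercurveAdhoc} supplies the second equation and the assumed $\dbar_J\varphi=0$ supplies the first, again making the lemmas applicable. With the hypotheses thus accounted for in each direction, the equivalence follows without further calculation.
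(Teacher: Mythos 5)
Your proposal is correct and follows essentially the same route as the paper, whose proof simply assembles Lemmas \ref{lemHolomorphicSupercurveLocal}, \ref{lemNijenhuisPsi} and \ref{lemDXiLocal} (the last applied verbatim with $\xi$ replaced by $\psi_{j\theta}$). Your added care about the order of standing hypotheses and about the second local equation cutting $L\otimes_{\bC}\varphi^*T^{\bC}X$ down to $L\otimes_J\varphi^*TX$ is a faithful elaboration of what the paper leaves implicit, not a different argument.
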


\begin{proof}
This follows immediately from lemmas \ref{lemHolomorphicSupercurveLocal},
\ref{lemNijenhuisPsi} and \ref{lemDXiLocal}, the latter of which holds
verbatim with $\xi$ replaced by $\psi_{j\theta}$.
\end{proof}

\section{The Super Action Identity}
\label{secSuperActionIdentity}

In this section, we prove Thm. \ref{thmSupersymmetricAction} by means of a generalisation
of the action identity (\ref{eqnEnergyIdentity}).
For that purpose, we will first express the action functional $\mA_1$ from Def.
\ref{defSupersymmetricAction} in terms of supergeometry.
Let $(X,g)$ be a Riemannian manifold and $\Sigma$ be a closed Riemann surface with a spin structure
and spinor bundles $S^+$ and $S=S^+\oplus S^-$. As in the previous section, we denote the corresponding
split supermanifolds by $\Sigma_{S^+}=(\Sigma,\mO_{S^+})$ and $\Sigma_S=(\Sigma,\mO_S)$,
respectively. There is a canonical morphism $I=(\id,\iota):\Sigma_S\rightarrow\Sigma_{S^+}$
of supermanifolds, consisting of the identity
map $\id:\Sigma\rightarrow\Sigma$ and the embedding $\iota:\mO_{S^+}\rightarrow\mO_S$.
By concatenation with $I$, we may consider every morphism (with flesh) $\Sigma_{S^+}\rightarrow X$
implicitly as a morphism (with flesh) $\Sigma_S\rightarrow X$.
The use of the additional odd dimension will become clear during the proof of Lem.
\ref{lemNonsuperAction} below. Let $z=s+it$ be holomorphic coordinates on $\Sigma$ and $\theta^{\pm}$
be as in Def. \ref{defSquareRootCoordinates}.
The tuple $(z,\theta^+,\theta^-)$ constitutes holomorphic split
supercoordinates for $\Sigma_S$, and we extend the almost complex structure $j$ from
(\ref{eqnAlmostComplexStructure}) (with $\theta=\theta^+$) by prescribing
$j\left(\dd{\theta^-}\right):=-i\cdot\dd{\theta^-}$.
Since the transition functions of $S^+$ and $S^-$ do not mix, well-definedness follows
analogous to Lem. \ref{lemAlmostComplexStructure}.
We introduce super vector fields
\begin{align}
\label{eqnSupervectorFields}
D_+:=\dd{\theta^+}+\theta^+\dd{z}\;,\qquad D_-:=\dd{\theta^-}+\theta^-\dd{\oz}
\end{align}
on $\Sigma_S$. Here, $D_+$ may be interpreted as (the local form of) the structure of a super Riemann surface
on $\Sigma$ (Ref.~\onlinecite{LR88}), and $D_-$ as its complex conjugate.
In the following, maps with flesh are always meant with respect to the superpoint with
$N=2$ odd dimensions.

\begin{Def}
\label{defSupersymmetricLagrangian}
Let $\Phi_F:\Sigma_{S^+}\rightarrow X$ be a map with flesh.
We define the (Graßmann valued) $2$-form
\begin{align*}
\mL(\Phi):=-i\,dz\wedge d\oz\,\dd{\theta^+}\dd{\theta^-}
\,\scal[g_{\Phi}]{d\Phi(D_+)}{d\Phi(D_-)}\in\Omega^2(\Sigma,\bC)\otimes_{\bC}\bigwedge\bC^2_{\mathrm{even}}
\end{align*}
where $g_{\Phi}$ denotes the pullback of $g$ under the corresponding (ordinary) morphism
$\Phi:\Sigma_S\times\bC^{0|2}\rightarrow X$ of supermanifolds as in Lem. \ref{lemSuperPullback}.
\end{Def}

By a simple calculation, $\mL$ is independent of the supercoordinates
of the type considered. It can be expressed in a global fashion, involving
the super integral form (Refs.~\onlinecite{DM99,Sha88}) induced by the semi-Riemannian
supermetric (Ref.~\onlinecite{Goe08}) which, in turn, is induced by $h$ and a skew-symmetric
bilinear form $\eta$ on $S$ that is suitable in the sense of
Ref.~\onlinecite{ACDS97} (consult also Refs.~\onlinecite{Har90,Var04} for a classification of such forms).
By the supermetric just mentioned, together with the metric $g$ on the target manifold $X$,
one obtains, by a supertrace construction, a superfunction on $\Sigma$ which, multiplied
with the super integral form, constitues $\mL$.
This is similar to the definition of $\abs{d\varphi}^2_{h,g}\dvol_{\Sigma}$
for a map $\varphi:\Sigma\rightarrow X$ (Refs.~\onlinecite{DF99b,Fre99}).

\begin{Lem}
\label{lemNonsuperAction}
Upon identifying $\Phi$ with a tuple $(\varphi,\psi_1,\psi_2,\xi)$ as in Prp. \ref{prpMorphismStructure},
the integral over $\mL$ coincides with the action funcional $\mA_1$ from Def. \ref{defSupersymmetricAction}:
\begin{align*}
\int_{\Sigma}\mL(\Phi)=\mA_1(\varphi,\psi_1,\psi_2,\xi)
\end{align*}
\end{Lem}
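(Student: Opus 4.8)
The plan is to pass to holomorphic split supercoordinates $(z,\theta^+,\theta^-)$, compute the right-hand integrand $\scal[g_{\Phi}]{d\Phi(D_+)}{d\Phi(D_-)}$ componentwise, apply the projection $\partial_{\theta^+}\partial_{\theta^-}$, and then recognise the bosonic, auxiliary and fermionic pieces of $\mA_1$. First I would make the component structure explicit: by Prp.~\ref{prpMorphismStructure} (with $L=S^+$) the sheaf morphism acts on a coordinate function $x^i$ of $X$ by $\phi(x^i)=\phi_0^i+\psi_\theta^i\theta^+$, where $\phi_0^i=\varphi^i+\eta^1\eta^2\xi^i$ and $\psi_\theta^i=\eta^1\psi_{1\theta^+}^i+\eta^2\psi_{2\theta^+}^i$, and crucially $\phi(x^i)$ carries no $\theta^-$ since $\Phi$ factors through $\Sigma_{S^+}$. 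Using $d\Phi(Y)(x^i)=Y(\phi(x^i))$ together with the definitions (\ref{eqnSupervectorFields}), I then obtain the half-index components $d\Phi(D_+)^i=-\psi_\theta^i+\theta^+\partial_z\phi_0^i$ and $d\Phi(D_-)^i=\theta^-\partial_{\oz}\phi(x^i)=\theta^-\partial_{\oz}\phi_0^i+\theta^-(\partial_{\oz}\psi_\theta^i)\theta^+$, the simplification in $D_-$ coming from $\partial_{\theta^-}\phi=0$. This isolates the antiholomorphic derivative $\partial_{\oz}$ inside $d\Phi(D_-)$, which is exactly the derivative appearing in (\ref{eqnTwistedDiracOperatorLocal}), and explains the role of the auxiliary coordinate $\theta^-$.

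Next I would assemble the pairing. By Lem.~\ref{lemSuperPullback} one has $\scal[g_{\Phi}]{d\Phi(D_+)}{d\Phi(D_-)}=\phi(g_{ij})\,d\Phi(D_+)^i\,d\Phi(D_-)^j$, and I expand $\phi(g_{ij})=g_{ij}(\phi_0)+\partial_l g_{ij}(\phi_0)\,\psi_\theta^l\theta^+$ by a finite Taylor expansion (higher terms vanish by nilpotency). The projection $\partial_{\theta^+}\partial_{\theta^-}$ extracts the $\theta^+\theta^-$-coefficient and therefore kills every summand lacking both odd coordinates; since $\psi_\theta^i$ contains no $\theta^\pm$, a short bookkeeping leaves exactly three contributions: (a) $g_{ij}(\phi_0)\,\partial_z\phi_0^i\,\partial_{\oz}\phi_0^j$, with the two $\theta$'s supplied by $D_+$ and $D_-$; (b) a term in which the single $\theta^+$ comes from $\phi(g_{ij})$, contracted against $\psi_\theta$ and $\partial_{\oz}\phi_0$; and (c) $g_{ij}(\varphi)\,\psi_\theta^i\,\partial_{\oz}\psi_\theta^j$, with the $\theta^+$ supplied by the second term of $d\Phi(D_-)^j$. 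Multiplying by the prefactor $-i\,dz\wedge d\oz$ and using $dz\wedge d\oz=-2i\,ds\wedge dt$ fixes all normalisations.

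Contribution (a) yields $i\int_\Sigma dz\wedge d\oz\,\phi_0(g_{ij})\,\partial_z\phi_0^i\,\partial_{\oz}\phi_0^j$. Expanding $\phi_0$ in $\eta$, the part of $\eta$-degree zero reduces, via $g_{ij}\,\partial_z\varphi^i\partial_{\oz}\varphi^j=\tfrac{\lambda}{4}\abs{d\varphi}^2$ (the antisymmetric imaginary part dropping out by symmetry of $g$) and $\dvol_\Sigma=\lambda\,ds\wedge dt$, to $\tfrac{1}{2}\int_\Sigma\dvol_\Sigma\,\abs{d\varphi}^2$. The $\eta^1\eta^2$-part, after integrating by parts over the closed surface $\Sigma$ to move the derivatives off $\xi$ and inserting the local tension formula (\ref{eqnLocalTension}), gives $-\eta^1\eta^2\int_\Sigma\dvol_\Sigma\,\scal[g]{\xi}{\tau(\varphi)}$; together these reproduce the first two terms of $\mA_1$ (and incidentally establish the rewriting quoted in the remark after Def.~\ref{defSupersymmetricAction}).

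Finally, contributions (b) and (c) must combine into the fermionic term. Here (c) supplies the naive expression $g_{ij}(\varphi)\,\psi_\theta^i\,\partial_{\oz}\psi_\theta^j$, while (b) supplies metric-derivative terms $\partial_l g_{ij}\,\psi_\theta^l\psi_\theta^i\,\partial_{\oz}\varphi^j$; using the Levi-Civita Christoffel identity and the antisymmetry of the odd quantities $\psi_\theta^a$, these reassemble into the covariant expression $g_{ij}\,\psi_\theta^i\,\nabla_{\oz}\psi_\theta^j=\scal[g]{\psi_+}{\nabla_{\oz}\psi_+}$, where $\psi_+=\psi_\theta$ because $\theta^+$ is the genuine spinor square root. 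Comparing with the reduction $\int_\Sigma\dvol_\Sigma\,\scal[B]{\psi}{\Dirac\psi}=\int_\Sigma dz\wedge d\oz\,\scal[g]{\psi_+}{\nabla_{\oz}\psi_+}$ established in the proof of Prp.~\ref{prpEulerLagrange} then identifies this piece with $-i\int_\Sigma\dvol_\Sigma\,\scal[B]{\psi}{\Dirac\psi}$, the third term of $\mA_1$. I expect this recombination to be the main obstacle: the naive computation produces uncovariant $\partial_{\oz}$-derivatives and loose $\partial_l g_{ij}$ terms, and the heart of the proof is to track the Koszul signs from $\theta^\pm$ and $\eta^{1,2}$ correctly and to verify that these pieces glue into the Levi-Civita covariant derivative for the fermions and, after integration by parts, into the covariant tension field for $\xi$.
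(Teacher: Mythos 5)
Your proposal is correct and follows essentially the same route as the paper's own proof: expand $\phi(x^i)$ and $\phi(g_{ij})$ in the odd coordinate $\theta^+$, compute $d\Phi(D_\pm)$ componentwise, extract the $\theta^+\theta^-$-coefficient (yielding exactly the three contributions you list), recombine the metric-derivative and naive-derivative fermion terms into $\scal[g]{\psi_+}{\nabla_{\dd{\oz}}\psi_+}$ via the Christoffel identity and oddness, and handle the $\eta^1\eta^2$-part of the bosonic term by Stokes' theorem on the closed surface together with the local tension formula (\ref{eqnLocalTension}). The paper likewise invokes the reduction $\int_{\Sigma}\dvol_{\Sigma}\,\scal[B]{\psi}{\Dirac\psi}=\int_{\Sigma}dz\wedge d\oz\,\scal[g]{\psi_+}{\nabla_{\dd{\oz}}\psi_+}$ established in the proof of Prp.~\ref{prpEulerLagrange}, so your argument differs only in notation (your $\psi^i_\theta$ versus the paper's $\psi^i_+$, consistently signed) and in making the one-form used for the integration by parts implicit.
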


\begin{proof}
Let $\{x^j\}$ be local coordinates on $X$. Using the decomposition formula (\ref{eqnMorphismStructure})
for $\phi$, we calculate
\begin{align*}
&\dd{\theta^+}\dd{\theta^-}
\,\scal[g_{\Phi}]{d\Phi\left(\dd{\theta^+}+\theta^+\dd{z}\right)}{d\Phi\left(\dd{\theta^-}+\theta^-\dd{\oz}\right)}\\
&\;\;=\dd{\theta^+}\dd{\theta^-}
\,\scal[g_{\Phi}]{\left(\phi\circ\dd{x^i}\right)\cdot\left(\psi_+^i+\theta^+\dd[\phi_0^i]{z}\right)}
{\left(\phi\circ\dd{x^j}\right)\cdot\left(\theta^-\dd[\phi_0^j]{\oz}-\theta^+\theta^-\dd[\psi_+^j]{\oz}\right)}\\
&\;\;=\dd{\theta^+}\dd{\theta^-}\,\left((\phi\circ g_{ij})\cdot\left(\psi_+^i+\theta^+\dd[\phi_0^i]{z}\right)\cdot
\left(\theta^-\dd[\phi_0^j]{\oz}-\theta^+\theta^-\dd[\psi_+^j]{\oz}\right)\right)\\
&\;\;=\dd{\theta^+}\dd{\theta^-}\,\left(\left(\phi_0(g_{ij})+\theta^+\psi_+(g_{ij})\right)
\cdot\left(\psi_+^i+\theta^+\dd[\phi_0^i]{z}\right)\cdot
\left(\theta^-\dd[\phi_0^j]{\oz}-\theta^+\theta^-\dd[\psi_+^j]{\oz}\right)\right)\\
&\;\;=\phi_0(g_{ij})\psi_+^i\dd[\psi_+^j]{\oz}
-\phi_0(g_{ij})\dd[\phi_0^i]{z}\dd[\phi_0^j]{\oz}-\psi_+(g_{ij})\psi_+^i\dd[\phi_0^j]{\oz}\\
&\;\;=-\phi_0(g_{ij})\dd[\phi_0^i]{z}\dd[\phi_0^j]{\oz}+\psi_+^i\dd[\psi_+^j]{\oz}g_{ij}
+\psi_+^i\psi_+^m\dd[\varphi^j]{\oz}\dd[g_{ij}]{x^m}
\end{align*}
where, in the last equation, we used that the $\eta^1\eta^2$-term in $\phi_0$ cancels with $\psi_+$,
abbreviating $\varphi^*g_{ij}$ by $g_{ij}$.
By the symmetry properties of the metric and the product of two odd quantities, we obtain
$\psi_+^i\psi_+^m\partial_lg_{mi}=0$ and $\psi_+^i\psi_+^m\partial_mg_{li}=-\psi_+^i\psi_+^m\partial_ig_{lm}$
and, therefore,
\begin{align*}
\scal[g]{\psi_+}{\nabla_{\dd{\oz}}\psi_+}&=g_{ij}\psi_+^i\left(\nabla_{\dd{\oz}}\psi_+\right)^j
=g_{ij}\psi_+^i\left(\dd[\psi_+^j]{\oz}+\psi_+^m\dd[\varphi^l]{\oz}\Gamma^j_{lm}\right)\\
&=\psi_+^i\dd[\psi_+^j]{\oz}g_{ij}+\psi_+^i\psi_+^m\dd[\varphi^l]{\oz}g_{ij}
\left(\frac{1}{2}g^{ja}\left(\partial_lg_{ma}+\partial_mg_{la}-\partial_ag_{lm}\right)\right)\\
&=\psi_+^i\dd[\psi_+^j]{\oz}g_{ij}+\frac{1}{2}\psi_+^i\psi_+^m\dd[\varphi^l]{\oz}
(\partial_lg_{mi}+\partial_mg_{li}-\partial_ig_{lm})\\
&=\psi_+^i\dd[\psi_+^j]{\oz}g_{ij}+\psi_+^i\psi_+^m\dd[\varphi^j]{\oz}\dd[g_{ij}]{x^m}
\end{align*}
Together with the first calculation, we thus yield
\begin{align*}
\mL(\Phi)=i\,dz\wedge d\oz\left(\phi_0(g_{ij})\dd[\phi_0^i]{z}\dd[\phi_0^j]{\oz}
-\scal[g]{\psi_+}{\nabla_{\dd{\oz}}\psi_+}\right)
\end{align*}
We further calculate
\begin{align*}
\phi_0(g_{ij})\dd[\phi_0^i]{z}\dd[\phi_0^j]{\oz}
&=(\varphi+\eta^1\eta^2\xi)(g_{ij})\dd{z}(\varphi+\eta^1\eta^2\xi)^i\dd{\oz}(\varphi+\eta^1\eta^2\xi)^j\\
&=g_{ij}\dd[\varphi^i]{z}\dd[\varphi^j]{\oz}+\eta^1\eta^2\left(\xi(g_{ij})\dd[\varphi^i]{z}\dd[\varphi^j]{\oz}
+g_{ij}\dd[\xi^i]{z}\dd[\varphi^j]{\oz}+g_{ij}\dd[\varphi^i]{z}\dd[\xi^j]{\oz}\right)
\end{align*}
and obtain the classical energy density
$i\,dz\wedge d\oz\,g_{ij}\dd[\varphi^i]{z}\dd[\varphi^j]{\oz}=\frac{1}{2}\dvol_{\Sigma}\abs{d\varphi}^2$
for the zero order term in $\mL(\Phi)$.
To proceed, consider the (global) $1$-form $\Lambda:=\scal[g]{\dd[\varphi]{\oz}}{\xi}d\oz-\scal[g]{\dd[\varphi]{z}}{\xi}dz$
on $\Sigma$. Its differential can be expressed as follows.
\begin{align*}
d\Lambda&=\dd{z}\left(\scal[g]{\dd[\varphi]{\oz}}{\xi}\right)dz\wedge d\oz+
\dd{\oz}\left(\scal[g]{\dd[\varphi]{z}}{\xi}\right)dz\wedge d\oz\\
&=\left(\dd[g_{ij}]{z}\dd[\varphi^i]{\oz}\xi^j+g_{ij}\frac{\partial^2\varphi^i}{\partial z\partial\oz}\xi^j
+g_{ij}\dd[\varphi^i]{\oz}\dd[\xi^j]{z}\right.\\
&\qquad\left.+\dd[g_{ij}]{\oz}\dd[\varphi^i]{z}\xi^j+g_{ij}\frac{\partial^2\varphi^i}{\partial\oz\partial z}\xi^j
+g_{ij}\dd[\varphi^i]{z}\dd[\xi^j]{\oz}\right)dz\wedge d\oz\\
&=\left(g_{ij}\dd[\xi^i]{z}\dd[\varphi^j]{\oz}+g_{ij}\dd[\varphi^i]{z}\dd[\xi^j]{\oz}
+\xi^j\left(\dd[g_{ij}]{\oz}\dd[\varphi^i]{z}+\dd[g_{ij}]{z}\dd[\varphi^i]{\oz}
+2g_{ij}\frac{\partial^2\varphi^i}{\partial z\partial\oz}\right)\right)dz\wedge d\oz
\end{align*}
Since $\Sigma$ is, by assumption, closed, we obtain $\int_{\Sigma}d\Lambda=0$ and hence
\begin{align*}
&\int dz\wedge d\oz\left(\xi(g_{ij})\dd[\varphi^i]{z}\dd[\varphi^j]{\oz}
+g_{ij}\dd[\xi^i]{z}\dd[\varphi^j]{\oz}+g_{ij}\dd[\varphi^i]{z}\dd[\xi^j]{\oz}\right)\\
&\qquad=\int dz\wedge d\oz\,\xi^k\left(\partial_k(g_{ij})\dd[\varphi^i]{z}\dd[\varphi^j]{\oz}
-\dd[g_{ik}]{\oz}\dd[\varphi^i]{z}-\dd[g_{ik}]{z}\dd[\varphi^i]{\oz}-2g_{ik}\frac{\partial^2\varphi^i}{\partial z\partial\oz}\right)\\
&\qquad=\int dz\wedge d\oz\,\xi^k\left(\partial_k(g_{ij})\dd[\varphi^i]{z}\dd[\varphi^j]{\oz}
-\dd[\varphi^m]{\oz}\partial_m(g_{ik})\dd[\varphi^i]{z}-\dd[\varphi^m]{z}\partial_m(g_{ik})\dd[\varphi^i]{\oz}\right.\\
&\qquad\qquad\qquad\qquad\qquad\left.-2g_{ik}\frac{\partial^2\varphi^i}{\partial z\partial\oz}\right)\\
&\qquad=\int dz\wedge d\oz\,\xi^k\left(\dd[\varphi^i]{z}\dd[\varphi^j]{\oz}
(\partial_k(g_{ij})-\partial_j(g_{ik})-\partial_i(g_{jk}))-2g_{ik}\frac{\partial^2\varphi^i}{\partial z\partial\oz}\right)\\
&\qquad=\int dz\wedge d\oz\,\xi^k\left(\dd[\varphi^i]{z}\dd[\varphi^j]{\oz}(-2g_{km}\Gamma^m_{ij})
-2g_{ik}\frac{\partial^2\varphi^i}{\partial z\partial\oz}\right)\\
&\qquad=-2\int dz\wedge d\oz\,g_{kl}\xi^k\left(\frac{\partial^2\varphi^l}{\partial z\partial\oz}
+\Gamma^l_{ij}\dd[\varphi^i]{z}\dd[\varphi^j]{\oz}\right)\\
&\qquad=-\frac{\lambda}{2}\int dz\wedge d\oz\,\scal[g]{\xi}{\tau(\varphi)}\\
&\qquad=i\int_{\Sigma}\dvol_{\Sigma}\,\scal[g]{\xi}{\tau(\varphi)}
\end{align*}
where we used the local formula for $\tau(\varphi)$ from (\ref{eqnLocalTension}).
This transformation concludes the proof of the lemma.
\end{proof}

Now let $(X,\omega)$ be a symplectic manifold, $J$ be an $\omega$-compatible almost complex structure
and $g$ be the induced Riemann metric.
For the next observation, note that the pullback tensors $\omega_{\Phi}$, $g_{\Phi}$
and $J_{\Phi}$ are, by definition, related to each other in the analogous way.
In particular, $J_{\Phi}$ is $\omega_{\Phi}$-compatible and $g_{\Phi}$ is
$J_{\Phi}$-orthogonal. As usual, we see that both terms occurring are globally well-defined.

\begin{Lem}
\label{lemLagrangianDecomposition}
If $(X,\omega)$ is a symplectic manifold with $\omega$-compatible $J$ and induced $g$,
then $\mL(\Phi)$ permits the following sum decomposition.
\begin{align*}
\mL(\Phi)&=-i\,dz\wedge d\oz\,\dd{\theta^+}\dd{\theta^-}\,
\left(2\scal[g_{\Phi}]{\dbar_J\Phi(D_+)}{\dbar_J\Phi(D_-)}
-i\scal[\omega_{\Phi}]{d\Phi(D_+)}{d\Phi(D_-)}\right)\\
&=:\mL_{\dbar_J}(\Phi)+\mL_{\omega}(\Phi)
\end{align*}
\end{Lem}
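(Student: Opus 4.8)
The plan is to reduce the claimed identity to a single piece of complex bilinear algebra on the fibres, exactly mirroring the classical energy identity (\ref{eqnEnergyIdentity}), and then to observe that every algebraic relation used survives the pullback of Lemma \ref{lemSuperPullback}. Writing $u:=d\Phi(D_+)$ and $w:=d\Phi(D_-)$, it suffices to establish the pointwise relation
\begin{align*}
\scal[g_{\Phi}]{u}{w}=2\scal[g_{\Phi}]{\dbar_J\Phi(D_+)}{\dbar_J\Phi(D_-)}-i\scal[\omega_{\Phi}]{u}{w}
\end{align*}
after which multiplication by $-i\,dz\wedge d\oz\,\dd{\theta^+}\dd{\theta^-}$ produces the asserted decomposition of $\mL(\Phi)$ into $\mL_{\dbar_J}(\Phi)$ and $\mL_{\omega}(\Phi)$ verbatim.

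First I would compute how $j$ acts on the super vector fields $D_{\pm}$ from (\ref{eqnSupervectorFields}). Using (\ref{eqnAlmostComplexStructure}) together with the extension $j(\dd{\theta^-})=-i\,\dd{\theta^-}$ and the fact that $j(\dd{z})=i\,\dd{z}$, $j(\dd{\oz})=-i\,\dd{\oz}$, one finds $j(D_+)=i\,D_+$ and $j(D_-)=-i\,D_-$; in other words $D_+$ and $D_-$ are the super analogues of a $(1,0)$- and a $(0,1)$-vector field. Since $d\Phi$ is $\bC$-linear, Def. \ref{defHolomorphicFleshMap} then gives $\dbar_J\Phi(D_{\pm})=\tfrac12(\id\pm i\,J_{\Phi})\,d\Phi(D_{\pm})$.

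Next I would expand $\scal[g_{\Phi}]{\dbar_J\Phi(D_+)}{\dbar_J\Phi(D_-)}=\tfrac14\scal[g_{\Phi}]{(\id+iJ_{\Phi})u}{(\id-iJ_{\Phi})w}$ by $\bC$-bilinearity. The four resulting terms are controlled by two tensorial identities which, by the remark preceding the lemma, hold for the pulled-back tensors just as they do on $X$: the $J_{\Phi}$-invariance $\scal[g_{\Phi}]{J_{\Phi}\cdot}{J_{\Phi}\cdot}=\scal[g_{\Phi}]{\cdot}{\cdot}$ and the compatibility relation $\scal[g_{\Phi}]{\cdot}{J_{\Phi}\cdot}=-\scal[\omega_{\Phi}]{\cdot}{\cdot}$, itself deduced from $\scal[g_{\Phi}]{\cdot}{\cdot}=\scal[\omega_{\Phi}]{\cdot}{J_{\Phi}\cdot}$ and $J_{\Phi}^2=-\id$. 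Combining these collapses the expansion to $\tfrac12\scal[g_{\Phi}]{u}{w}+\tfrac{i}{2}\scal[\omega_{\Phi}]{u}{w}$, which is precisely the required relation after multiplying by $2$ and subtracting $i\,\scal[\omega_{\Phi}]{u}{w}$.

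The main point demanding care -- rather than a genuine obstacle -- is that $D_{\pm}$ are \emph{odd} super vector fields, so $u$ and $w$ are odd and $g_{\Phi}$ is antisymmetric on them. I would therefore never invoke symmetry of $g_{\Phi}$: the cross term $\scal[g_{\Phi}]{J_{\Phi}u}{w}$ is to be rewritten as $-\scal[g_{\Phi}]{u}{J_{\Phi}w}=\scal[\omega_{\Phi}]{u}{w}$ purely via $J_{\Phi}$-invariance and $J_{\Phi}^2=-\id$, not by interchanging the two slots. Because the scalar $i$ and the endomorphism $J_{\Phi}$ are both even, no Koszul signs arise when pulling them out of either slot, so the expansion runs exactly as in the ungraded case and the odd parity of $u,w$ is harmless.
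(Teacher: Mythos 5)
Your proposal is correct and takes essentially the same approach as the paper: both proofs rest on $j(D_+)=iD_+$, $j(D_-)=-iD_-$ together with the invariance and compatibility relations inherited by the pulled-back tensors $g_{\Phi}$, $\omega_{\Phi}$, $J_{\Phi}$, and both reduce the lemma to the same four-term expansion of $\scal[g_{\Phi}]{\dbar_J\Phi(D_+)}{\dbar_J\Phi(D_-)}$. The only difference is bookkeeping --- the paper expands $4\scal[g_{\Phi}]{\dbar_J\Phi(D_+)}{\dbar_J\Phi(D_-)}$ and solves the resulting self-referential identity of the form $L=4X-L-2iY$ for $L=\scal[g_{\Phi}]{d\Phi(D_+)}{d\Phi(D_-)}$, whereas you collapse the expansion directly to $\tfrac12 L+\tfrac{i}{2}Y$ --- and your explicit care with the odd parity of $d\Phi(D_{\pm})$, rewriting the cross term without interchanging slots, is a point the paper leaves implicit.
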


\begin{proof}
The decomposition is shown by the following straightforward calculation, using $j(D_+)=iD_+$ and $j(D_-)=-iD_-$.
\begin{align*}
L:&=\scal[g_{\Phi}]{d\Phi\circ D_+}{d\Phi\circ D_-}\\
&=\scal[g_{\Phi}]{(d\Phi+J_{\Phi}\circ d\Phi\circ j)\circ D_+}{(d\Phi+J_{\Phi}\circ d\Phi\circ j)\circ D_-}\\
&\qquad-\scal[g_{\Phi}]{d\Phi\circ D_+}{J_{\Phi}\circ d\Phi\circ j\circ D_-}
-\scal[g_{\Phi}]{J_{\Phi}\circ d\Phi\circ j\circ D_+}{d\Phi\circ D_-}\\
&\qquad-\scal[g_{\Phi}]{J_{\Phi}\circ d\Phi\circ j\circ D_+}{J_{\Phi}\circ d\Phi\circ j\circ D_-}\\
&=4\cdot\scal[g_{\Phi}]{\dbar_J\Phi\circ D_+}{\dbar_J\Phi\circ D_-}
+i\cdot\scal[g_{\Phi}]{d\Phi\circ D_+}{J_{\Phi}\circ d\Phi\circ D_-}\\
&\qquad-i\cdot\scal[g_{\Phi}]{J\circ d\Phi\circ D_+}{d\Phi\circ D_-}
-\scal[g_{\Phi}]{J_{\Phi}\circ d\Phi\circ D_+}{J_{\Phi}\circ d\Phi\circ D_-}\\
&=4\cdot\scal[g_{\Phi}]{\dbar_J\Phi\circ D_+}{\dbar_J\Phi\circ D_-}-L
-2i\cdot\scal[\omega_{\Phi}]{d\Phi\circ D_+}{d\Phi\circ D_-}
\end{align*}
\end{proof}

\begin{Prp}[Super Action Identity]
\label{prpSuperActionIdentity}
If $(X,\omega)$ is a symplectic manifold with $\omega$-compatible $J$ and induced $g$,
then the action functional permits the following sum decomposition.
\begin{align*}
\int_{\Sigma}\mL(\Phi)=\int_{\Sigma}\varphi^*\omega-2i\int_{\Sigma}dz\wedge d\oz\,
\dd{\theta^+}\dd{\theta^-}\,\scal[g_{\Phi}]{\dbar_J\Phi(D_+)}{\dbar_J\Phi(D_-)}
\end{align*}
\end{Prp}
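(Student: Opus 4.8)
The plan is to deduce the identity directly from the decomposition $\mL(\Phi)=\mL_{\dbar_J}(\Phi)+\mL_{\omega}(\Phi)$ of Lem.~\ref{lemLagrangianDecomposition}. Integrating that decomposition, the first summand is already in the desired shape, since
\begin{align*}
\int_{\Sigma}\mL_{\dbar_J}(\Phi)=-2i\int_{\Sigma}dz\wedge d\oz\,\dd{\theta^+}\dd{\theta^-}\,\scal[g_{\Phi}]{\dbar_J\Phi(D_+)}{\dbar_J\Phi(D_-)}
\end{align*}
is verbatim the Dirac term on the right-hand side. The whole proposition therefore collapses to the single assertion $\int_{\Sigma}\mL_{\omega}(\Phi)=\int_{\Sigma}\varphi^*\omega$, the genuine supergeometric analogue of the classical symplectic-area term.

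To evaluate $\int_{\Sigma}\mL_{\omega}(\Phi)$ I would rerun, essentially verbatim, the component-field computation from the proof of Lem.~\ref{lemNonsuperAction}, replacing the symmetric tensor $g$ by the skew form $\omega$ throughout: the expressions for $d\Phi(D_{\pm})$ are unchanged and only the pullback coefficients $\phi(\omega_{ij})$ enter. Extracting the top odd coefficient by the Berezin rule $\dd{\theta^+}\dd{\theta^-}(\theta^+\theta^-)=-1$ (as calibrated in Lem.~\ref{lemNonsuperAction}) yields, with $\phi_0=\varphi+\eta^1\eta^2\xi$ and $\psi_+=\eta^1\psi_{1\theta^+}+\eta^2\psi_{2\theta^+}$,
\begin{align*}
\mL_{\omega}(\Phi)=dz\wedge d\oz\left(\phi_0(\omega_{ij})\dd[\phi_0^i]{z}\dd[\phi_0^j]{\oz}-\omega_{ij}\psi_+^i\dd[\psi_+^j]{\oz}+\psi_+^m\partial_m\omega_{ij}\,\psi_+^i\dd[\varphi^j]{\oz}\right)
\end{align*}
I would then split the integrand by its degree in $\eta^1,\eta^2$. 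The $\eta$-free part arises only from the first summand and, by skew-symmetry $\omega_{ij}=-\omega_{ji}$, equals $\omega_{ij}(\varphi)\dd[\varphi^i]{z}\dd[\varphi^j]{\oz}\,dz\wedge d\oz=\varphi^*\omega$, supplying the right-hand side.

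It remains to see that the $\eta^1\eta^2$-part integrates to zero, and this is the substance of the proof. The $\xi$-contributions, i.e. the $\eta^1\eta^2$-coefficient of $\phi_0(\omega_{ij})\dd[\phi_0^i]{z}\dd[\phi_0^j]{\oz}$, assemble exactly into the first variation $\tfrac{d}{d\varepsilon}|_0(\varphi_{\varepsilon}^*\omega)$ along $\xi$; by Cartan's formula and $d\omega=0$ this is the exact form $d(\varphi^*\iota_{\xi}\omega)$, whose integral over the closed surface $\Sigma$ vanishes by Stokes. The $\psi$-contributions (the remaining two summands) do not vanish pointwise, since $\nabla\omega\neq 0$ off the K\"ahler case; instead, writing $\psi_a:=\psi_{a\theta^+}$ and using the commutativity of the even sections $\psi_a^i$ together with the cyclic identity $\partial_m\omega_{ij}+\partial_i\omega_{jm}+\partial_j\omega_{mi}=0$ (that is, $d\omega=0$), they collapse to $-\dd{\oz}(\omega_{ij}\psi_1^i\psi_2^j)$, i.e. to $d\Theta$ with $\Theta:=\omega(\psi_1,\psi_2)$. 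Crucially, $\Theta$ is a globally defined $(1,0)$-form on $\Sigma$: contracting $\psi_1,\psi_2\in\Gamma(S^+\otimes\varphi^*T^{\bC}X)$ by $\omega$ produces a section of $(S^+)^2\cong\Lambda^{1,0}\Sigma$. Hence $\int_{\Sigma}d\Theta=0$ by Stokes as well, so the entire $\eta^1\eta^2$-part integrates to zero.

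Putting these together gives $\int_{\Sigma}\mL_{\omega}(\Phi)=\int_{\Sigma}\varphi^*\omega$, and adding back $\int_{\Sigma}\mL_{\dbar_J}(\Phi)$ proves the proposition. The main obstacle I anticipate is precisely the $\psi$-part: it has no classical counterpart and, because $\nabla\omega$ need not vanish, cannot be dispatched by a naive cancellation; the argument hinges on recognising the global $(1,0)$-form $\Theta$ and on using the closedness of $\omega$ through the cyclic identity to exhibit these curvature-type terms as exact.
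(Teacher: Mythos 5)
Your proposal is correct and takes essentially the same route as the paper: reduce via Lem.~\ref{lemLagrangianDecomposition} to showing $\int_{\Sigma}\mL_{\omega}(\Phi)=\int_{\Sigma}\varphi^*\omega$, redo the component computation of Lem.~\ref{lemNonsuperAction} with $\omega$ in place of $g$, extract $\varphi^*\omega$ from the zero-order term, and kill both $\eta^1\eta^2$-contributions by Stokes using $d\omega=0$. Your two exact forms are precisely the paper's: $\varphi^*\iota_{\xi}\omega$ is the paper's $\Lambda$, and your $\Theta=\scal[\omega]{\psi_1}{\psi_2}$ is, up to the factor $2\eta^1\eta^2$, the paper's $\Omega=\scal[\omega]{\psi_+}{\psi_+}dz$, so the two arguments differ only in that yours is phrased invariantly (Cartan's formula, the identification $(S^+)^2\cong\Lambda^{1,0}\Sigma$) where the paper computes in coordinates.
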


\begin{proof}
By Lem. \ref{lemLagrangianDecomposition}, it remains to show
\begin{align}
\label{eqnLw}
\int_{\Sigma}\mL_{\omega}(\Phi)=\int_{\Sigma}\varphi^*\omega
\end{align}
By a calculation verbatim to the first part of the proof of Lem. \ref{lemNonsuperAction}
with $g$ and $g_{ij}$ replaced by $\omega$ and $\omega_{ij}$, respectively, we find:
\begin{align*}
\int_{\Sigma}\mL_{\omega}(\Phi)
=-\int_{\Sigma}dz\wedge d\oz\,\left(-\phi_0(\omega_{ij})\dd[\phi_0^i]{z}\dd[\phi_0^j]{\oz}
+\psi_+^i\dd[\psi_+^j]{\oz}\omega_{ij}
+\psi_+^i\psi_+^m\dd[\varphi^j]{\oz}\dd[\omega_{ij}]{x^m}\right)
\end{align*}
In the following, we use the closedness condition $d\omega=0$ in the form
$\dd[\omega_{ij}]{x^k}+\dd[\omega_{jk}]{x^i}+\dd[\omega_{ki}]{x^j}=0$ to yield
\begin{align*}
\dd{\oz}\scal[\omega]{\psi_+}{\psi_+}&=\dd{\oz}(\psi_+^i\psi_+^j\omega_{ij})
=\dd[\psi_+^i]{\oz}\psi_+^j\omega_{ij}+\psi_+^i\dd[\psi_+^j]{\oz}\omega_{ij}+\psi_+^i\psi_+^j\dd[\omega_{ij}]{\oz}\\
&=2\psi_+^i\dd[\psi_+^j]{\oz}\omega_{ij}+\psi_+^i\psi_+^j\dd[\omega_{ij}]{x^k}\dd[\varphi^k]{\oz}\\
&=2\psi_+^i\dd[\psi_+^j]{\oz}\omega_{ij}-\psi_+^i\psi_+^j\dd[\omega_{jk}]{x^i}\dd[\varphi^k]{\oz}
-\psi_+^i\psi_+^j\dd[\omega_{ki}]{x^j}\dd[\varphi^k]{\oz}\\
&=2\psi_+^i\dd[\psi_+^j]{\oz}\omega_{ij}+\psi_+^j\dd[\varphi^k]{\oz}\psi_+^i\dd[\omega_{jk}]{x^i}
+\psi_+^i\dd[\varphi^k]{\oz}\psi_+^j\dd[\omega_{ik}]{x^j}\\
&=2\psi_+^i\dd[\psi_+^j]{\oz}\omega_{ij}+2\psi_+^i\dd[\varphi^j]{\oz}\psi_+^k\dd[\omega_{ij}]{x^k}
\end{align*}
Hence, defining $\Omega:=\scal[\omega]{\psi_+}{\psi_+}dz$ we obtain
\begin{align*}
0=\int_{\Sigma}d\Omega&=\int_{\Sigma}d(\scal[\omega]{\psi_+}{\psi_+})\wedge dz
=-\int_{\Sigma}dz\wedge d\oz\,\dd{\oz}\scal[\omega]{\psi_+}{\psi_+}\\
&=-2\int_{\Sigma}dz\wedge d\oz\,\left(\psi_+^i\dd[\psi_+^j]{\oz}\omega_{ij}
+\psi_+^i\psi_+^m\dd[\varphi^j]{\oz}\dd[\omega_{ij}]{x^m}\right)
\end{align*}
and thus only the term
\begin{align*}
\int_{\Sigma}\mL_{\omega}(\Phi)
=\int_{\Sigma}dz\wedge d\oz\,\phi_0(\omega_{ij})\dd[\phi_0^i]{z}\dd[\phi_0^j]{\oz}
\end{align*}
remains. By a calculation verbatim as in the proof of Lem. \ref{lemNonsuperAction}
with $g$ and $g_{ij}$ replaced by $\omega$ and $\omega_{ij}$, respectively, we further calculate
\begin{align*}
\phi_0(\omega_{ij})\dd[\phi_0^i]{z}\dd[\phi_0^j]{\oz}
=\omega_{ij}\dd[\varphi^i]{z}\dd[\varphi^j]{\oz}
+\eta^1\eta^2\left(\xi(\omega_{ij})\dd[\varphi^i]{z}\dd[\varphi^j]{\oz}
+\omega_{ij}\dd[\xi^i]{z}\dd[\varphi^j]{\oz}+\omega_{ij}\dd[\varphi^i]{z}\dd[\xi^j]{\oz}\right)
\end{align*}
Now, setting $\Lambda:=-\scal[\omega]{\dd[\varphi]{z}}{\xi}dz-\scal[\omega]{\dd[\varphi]{\oz}}{\xi}d\oz$,
closedness of $\omega$ and $\Sigma$ implies
\begin{align*}
0=\int_{\Sigma}d\Lambda
=\int_{\Sigma}dz\wedge d\oz\,\left(\xi(\omega_{ij})\dd[\varphi^i]{z}\dd[\varphi^j]{\oz}
+\omega_{ij}\dd[\xi^i]{z}\dd[\varphi^j]{\oz}+\omega_{ij}\dd[\varphi^i]{z}\dd[\xi^j]{\oz}\right)
\end{align*}
Therefore, only the zero order term
\begin{align*}
\int_{\Sigma}\mL_{\omega}(\Phi)
=\int_{\Sigma}dz\wedge d\oz\,\omega_{ij}\dd[\varphi^i]{z}\dd[\varphi^j]{\oz}
=\int_{\Sigma}\varphi^*\omega
\end{align*}
remains, thus proving (\ref{eqnLw}).
\end{proof}

\begin{proof}[Proof of Thm. \ref{thmSupersymmetricAction}]
Lem. \ref{lemNonsuperAction} identifies the action functional $\mA_1$ with $\int_{\Sigma}\mL$
which, in turn, decomposes into two terms by the super action identity from
Prp. \ref{prpSuperActionIdentity}. Consider a variation $\Phi_{\varepsilon}$ of $\Phi$ such that
$\dbar_J\Phi_0=0$. We already know that the derivative of the first term
$\frac{d}{d\varepsilon}|_0\int_{\Sigma}\varphi_{\varepsilon}^*\omega=0$
vanishes since the integral is constant for $\varphi_{\varepsilon}$ within a fixed homology class.
Hence, only the second term remains, and we calculate
\begin{align*}
&\frac{i}{2}\frac{d}{d\varepsilon}|_0\int_{\Sigma}\mL(\Phi)\\
&\qquad=\frac{d}{d\varepsilon}|_0\int_{\Sigma}dz\wedge d\oz\,\dd{\theta^+}\dd{\theta^-}\,\left((\phi_{\varepsilon}\circ g_{ij})
\cdot(\dbar_J\Phi_{\varepsilon}(D_+))(x^i)\cdot(\dbar_J\Phi_{\varepsilon}(D_-))(x^j)\right)\\
&\qquad=\int_{\Sigma}dz\wedge d\oz\,\dd{\theta^+}\dd{\theta^-}\,\left(\frac{d}{d\varepsilon}|_0(\phi_{\varepsilon}\circ g_{ij})\cdot
(\dbar_J\Phi_0(D_+))(x^i)\cdot(\dbar_J\Phi_0(D_-))(x^j)\right.\\
&\qquad\qquad\qquad\qquad\qquad\qquad\quad
+(\phi_0\circ g_{ij})\cdot\frac{d}{d\varepsilon}|_0(\dbar_J\Phi_{\varepsilon}(D_+))(x^i)\cdot(\dbar_J\Phi_0(D_-))(x^j)\\
&\qquad\qquad\qquad\qquad\qquad\qquad\quad
\left.+(\phi_0\circ g_{ij})\cdot(\dbar_J\Phi_0(D_+))(x^i)\cdot\frac{d}{d\varepsilon}|_0(\dbar_J\Phi_{\varepsilon}(D_-))(x^j)\right)
\end{align*}
A more elegant variant of this calculation
involves the (super-)pullback of the Levi-Civita connection of $g$ which is still metric.
Either way, we see that each term in the integral vanishes by $\dbar_J\Phi_0=0$.
Therefore, holomorphic supercurves extremise $\mA_1$.
\end{proof}

\end{document}